\theoremstyle{plain}
\newtheorem{theorem}{Theorem}[section]
\newtheorem{lemma}[theorem]{Lemma}
\newtheorem{proposition}[theorem]{Proposition}
\newtheorem{corollary}[theorem]{Corollary}
\theoremstyle{definition}
\newtheorem{mydefn}[theorem]{Definition}
\newtheorem{example}[theorem]{Example}
\newtheorem{rmk}[theorem]{Remark}
\begin{document}

\author{Yan Zhang$^1$ \;
Zhaohui Zhu$^1$ \footnote{Corresponding author. Email: zhaohui@nuaa.edu.cn, zhaohui.nuaa@gmail.com } \;
Jinjin Zhang$^2$  \;
Yong Zhou$^1$ \\
1 College of Computer Science \\Nanjing University of Aeronautics and Astronautics\\
2 College of Information Science \\ Nanjing Audit University}

%
%
%
%

\title{Axiomatizing L\"{u}ttgen \& Vogler's ready simulation for finite processes in $\text{CLL}_R$ \footnote{This work received financial support of the National Natural Science of China (No. 60973045) and Fok Ying-Tung Education Foundation.}}

\date{\today}


\maketitle
%
\begin{abstract}

In the framework of logic labelled transition system, a variant of weak ready simulation has been presented by L{\"u}ttgen and Vogler.
It has been shown that such behavioural preorder is the largest precongruence w.r.t parallel and conjunction composition satisfying desired properties.
This paper offers a ground-complete axiomatization for this precongruence over processes containing no recursion in the calculus $\text{CLL}_R$.
Compared with usual inference system for process calculus, in addition to axioms about process operators, such system contains a number of axioms to characterize the interaction between process operators and logical operators.

\textbf{Keywords:} process calculus, weak ready simulation, logic labelled transition system, axiomatization, $\text{CLL}_R$
\end{abstract}

\section{Introduction}

    It is well-known that process algebra and temporal logic take different standpoint for looking at specifications and verifications of reactive and concurrent systems, and offer complementary advantages \cite{Peled01}.
    To take advantage of these two paradigms when designing systems, a few theories for heterogeneous specifications have been proposed, e.g., \cite{Cleaveland00, Cleaveland02, Graf86,Kurshan94,Luttgen07,Luttgen10,Luttgen11,Olderog}.
    Among them, L{\"u}ttgen and Vogler propose the notion of logic labelled transition system (Logic LTS or LLTS for short), which combines operational and logical styles of specification in one unified framework \cite{Luttgen07,Luttgen10,Luttgen11}.
    In particular, a variant of weak ready simulation has been presented in \cite{Luttgen10}, which is adopted to capture refinement relation between processes in the presence of logical operators.
    It has been shown that such simulation is the largest precongruence w.r.t parallel and conjunction satisfying desired properties \cite{Luttgen10}.
    Moreover, in addition to usual process operators (e.g., CSP-style parallel composition, hiding, etc) and logic operators (disjunction and conjunction), some standard temporal logic operators, such as \textquotedblleft always\textquotedblright and \textquotedblleft unless\textquotedblright, are also integrated into this framework \cite{Luttgen11}.
    In a word, L\"{u}ttgen and Vogler offer a framework which allows ones to freely mix operational and logic operators when designing systems.

   L\"{u}ttgen and Vogler's approach is entirely semantic, and doesn't provide any kind of syntactic calculus.
   Recently, the first three authors of this paper explore recursive operations over LLTS in a pure process-algebraic style.
   A LLTS-oriented process calculus $\text{CLL}_R$ is presented, and the uniqueness of solutions of equations in $\text{CLL}_R$ is established under a certain circumstance \cite{Zhang14}.

It is one of important topics in concurrency theory that giving axiomatization for behaviour relations.
For example, Milner gives an axiomatization for observational congruence in CCS \cite{Milner89b};
Baeten and Bravetti extend Milner's this work and provide an axiomatization over $\text{TCP+REC}_f$ \cite{Baeten08}, where $\text{TCP+REC}_f$ is a fragment of $\text{TCP+REC}$ which is a generic process language that embodies features of the classical process algebras CCS, CSP and ACP;
Lin offers complete inference systems for late and early weak bisimulation equivalences for processes without involving recursion in $\pi$-calculus \cite{Lin};
  Aceto et al. explore the axiomatization of weak simulation semantics systematically over BCCSP (without recursion) \cite{Aceto13}.
  Although L\"{u}ttgen and Vogler's original paper \cite{Luttgen10} mentions some sound laws, a complete set of axioms seems out of reach.
   As the main contribution of this paper we intend to provide a proof system for L\"{u}ttgen and Vogler's weak ready simulation  over  $\text{CLL}_R$-processes with finite behaviour, and demonstrate its soundness and ground-completeness.

    The rest of this paper is organized as follows.
    The notion of Logic LTS and the calculus $\text{CLL}_R$ are recalled in the next section.
    The inference system is presented in Section~3, along with the soundness proof.
    Section~4 demonstrates that the inference system is ground-complete for processes with finite behaviour.
    The paper is concluded with Section~5, where a brief discussion is given.

\section{Preliminaries}

The purpose of this section is to fix our notation and terminology, and to introduce some concepts that underlie our work in all other parts of the paper.

\subsection{Logic LTS and ready simulation}


Let $Act$ be the set of visible action names ranged over by $a$, $b$, etc., and let $Act_{\tau}$ denote $Act \cup \{\tau\}$ ranged over by $\alpha$ and $\beta$, where $\tau$ represents invisible actions.
A labelled transition system with predicate is a quadruple $(P,Act_{\tau},\rightarrow,F)$, where $P$ is a set of states, $\rightarrow \subseteq P\times Act_{\tau}\times P$ is the transition relation and $F\subseteq P$.

As usual, we write $p \stackrel{\alpha}{\rightarrow}$ (or, $p \not \stackrel{\alpha}{\rightarrow}$) if $\exists q\in P.p\stackrel{\alpha}{\rightarrow}q$ ($\nexists q\in P.p  \stackrel{\alpha}{\rightarrow}q$, resp.).
The ready set $\{\alpha \in Act_{\tau}|p \stackrel{\alpha}{\rightarrow}\}$ of a given state $p$ is denoted by $\mathcal{I}(p)$.
A state $p$ is stable if $p \not\stackrel{\tau}{\rightarrow}$.
A number of useful decorated transition relations are given:

$p \stackrel{\alpha}{\rightarrow}_F q$ iff $p \stackrel{\alpha}{\rightarrow} q$ and $p,q\notin F$;

$p \stackrel{\epsilon}{\Rightarrow}q$ iff $p (\stackrel{\tau}{\rightarrow})^* q$, where $(\stackrel{\tau}{\rightarrow})^* $ is the transitive and reflexive closure of $\stackrel{\tau}{\rightarrow}$;

$p \stackrel{\alpha}{\Rightarrow}q$ iff $\exists r,s\in P.p \stackrel{\epsilon}{\Rightarrow} r \stackrel{\alpha}{\rightarrow}s \stackrel{\epsilon}{\Rightarrow} q$;

$p \stackrel{\gamma}{\Rightarrow}|q$ iff $p \stackrel{\gamma}{\Rightarrow}q \not\stackrel{\tau}{\rightarrow}$ with $\gamma \in Act_{\tau}\cup \{\epsilon\}$;

$p\stackrel{\epsilon }{\Rightarrow }_Fq$ iff there exists a sequence of $\tau$-transitions from $p$ to $q$ such that all states along this sequence, including $p$ and $q$, are not in $F$; the decorated transition $p \stackrel{\alpha }{\Rightarrow }_Fq$ may be defined similarly;

$p \stackrel{\gamma}{\Rightarrow}_F|q$ iff $p \stackrel{\gamma}{\Rightarrow}_F q \not\stackrel{\tau}{\rightarrow}$ with $\gamma \in Act_{\tau} \cup \{\epsilon\}$.

 Notice that
the notation $p
\stackrel{\gamma }{\Longrightarrow }\mspace{-8mu}|q$ in \cite{Luttgen10,Luttgen11}
has the same meaning as $p\stackrel{\gamma }{
\Rightarrow }_F|q$ in this paper, while $p\stackrel{\gamma }{\Rightarrow }|q $ in this paper does not involve any requirement on $F$-predicate.

\begin{mydefn}[Logic LTS \cite{Luttgen10}]\label{D:LLTS}
 An LTS $(P,Act_{\tau},\rightarrow,F)$ is an LLTS if, for each $p \in P$,

\noindent\textbf{(LTS1) }$p \in F$ if $\exists\alpha\in \mathcal{I}(p)\forall q\in P(p \stackrel{\alpha}{\rightarrow}q \;\text{implies}\; q\in F)$;

\noindent\textbf{(LTS2)} $p \in F$ if $\nexists q\in P.p \stackrel{\epsilon}{\Rightarrow}_F|q$.

Moreover, an LTS $(P,Act_{\tau},\rightarrow,F)$ is {$\tau$}-pure if, for each $p \in P$, $p\stackrel{\tau}{\rightarrow}$ implies $\nexists a\in Act.\;p\stackrel{a}{\rightarrow}$.
\end{mydefn}

Compared with usual LTSs, one distinguishing feature of LLTS is that it
involves consideration of inconsistencies.
The main motivation
behind such consideration lies in dealing with inconsistencies caused by
conjunctive composition.
In the notion above, the predicate $F$ is used to denote the set of all inconsistent states that represent empty behaviour that cannot be implemented \cite{Luttgen11}.
In the sequel, we shall use the phrase ``{\it %
inconsistency} {\it predicate}'' to refer to $F$.
The condition (LTS1) formalizes the backward propagation of inconsistencies, and (LTS2) captures
the intuition that divergence (i.e., infinite sequences of $\tau $%
-transitions) should be viewed as catastrophic.
For more intuitive ideas and motivation about inconsistency, the reader may refer \cite{Luttgen07,Luttgen10}.

The notion of ready simulation below is adopted to capture the refinement relation in \cite{Luttgen10,Luttgen11}, which is a variant of the usual notion of weak ready simulation \cite{Bloom95,Larsen91}.
It has been proven that such kind of ready simulation is the largest precongruence w.r.t parallel composition and conjunction which satisfies  the desired property that  an inconsistent specification can only be refined by inconsistent ones (see Theorem 21 in \cite{Luttgen10}).

\begin{mydefn}[Ready simulation on LLTS \cite{Luttgen10}]\label{D:RS}
Let $(P, Act_{\tau}, \rightarrow , F)$ be a  LLTS.
A relation ${\mathcal R} \subseteq P \times P$ is a stable ready simulation relation, if for any $(p,q) \in {\mathcal R}$ and $a \in Act $\\
\textbf{(RS1)} both $p$ and $q$ are stable;\\
\textbf{(RS2)} $p \notin F$ implies $q \notin F$;\\
\textbf{(RS3)} $p \stackrel{a}{\Rightarrow}_F|p'$ implies $\exists q'.q \stackrel{a}{\Rightarrow}_F|q'\; \textrm{and}\;(p',q') \in {\mathcal R}$;\\
\textbf{(RS4)} $p\notin F$ implies ${\mathcal I}(p)={\mathcal I}(q)$.

\noindent We say that $p$ is stable ready simulated by $q$, in symbols $p \underset{\thicksim}{\sqsubset}_{RS} q$, if there exists a stable ready simulation relation $\mathcal R$ with $(p,q) \in {\mathcal R}$.
 Further, $p$ is ready simulated by $q$, written $p\sqsubseteq_{RS}q$, if
 $\forall p'(p\stackrel{\epsilon}{\Rightarrow}_F| p' \;\text{implies}\; \exists q'(q \stackrel{\epsilon}{\Rightarrow}_F| q'\; \text{and}\;p' \underset{\thicksim}{\sqsubset}_{RS} q'))$.
 The kernels of $\underset{\thicksim}{\sqsubset}_{RS}$ and $\sqsubseteq_{RS}$ are denoted by $\approx_{RS}$ and $=_{RS}$ resp..
 It is easy to see that $\underset{\thicksim}{\sqsubset}_{RS}$ itself is a stable ready simulation relation and both $\underset{\thicksim}{\sqsubset}_{RS}$ and $\sqsubseteq_{RS}$ are pre-order.
\end{mydefn}

\subsection{The calculus $\text{CLL}_R$ and its operational semantics}

This subsection introduces the LLTS-oriented process calculus $\text{CLL}_R$ presented in \cite{Zhang14}.
Let $V_{AR}$ be an infinite set of variables.
The terms of $\text{CLL}_{R}$ can be given by the following BNF grammar
\[ t::= 0\;|\perp\;|\;(\alpha.t) \;|\; (t\Box t)\;|\;(t\wedge t)\;|\;(t\vee t)\;|\;(t\parallel_A t)\;|\;X\; | \;\langle Z|E \rangle \]
where $X \in V_{AR}$, $\alpha\in Act_\tau$, $A\subseteq Act$ and recursive specification
$E = E(V)$ with $V \subseteq V_{AR}$ is a set of equations $\{X = t| X \in V\}$ and $Z$ is a variable in $V$ that acts as the initial variable.

Most of these operators are from CCS \cite{Milner89} and CSP \cite{Hoare85}:
0 is the process capable of doing no action;
$\alpha.t$ is action prefixing;
$\Box$ is non-deterministic external choice;
$\parallel_A $ is a CSP-style parallel composition.
$\bot$ represents an inconsistent process with empty behavior.
$\vee$ and $\wedge$ are logical operators, which are intended for describing logical combinations of processes.

For any term $\langle Z|E \rangle$ with $E=E(V)$, each variable in  $V$ is bound with scope $E$.
This induces the notion of free occurrence of variable, bound (and free) variables and $\alpha$-equivalence as usual.
A term $t$ is a \emph{process} if it is closed, that is, it contains no free variable.
The set of all processes is denoted by $T(\Sigma_{\text{CLL}_R} )$.
Unless noted otherwise we use $p,q,r$ to represent processes.
Throughout this paper, as usual, we assume that recursive variables are distinct from each other and no recursive variable has free occurrence; moreover we don't distinguish between $\alpha$-equivalent terms and use $\equiv$ for both syntactical identical and $\alpha$-equivalence.
In the sequel, we often denote $\langle X|\{X=t_X\}\rangle$ briefly by $\langle X|X=t_X \rangle$.

 For any recursive specification $E(V)$ and term $t$, the term $\langle t|E \rangle$ is obtained from $t$ by simultaneously replacing all free occurrences of each $X(\in V)$ by $\langle X|E \rangle$, that is,  $\langle t|E \rangle \equiv t\{\langle X|E \rangle/X: X \in V\}$.
For example, consider $t \equiv X \Box a.\langle Y | Y = X \ \Box Y \rangle$ and $E(\{X\})=\{X=t_X\}$ then $\langle t| E\rangle \equiv \langle X|X =t_X\rangle \Box a.\langle Y | Y = \langle X|X=t_X\rangle \Box Y \rangle$.
In particular, for any $E(V)$ and $t \equiv X$, $\langle t|E \rangle \equiv \langle X|E\rangle$ whenever $X \in V$ and $\langle t|E \rangle \equiv X$ if $X \notin V$.

An occurrence of $X$ in $t$ is strongly (or, weakly) guarded if such occurrence is within some subexpression $a.t_1$ with $a \in Act$ ($\tau.t_1$ or $t_1 \vee t_2$ resp.).
A variable $X$ is strongly (or, weakly) guarded in $t$ if each occurrence of $X$ is strongly (weakly resp.) guarded.
A recursive specification $E(V)$ is guarded if for each $X \in V$ and $Z = t_Z \in E(V)$, each occurrence of $X$ in $t_Z$ is (weakly or strongly) guarded.
 As usual, we assume that all recursive specifications considered in the remainder of this paper are guarded.

SOS rules of $\text{CLL}_R$ are listed in Table~\ref{Ta:OPERATIONAL_RULES}, where $a \in Act$, $\alpha \in Act_{\tau}$ and $A \subseteq Act$.
All rules are divided into two parts:

Operational rules specify behaviours of processes.
Negative premises in Rules $Ra_2$, $Ra_3$, $Ra_{13}$ and $Ra_{14}$ give $\tau$-transition precedence over visible transitions, which guarantees that the transition model of $\text{CLL}_{R}$ is $\tau$-pure.
Rules $Ra_9$ and $Ra_{10}$ illustrate that the operational aspect of $t_1\vee t_2$ is same as internal choice in usual process calculus.
Rule $Ra_6$ reflects that conjunction operator is a synchronous product for visible transitions.
The operational rules of the other operators are as usual.

Predicate rules specify  the inconsistency predicate $F$.
Rule $Rp_1$ says that $\bot$ is inconsistent.
Hence $\bot$ cannot be implemented.
While $0$ is consistent and implementable.
Thus $0$ and $\bot$ represent different processes.
Rule $Rp_3$ reflects that if both two disjunctive parts are inconsistent then so is the disjunction.
Rules $Rp_4-Rp_9$ describe the system design strategy that if one part is inconsistent, then so is the whole composition.
Rules $Rp_{10}$ and $Rp_{11}$ reveal that a stable conjunction is inconsistent whenever its conjuncts have distinct ready sets.
Rules $Rp_{13}$ and $Rp_{15}$ 
are used to capture (LTS2) in Def.~\ref{D:LLTS}.
Intuitively, these two rules say that if all stable $\tau$-descendants of $z$  are inconsistent, then $z$ itself is inconsistent.

\begin{table}[ht]
\rule{\textwidth}{0.5pt}
\noindent \textbf{Operational rules}\\
    $\begin{array}{lll}
    \displaystyle  \quad  Ra_1\frac{-}{\alpha.x_1 \stackrel{\alpha}{\rightarrow} x_1}  &
    \displaystyle  \quad  Ra_2\frac{x_1 \stackrel{a}{\rightarrow} y_1, x_2 \not \stackrel{\tau}{\rightarrow}}{x_1 \Box x_2 \stackrel{a}{\rightarrow} y_1} &
    \displaystyle  \;\;  Ra_3\frac{x_1 \not\stackrel{\tau}{\rightarrow} , x_2 \stackrel{a}{\rightarrow} y_2 }{x_1 \Box x_2 \stackrel{a}{\rightarrow} y_2} \\
    \displaystyle  \quad Ra_4\frac{x_1 \stackrel{\tau}{\rightarrow} y_1}{x_1 \Box x_2 \stackrel{\tau}{\rightarrow} y_1 \Box x_2}&
    \displaystyle   \quad Ra_5\frac{x_2 \stackrel{\tau}{\rightarrow} y_2}{x_1 \Box x_2 \stackrel{\tau}{\rightarrow} x_1 \Box y_2}&
    \displaystyle   \;\; Ra_6\frac{x_1 \stackrel{a}{\rightarrow} y_1, x_2 \stackrel{a}{\rightarrow}y_2}{x_1 \wedge x_2 \stackrel{a}{\rightarrow} y_1 \wedge y_2}\\
    \displaystyle   \quad Ra_7\frac{x_1 \stackrel{\tau}{\rightarrow} y_1}{x_1 \wedge x_2 \stackrel{\tau}{\rightarrow} y_1 \wedge x_2} &
    \displaystyle   \quad Ra_8\frac{x_2 \stackrel{\tau}{\rightarrow} y_2}{x_1 \wedge x_2 \stackrel{\tau}{\rightarrow} x_1 \wedge y_2}&
    \end{array}$

    $\begin{array}{ll}
    \displaystyle  \quad Ra_9\frac{-}{x_1 \vee x_2 \stackrel{\tau}{\rightarrow} x_1}
    &
    \displaystyle  \quad Ra_{10}\frac{-}{x_1 \vee x_2 \stackrel{\tau}{\rightarrow} x_2} \\
    \displaystyle  \quad Ra_{11}\frac{x_1 \stackrel{\tau}{\rightarrow} y_1}{x_1 \parallel_A x_2 \stackrel{\tau}{\rightarrow} y_1\parallel_A x_2} &
    \displaystyle  \quad Ra_{12}\frac{x_2 \stackrel{\tau}{\rightarrow} y_2}{x_1 \parallel_A x_2 \stackrel{\tau}{\rightarrow} x_1 \parallel_A y_2} \\
    \displaystyle  \quad Ra_{13}\frac{x_1 \stackrel{a}{\rightarrow} y_1 , x_2 \not \stackrel{\tau}{\rightarrow} }{x_1 \parallel_A x_2 \stackrel{a}{\rightarrow} y_1 \parallel_A x_2}(a\notin A)&
    \displaystyle  \quad Ra_{14}\frac{x_1 \not\stackrel{\tau}{\rightarrow} , x_2 \stackrel{a}{\rightarrow} y_2 }{x_1 \parallel_A x_2 \stackrel{a}{\rightarrow} x_1 \parallel_A y_2}(a\notin A)\\
    \displaystyle  \quad Ra_{15}\frac{x_1 \stackrel{a}{\rightarrow} y_1, x_2 \stackrel{a}{\rightarrow}y_2}{x_1\parallel_A x_2 \stackrel{a}{\rightarrow} y_1 \parallel_A y_2} (a\in A)&
    \displaystyle  \quad Ra_{16}\frac{\langle t_X| E \rangle  \stackrel{\alpha}{\rightarrow} y}{\langle X|E \rangle \stackrel{\alpha}{\rightarrow} y}(X=t_X \in E)
     \end{array}$

$\;$\\

\noindent \textbf{Predicative rules} \\
$\begin{array}{lll}
       \displaystyle \qquad Rp_1\frac{-}{\bot F}&
       \displaystyle \qquad\qquad Rp_2\frac{x_1 F}{\alpha .x_1 F} &
       \displaystyle \qquad\qquad Rp_3\frac{x_1 F, x_2 F}{x_1\vee x_2 F}\\
       \displaystyle  \qquad Rp_4\frac{x_1 F}{x_1\Box x_2 F}&
       \displaystyle \qquad\qquad Rp_5\frac{x_2 F}{x_1\Box x_2 F}&
       \displaystyle \qquad\qquad Rp_6\frac{x_1 F}{x_1\parallel_A x_2 F}\\
       \displaystyle \qquad Rp_7\frac{x_2 F}{x_1\parallel_A x_2 F}&
       \displaystyle \qquad\qquad Rp_8\frac{x_1 F}{x_1\wedge x_2 F}&
       \displaystyle \qquad\qquad Rp_9\frac{x_2 F}{x_1\wedge x_2 F}
    \end{array}$

    $\begin{array}{ll}
       \displaystyle \quad Rp_{10}\frac{x_1 \stackrel{a}{\rightarrow} y_1, x_2 \not\stackrel{a}{\rightarrow}, x_1 \wedge x_2 \not\stackrel{\tau}{\rightarrow}}{x_1 \wedge x_2 F}&
       \displaystyle \quad  Rp_{11}\frac{x_1 \not\stackrel{a}{\rightarrow} , x_2 \stackrel{a}{\rightarrow} y_2, x_1 \wedge x_2 \not\stackrel{\tau}{\rightarrow}}{x_1 \wedge x_2 F}\\
       \displaystyle \quad Rp_{12}\frac{x_1 \wedge x_2 \stackrel{\alpha}{\rightarrow} z, \{yF:x_1 \wedge x_2 \stackrel{\alpha}{\rightarrow}y\}}{x_1 \wedge x_2 F} &
       \displaystyle \quad Rp_{13}\frac{\{yF:x_1 \wedge x_2 \stackrel{\epsilon}{\Rightarrow}|y\}}{x_1 \wedge x_2 F} \\
       \displaystyle \quad Rp_{14}\frac{\langle t_X|E \rangle F}{\langle X|E \rangle F}(X = t_X \in E) &
       \displaystyle \quad Rp_{15}\frac{\{yF:\langle X|E \rangle \stackrel{\epsilon}{\Rightarrow}|y\}}{\langle X|E \rangle F}
    \end{array}
    $
\rule{\textwidth}{0.5pt}
\caption{SOS rules of $\text{CLL}_R$\label{Ta:OPERATIONAL_RULES}}
\end{table}

 It has been shown that $\text{CLL}_R$ has the unique stable transition model $M_{\text{CLL}_R}$ \cite{Zhang14},
 which exactly consists of all positive literals of the form $t \stackrel{\alpha}{\rightarrow}t'$ or $tF$ that are provable in $Strip(\text{CLL}_{R},M_{\text{CLL}_{R}})$.
 Here $Strip(\text{CLL}_{R},M_{\text{CLL}_{R}})$ is the stripped version \cite{Bol96} of $\text{CLL}_{R}$ w.r.t $M_{\text{CLL}_{R}}$.
 Each rule in $Strip(\text{CLL}_{R},M_{\text{CLL}_{R}})$ is of the form $\frac{pprem(r)}{conc(r)}$ for some ground instance  $r$ of rules in $\text{CLL}_R$ such that $M_{\text{CLL}_{R}}\models nprem(r)$, where  $nprem(r)$ (or, $pprem(r)$) is the set of negative (positive resp.) premises of $r$, $conc(r)$ is the conclusion of $r$ and $M_{\text{CLL}_{R}}\models nprem(r)$ means that for each $t\not\stackrel{\alpha}{\rightarrow} \in nprem(r)$, $t\stackrel{\alpha}{\rightarrow}s \notin M_{\text{CLL}_{R}}$ for any $s \in T(\Sigma_{\text{CLL}_{R}})$.

   The LTS associated with $\text{CLL}_{R}$, in symbols $LTS(\text{CLL}_{R})$, is the quadruple
    $(T(\Sigma_{\text{CLL}_{R}}),Act_{\tau},\rightarrow_{\text{CLL}_{R}},F_{\text{CLL}_{R}})$, where
    $p \stackrel{\alpha}{\rightarrow}_{\text{CLL}_{R}} p'$ iff $p\stackrel{\alpha}{\rightarrow} p' \in M_{\text{CLL}_{R}}$, and
    $p\in F_{\text{CLL}_{R}}$ iff $pF \in M_{\text{CLL}_{R}}$.
    Therefore $p \stackrel{\alpha}{\rightarrow}_{\text{CLL}_{R}} p'$ (or, $p \in F_{\text{CLL}_R}$) iff  $Strip( \text{CLL}_{R}, M_{\text{CLL}_{R}}) \vdash p\stackrel{\alpha}{\rightarrow}p'$ ($pF$ resp.) for any $p$, $p'$ and $\alpha \in Act_{\tau}$.
    For simplification, in the following we omit the subscripts in $\stackrel{\alpha}{\rightarrow}_{\text{CLL}_{R}}$ and $F_{\text{CLL}_{R}}$.

We end this section by quoting some results from \cite{Zhang14}.

\begin{lemma}\label{L:F_NORMAL}
Let $p$ and $q$ be any two processes. Then

    \noindent (1) $p \vee q \in F $  iff $p,q \in F $;\\
    \noindent (2)  $\alpha.p \in F $ iff $p \in F $ for each $\alpha \in Act_{\tau}$;\\
    \noindent (3)  $p \odot q \in F $  iff either $p \in F $ or $q \in F $  with $\odot \in \{\Box, \parallel_A\}$;\\
    \noindent (4) $p \in F $ or $q \in F $ implies $p \wedge q \in F $;\\
    \noindent (5) $0 \notin F $ and $\bot \in F $.
\end{lemma}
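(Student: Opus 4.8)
The plan is to convert every claim into a statement about provability in the stripped system $Strip(\text{CLL}_R,M_{\text{CLL}_R})$, exploiting the fact recalled above that $t\in F$ iff $Strip(\text{CLL}_R,M_{\text{CLL}_R})\vdash tF$. Each equivalence then decomposes into a straightforward direction, settled by a single rule application, and a direction settled by inversion on the last rule of a derivation. The right-to-left directions, together with $\bot\in F$ and part~(4), form the routine half: $Rp_1$ proves $\bot F$ outright; $Rp_2$ derives $\alpha.p F$ from $pF$; $Rp_3$ derives $(p\vee q)F$ from $pF$ and $qF$; the pairs $Rp_4,Rp_5$ and $Rp_6,Rp_7$ derive $(p\Box q)F$ and $(p\parallel_A q)F$ from $pF$ or $qF$; and $Rp_8,Rp_9$ give part~(4). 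In each case the premises of the quoted rule are exactly the hypotheses we are allowed to assume, so the conclusion is immediate.

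The content lies in the left-to-right directions of (1)--(3) and in $0\notin F$, for which I would use the following inversion principle: any derivation of $tF$ must end with a rule instance whose conclusion has the same outermost operator as $t$, since each closed $\text{CLL}_R$-term has a unique outermost operator and the stripped rules are ground instances of the schemata in Table~\ref{Ta:OPERATIONAL_RULES}. Scanning the predicate rules shows that the outermost symbols $\vee$, the prefix $\alpha.(\cdot)$, $\Box$ and $\parallel_A$ head the conclusions of \emph{only} the componentwise rules $Rp_3$, $Rp_2$, $\{Rp_4,Rp_5\}$ and $\{Rp_6,Rp_7\}$ respectively; decisively, none of them heads the conclusion of the divergence rules $Rp_{13}$ and $Rp_{15}$, whose conclusions are headed by $\wedge$ and by a recursion constant $\langle X|E\rangle$. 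Hence inversion on a derivation of $(p\vee q)F$ forces both $pF$ and $qF$; on $\alpha.p F$ it forces $pF$; and on $(p\Box q)F$ or $(p\parallel_A q)F$ it forces $pF$ or $qF$. Finally, no predicate rule has a conclusion headed by the constant $0$, so $0F$ is underivable and $0\notin F$.

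The step I expect to be the main obstacle is justifying this inversion rigorously in the presence of the negative premises of rules $Ra_2,Ra_3,Ra_{13},Ra_{14}$ and the infinitary premises of $Rp_{12},Rp_{13},Rp_{15}$. One must argue within the stripped system, where the negative premises have already been discharged against $M_{\text{CLL}_R}$, that membership in $F$ is always witnessed by a well-founded (though possibly infinitely branching) derivation, so that ``the last rule applied'' is well defined and the case analysis above is legitimate; this is precisely where I would invoke the stratification and well-foundedness of the stable model $M_{\text{CLL}_R}$ established in \cite{Zhang14}. Once that is in hand, the remaining work is the purely syntactic matching of conclusions sketched above, and the one-directional form of (4) is explained by the extra conjunction rules $Rp_{10}$--$Rp_{13}$, which can place $p\wedge q$ in $F$ with neither conjunct in $F$.
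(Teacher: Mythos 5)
Your proposal is correct. Note that the paper itself does not prove Lemma~\ref{L:F_NORMAL} at all --- it is quoted without proof from \cite{Zhang14} --- so there is no in-paper argument to compare against; but your argument (the easy directions by direct application of $Rp_1$--$Rp_9$, the converses and $0\notin F$ by inversion on the last rule of a well-founded proof tree in $Strip(\text{CLL}_R,M_{\text{CLL}_R})$, observing that no rule other than $Rp_2$, $Rp_3$, $\{Rp_4,Rp_5\}$, $\{Rp_6,Rp_7\}$ has a conclusion headed by the relevant operator and no rule concludes $0F$) is the standard and evidently intended one, and your identification of the only delicate point --- that membership in $M_{\text{CLL}_R}$ is witnessed by a well-founded derivation in the stripped system, which is exactly what the paper itself relies on elsewhere (e.g.\ in the example after Prop.~\ref{L:MULTIPLE_VI} and in the Appendix) --- is accurate.
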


 \begin{theorem}\label{L:LLTS}
    $LTS({\text{CLL}_{R}})$ is a $\tau$-pure LLTS. Moreover if $p\in F$ and $\tau\in \mathcal{I}(p)$ then $\forall q(p\stackrel{\tau}{\rightarrow}q \;\text{implies}\; q \in F)$.
\end{theorem}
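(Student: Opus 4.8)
The statement bundles three claims about the model $(T(\Sigma_{\text{CLL}_R}),Act_\tau,\rightarrow,F)$: $\tau$-purity, the two closure conditions (LTS1) and (LTS2) defining an LLTS, and the final forward-propagation (``moreover'') property. The plan is to establish $\tau$-purity first, then the moreover property, and only afterwards (LTS1) and (LTS2), because the moreover property turns out to be exactly the lemma that makes (LTS2) tractable. Every part proceeds by induction whose base cases are $0$, $\bot$ and the prefix $\alpha.p_1$, and in which $\langle X|E\rangle$ is reduced to its unfolding $\langle t_X|E\rangle$ through $Ra_{16}$ and $Rp_{14}$; since unfolding need not decrease term size, each induction has to be run on a well-founded measure (proof height, or length of a shortest $\tau$-path) whose well-foundedness is supplied by the guardedness assumption on recursive specifications.

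For $\tau$-purity I would show that $p\stackrel{\tau}{\rightarrow}$ and $p\stackrel{a}{\rightarrow}$ cannot hold simultaneously, by case analysis on the top operator. For $\vee$ only $\tau$-moves exist ($Ra_9,Ra_{10}$); for $\Box$ and $\parallel_A$ the negative premises $x_2\not\stackrel{\tau}{\rightarrow}$ and $x_1\not\stackrel{\tau}{\rightarrow}$ in $Ra_2,Ra_3,Ra_{13},Ra_{14}$ force any visible move to issue from a $\tau$-stable side, so that a concurrent $\tau$-move would require some subterm to do both $\tau$ and $a$, contradicting the induction hypothesis; for the synchronising rules $Ra_6$ (conjunction) and $Ra_{15}$ (parallel with $a\in A$) a simultaneous $\tau$- and $a$-move again forces one component to violate $\tau$-purity.

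Next I would prove the moreover property: if $p\in F$ and $\tau\in\mathcal{I}(p)$ then every $q$ with $p\stackrel{\tau}{\rightarrow}q$ lies in $F$. By $\tau$-purity $p$ has no visible move, hence $\mathcal{I}(p)=\{\tau\}$, and I induct on the derivation of $pF$. For the prefix, $\vee$, $\Box$ and $\parallel_A$ cases the claim follows from Lemma~\ref{L:F_NORMAL}, the explicit shape of the $\tau$-successors given by $Ra_4,Ra_5,Ra_7$--$Ra_{12}$, and the induction hypothesis applied to whichever conjunct/component is itself inconsistent and able to do $\tau$. The delicate case is $p\equiv p_1\wedge p_2$, handled by splitting on the last predicate rule deriving $p_1\wedge p_2\in F$: rules $Rp_8,Rp_9$ reduce to the induction hypothesis via Lemma~\ref{L:F_NORMAL}; $Rp_{10},Rp_{11}$ cannot apply since they demand stability while $p_1\wedge p_2\stackrel{\tau}{\rightarrow}$; for $Rp_{12}$ the forcing $\alpha=\tau$ makes its premise literally assert that all $\tau$-successors are in $F$; and for $Rp_{13}$ one observes that each $\tau$-successor $q$ is again a conjunction whose stable $\tau$-descendants (those $y$ with $q\stackrel{\epsilon}{\Rightarrow}|y$) are also stable $\tau$-descendants of $p_1\wedge p_2$, hence already in $F$, so $Rp_{13}$ re-fires to give $q\in F$.

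Finally I would verify (LTS1) and (LTS2). For (LTS1) I fix $\alpha\in\mathcal{I}(p)$ with all $\alpha$-successors in $F$ and do a case analysis on the top operator, concluding $p\in F$ via the matching rule among $Rp_2$--$Rp_9$ (the conjunction subcase being precisely $Rp_{12}$), repeatedly using Lemma~\ref{L:F_NORMAL} and the induction hypothesis. For (LTS2) the key point, and the step I expect to be the main obstacle, is reconciling the $F$-free path relation $\stackrel{\epsilon}{\Rightarrow}_F|$ occurring in the condition with the unconstrained relation $\stackrel{\epsilon}{\Rightarrow}|$ occurring in rules $Rp_{13},Rp_{15}$: the moreover property shows that any $\tau$-path ending in a stable consistent state must itself avoid $F$, so ``$\nexists q.\,p\stackrel{\epsilon}{\Rightarrow}_F|q$'' is equivalent to ``every stable $\tau$-descendant of $p$ lies in $F$.'' Under this reformulation $Rp_{13}$ and $Rp_{15}$ directly yield $p\in F$ for conjunctions and recursions, and the remaining operators reduce to the induction hypothesis, completing the verification.
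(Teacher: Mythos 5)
The paper itself does not prove this theorem: it is stated under the heading ``we end this section by quoting some results from [Zhang14]'', so there is no in-paper proof to compare against. Judged on its own terms, your decomposition ($\tau$-purity, then forward propagation of $F$ along $\tau$, then (LTS1)--(LTS2)) is reasonable, and your key observation is the right one: forward propagation makes ``$\nexists q.\,p\stackrel{\epsilon}{\Rightarrow}_F|q$'' equivalent to ``every stable $\epsilon$-descendant of $p$ under the unconstrained relation $\stackrel{\epsilon}{\Rightarrow}|$ lies in $F$'', which is exactly the form of the premises of $Rp_{13}$ and $Rp_{15}$. Your treatments of $\tau$-purity, of (LTS1), and of the $Rp_8$--$Rp_{12}$ cases of forward propagation are all workable.

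There is, however, a genuine gap in the forward-propagation argument at the recursion case. You announce that $\langle X|E\rangle$ is always handled by reducing to its unfolding via $Ra_{16}$ and $Rp_{14}$, but $\langle X|E\rangle F$ may instead be derived with $Rp_{15}$ as the last rule, whose premise concerns the stable $\epsilon$-descendants of $\langle X|E\rangle$ rather than a subderivation of $\langle t_X|E\rangle F$. Your trick for the analogous rule $Rp_{13}$ --- every $\tau$-successor of a conjunction is again a conjunction, so $Rp_{13}$ re-fires --- does not transfer: a $\tau$-successor $q$ of $\langle X|E\rangle$ is a $\tau$-successor of $\langle t_X|E\rangle$ and can have arbitrary top-level shape (for $t_X\equiv a.0\vee b.0$ the successors are $a.0$ and $b.0$), so neither $Rp_{13}$ nor $Rp_{15}$ applies to $q$; and you cannot appeal to (LTS2), since you derive (LTS2) afterwards using forward propagation. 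The clean repair is to reorder: first prove, by structural induction, the auxiliary statement that any process all of whose stable $\epsilon$-descendants (under $\stackrel{\epsilon}{\Rightarrow}|$) lie in $F$ is itself in $F$ --- the $\wedge$ and $\langle X|E\rangle$ cases are discharged outright by $Rp_{13}$ and $Rp_{15}$, the remaining operators by a decomposition of stable descendants together with Lemma~\ref{L:F_NORMAL}, and no forward propagation is needed. That statement then settles both the $Rp_{13}$ and the $Rp_{15}$ cases of forward propagation uniformly (each $\tau$-successor inherits the hypothesis from its parent), and it is also the substance of (LTS2) once your equivalence of the two descendant notions is in place.
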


\begin{theorem}[precongruence]\label{L:precongruence}
  If $p \sqsubseteq_{RS} q$ then $C_X\{p/X\} \sqsubseteq_{RS} C_X\{q/X\}$, where $C_X$ is any context defined as usual.
\end{theorem}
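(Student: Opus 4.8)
The plan is to reduce the claim to a monotonicity property of each individual operator and then, for every operator, to build an explicit stable ready simulation. Since $\sqsubseteq_{RS}$ is a preorder, it suffices to show that each operator of $\text{CLL}_R$ is monotone in its arguments, i.e. that $p_i \sqsubseteq_{RS} q_i$ implies $p_1 \odot p_2 \sqsubseteq_{RS} q_1 \odot q_2$ for each binary $\odot \in \{\Box,\wedge,\vee,\parallel_A\}$, that $p \sqsubseteq_{RS} q$ implies $\alpha.p \sqsubseteq_{RS} \alpha.q$, and the analogous statement for recursion. The general context case then follows by a routine induction on the structure of $C_X$: the base cases (the hole, or a constant/other variable) are handled by reflexivity of $\sqsubseteq_{RS}$, and each inductive step applies the relevant operator's monotonicity to the refinements supplied by the induction hypothesis.

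For each operator I would first unfold the definition of $\sqsubseteq_{RS}$. Given a stable consistent $r$ with $p_1 \odot p_2 \stackrel{\epsilon}{\Rightarrow}_F| r$, the SOS rules pin down the shape of $r$. For $\Box$, $\wedge$ and $\parallel_A$ the internal evolution is componentwise (rules $Ra_4$, $Ra_5$, $Ra_7$, $Ra_8$, $Ra_{11}$, $Ra_{12}$), so $r \equiv p_1' \odot p_2'$ with $p_i \stackrel{\epsilon}{\Rightarrow}_F| p_i'$ (consistency along the path being read off Lemma~\ref{L:F_NORMAL} and Theorem~\ref{L:LLTS}); applying $p_i \sqsubseteq_{RS} q_i$ yields $q_i \stackrel{\epsilon}{\Rightarrow}_F| q_i'$ with $p_i' \underset{\thicksim}{\sqsubset}_{RS} q_i'$, and I take $s \equiv q_1' \odot q_2'$. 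For $\vee$ (rules $Ra_9$, $Ra_{10}$) and the $\tau$-prefix the leading $\tau$-step resolves into a single component, so the problem reduces directly to that component, Lemma~\ref{L:F_NORMAL}(1),(2) keeping track of consistency (e.g. $q_1\notin F$ gives $q_1\vee q_2\notin F$, so the matching $\tau$-step stays outside $F$). To close the argument I would take as candidate the relation
\[ \mathcal{R} \;=\; \{(p_1' \odot p_2',\, q_1' \odot q_2') : p_i' \underset{\thicksim}{\sqsubset}_{RS} q_i' \text{ and all four components are stable}\} \;\cup\; \underset{\thicksim}{\sqsubset}_{RS} \]
and verify (RS1)--(RS4) for it, reading the transitions of a composite off the SOS rules and its inconsistency off Lemma~\ref{L:F_NORMAL} and Theorem~\ref{L:LLTS}; note that (RS3) drops into the $\underset{\thicksim}{\sqsubset}_{RS}$ summand as soon as the leading visible action has been matched.

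The delicate case, and the one I expect to be the main obstacle, is conjunction, because consistency of a stable $\wedge$ is not componentwise: by rules $Rp_{10}$ and $Rp_{11}$ a stable conjunction with consistent conjuncts is still inconsistent unless its conjuncts have equal ready sets. Here condition (RS4) does the decisive work. From $p_i' \notin F$ and $p_i' \underset{\thicksim}{\sqsubset}_{RS} q_i'$ we get $\mathcal{I}(p_i')=\mathcal{I}(q_i')$, so consistency of $p_1' \wedge p_2'$ forces $\mathcal{I}(p_1')=\mathcal{I}(p_2')$, whence $\mathcal{I}(q_1')=\mathcal{I}(q_2')$; combined with $q_i'\notin F$ (from (RS2)) and stability, none of the predicate rules applies and $q_1' \wedge q_2'$ is consistent as well. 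Verifying (RS3) for $\wedge$ is where the remaining care is needed: the synchronous rule $Ra_6$ forces both conjuncts to offer the matched action, and I must follow the post-action internal evolution inside the conjunction while staying outside $F$, again invoking the ready-set matching and Theorem~\ref{L:LLTS}/(LTS2) to guarantee that a stable consistent $F$-avoiding descendant exists on the $q$-side. Finally, the recursion operator $\langle X|E\rangle$ requires a separate and more technical argument exploiting the guardedness assumption: one relates $\langle X|E\rangle$ to its unfolding through rules $Ra_{16}$ and $Rp_{14}$, $Rp_{15}$ and propagates the stable ready simulation across guarded recursive calls, which is where the bookkeeping is heaviest.
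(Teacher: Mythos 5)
The paper does not actually prove this theorem: it is quoted from \cite{Zhang14}, so your attempt can only be measured against what a correct proof must contain. Your architecture (operator-wise monotonicity plus induction on the context, explicit candidate simulations, decomposition of stable $\epsilon$-descendants as in Lemma~\ref{L:TAU_I}) is the right one, and the cases $\Box$, $\vee$, $\parallel_A$ and prefixing go through as you describe. The genuine gap is in the conjunction case, precisely at the point you identify as decisive. You claim that from $q_1',q_2'\notin F$, $\mathcal{I}(q_1')=\mathcal{I}(q_2')$ and stability, ``none of the predicate rules applies'' so $q_1'\wedge q_2'\notin F$. This is false: rule $Rp_{12}$ (backward propagation) can still fire. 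Take $q_1'\equiv a.b.0$ and $q_2'\equiv a.c.0$ with $b\neq c$: both are stable, consistent and have ready set $\{a\}$, yet their unique $a$-derivative $b.0\wedge c.0$ is inconsistent by $Rp_{10}$, whence $q_1'\wedge q_2'\in F$ by $Rp_{12}$. Consistency of a stable conjunction depends on the whole future behaviour, not on the immediate ready sets. The implication you actually need --- $p_1'\wedge p_2'\notin F$ together with $p_i'\underset{\thicksim}{\sqsubset}_{RS} q_i'$ implies $q_1'\wedge q_2'\notin F$ --- is true, but establishing it requires a well-founded induction on the proof tree of $q_1'\wedge q_2'F$ in $Strip(\text{CLL}_R,M_{\text{CLL}_R})$, discharging the $Rp_{12}$ and $Rp_{13}$ cases by descending to matched derivatives via (LTS1), (LTS2) and (RS3); this is essentially the content of \cite[Lemma~4.5]{Zhang14}, which the paper itself invokes for the closely related Lemma~\ref{L:CON_ID_I}(4). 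Without it, both the (RS2) clause of your candidate relation for $\wedge$ and your appeal to Lemma~\ref{L:TAU_I}(2) (whose conjunction case is conditional on $q_1'\wedge q_2'\notin F$) are unsupported.

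A second, smaller gap: the theorem is stated for arbitrary contexts, which in $\text{CLL}_R$ include ones where the hole occurs inside a recursive body. Your structural induction then needs a substitutivity result comparing $\langle Z\mid Z=D\{p/X\}\rangle$ with $\langle Z\mid Z=D\{q/X\}\rangle$, for which a pairwise candidate relation no longer suffices (a simulation-up-to or unique-fixed-point argument exploiting guardedness is needed). You flag this case but do not carry it out, so as written the proposal establishes the theorem only for recursion-free contexts --- which is all the rest of the paper uses, but is strictly weaker than the statement as given.
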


\section{Axiomatic system $AX_{\text{CLL}}$ and its soundness}

This section is devoted to formulating an axiomatic system for the precongruence $\sqsubseteq_{RS}$ and proving its soundness.
For the moment, we don't know whether a ground-complete proof system exists for the full calculus $\text{CLL}_R$.
This paper will restrict itself to the finite fragment, i.e., leave out recursive operator.

\subsection{$AX_{\text{CLL}}$}

Since inconsistency predicate $F$ (more precisely, $F_{\text{CLL}_R}$) is involved in the definition of $\sqsubseteq_{RS}$, it could be expected that some algebraic laws hold only for processes satisfying certain conditions concerning consistency.
However, since $F$ itself is in semantic category, it is illegal that formulating these conditions in terms of $F$ in axiomatic systems.
Therefore, in order to introduce the axiomatic system $AX_{\text{CLL}}$, a few preliminary definitions are given below, which are needed to express side conditions of some axioms.

\begin{mydefn}[Basic Process Term] \label{D:BPT}
The basic process terms are defined by BNF $t::=0\;|\;(\alpha.t)\;|\;(t\vee t)\;|\;(t \Box t)\;|\;(t\parallel_A t)$, where $\alpha \in Act_{\tau}$ and $A \subseteq Act$. We denote $T(\Sigma_B)$ as the set of all basic process terms.
\end{mydefn}

At a later stage, we will see that the set $T(\Sigma_B)$ is sufficiently expressive to describe all consistent processes with finite behaviours modulo $=_{RS}$.
Moreover, through referring $T(\Sigma_B)$, we can formulate syntactically algebraic laws that hold conditionally, e.g., Axioms $DS4$ and $EXP2$.

\begin{rmk}\label{R:EX_BPT}
  Since all proofs in this section does not depend on the finiteness of processes' behaviour, all results given in this section are still valid if we extend $T(\Sigma_B)$ by adding the item $\langle X|E \rangle$ in BNF above, where $\langle X|E \rangle$ is any strongly guarded processes in $T(\Sigma_{\text{CLL}_R})$ in which neither conjunction operator nor $\bot$ occurs.
  We denote $ET(\Sigma_B)$ as the set of all process terms generating by such extended BNF.
  For the purpose of this paper $T(\Sigma_B)$ is sufficient.
\end{rmk}

By Lemma~\ref{L:F_NORMAL}, it is easy to see that the operators $\alpha.()$, $\vee$, $\Box$ and $\parallel_A$ preserve consistency. Thus an immediate consequence of Lemma~\ref{L:F_NORMAL} is

\begin{lemma}\label{L:BPT}
  $T(\Sigma_B)\cap F =\emptyset$.
\end{lemma}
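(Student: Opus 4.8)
The plan is to prove $T(\Sigma_B) \cap F = \emptyset$ by structural induction on the grammar of basic process terms given in Definition~\ref{D:BPT}. The key observation, as the surrounding text already hints, is that every constructor appearing in that grammar, namely $0$, $\alpha.(\cdot)$, $\vee$, $\Box$ and $\parallel_A$, preserves consistency, and crucially the inconsistent constant $\bot$ and the conjunction operator $\wedge$ are \emph{absent} from the grammar. Since the inconsistency predicate $F$ can only be introduced through $\bot$ (Rule $Rp_1$) or generated by inconsistent subterms propagating upward, eliminating $\bot$ and $\wedge$ should eliminate all sources of inconsistency.

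First I would set up the induction. For the base case $t \equiv 0$, Lemma~\ref{L:F_NORMAL}(5) gives directly $0 \notin F$. For the inductive step I would treat each operator using the corresponding clause of Lemma~\ref{L:F_NORMAL}. For $t \equiv \alpha.t_1$: by the induction hypothesis $t_1 \notin F$, so by Lemma~\ref{L:F_NORMAL}(2) we get $\alpha.t_1 \notin F$. For $t \equiv t_1 \vee t_2$: the hypotheses give $t_1 \notin F$ and $t_2 \notin F$; by Lemma~\ref{L:F_NORMAL}(1), $t_1 \vee t_2 \in F$ would require \emph{both} $t_1 \in F$ and $t_2 \in F$, so $t_1 \vee t_2 \notin F$. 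For $t \equiv t_1 \odot t_2$ with $\odot \in \{\Box, \parallel_A\}$: by Lemma~\ref{L:F_NORMAL}(3), $t_1 \odot t_2 \in F$ iff $t_1 \in F$ or $t_2 \in F$, and both are excluded by the induction hypothesis, so $t_1 \odot t_2 \notin F$. This exhausts the grammar.

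The reason this argument is genuinely clean rather than a full re-derivation from the SOS rules is that all the hard work has been packaged into Lemma~\ref{L:F_NORMAL}: each of its clauses is an \emph{iff} (or, for the operators occurring here, gives a characterization of $F$-membership purely in terms of the $F$-membership of immediate subterms) precisely for the operators in the basic-term grammar, with the sole exception of conjunction. Conjunction is exactly the operator whose $F$-membership is \emph{not} reducible to that of its arguments — Lemma~\ref{L:F_NORMAL}(4) is only a one-directional implication, reflecting that $\wedge$ can manufacture new inconsistencies via the ready-set clash rules $Rp_{10}$, $Rp_{11}$ and the divergence/backward-propagation rules $Rp_{12}$, $Rp_{13}$. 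The point of excluding $\wedge$ from the grammar is therefore that none of these inconsistency-creating mechanisms can ever fire on a basic process term.

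Accordingly, the main (and essentially only) obstacle to guard against is a subtle one: verifying that the SOS rules that create inconsistency out of nothing, rather than propagating it, really cannot apply to any subterm of a basic process term. This is already subsumed by Lemma~\ref{L:F_NORMAL}, since that lemma was proved for arbitrary processes and its clauses (1)--(3) are stated as full biconditionals, meaning they have already accounted for rules $Rp_{12}$ and $Rp_{13}$ in the cases of $\alpha.(\cdot)$, $\vee$, $\Box$ and $\parallel_A$; the absence of a biconditional for $\wedge$ is exactly what forces us to keep conjunction out of the basic fragment. Hence the induction goes through with no residual case analysis over the transition structure, and the statement follows.
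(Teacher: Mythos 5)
Your proof is correct and takes essentially the same route as the paper: the paper also derives the lemma as an immediate consequence of Lemma~\ref{L:F_NORMAL}, observing that the operators $\alpha.()$, $\vee$, $\Box$ and $\parallel_A$ preserve consistency, which is exactly the structural induction you spell out. Your additional commentary on why $\wedge$ and $\bot$ must be excluded accurately reflects the role of clause (4) being only a one-directional implication.
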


Let $<t_0,t_1,\dots,t_{n-1}>$ be a finite sequence of process terms with $n \geq 0$. We define the general external choice $\underset{i<n}\square t_i$ by recursion:
\[ \underset{i<0}\square t_i \triangleq 0,
   \underset{i<1}\square t_i \triangleq t_0,\;\text{and}\;
   \underset{i<k+1}\square t_i \triangleq (\underset{i<k}\square t_i) \Box t_k \;\text{for}\; k \geq 1.\]
Moreover, given a finite sequence $<t_0,\dots,t_{n-1}>$ and $S \subseteq \{t_0,\dots,t_{n-1}\}$, the general external choice $\square S$ is defined as $\square S \triangleq \underset{j<|S|}\square t_j'$, where the sequence $<t_0',\dots,t_{|S|-1}'>$ is the restriction of $<t_0,\dots,t_{n-1}>$ to $S$.
In fact, up to $=_{RS}$ (or, =, see below), the order and grouping of terms in $\underset{i<n}{\square}t_i$ may be ignored by virtue of commutative and associative laws of $\Box$ w.r.t $=_{RS}$ (axioms $EC1$ and $EC2$ below, resp.).

\begin{mydefn}[Injective in Prefixes]
  A process $\underset{i<n}{\square}\alpha_i.t_i$ is injective in prefixes if $\alpha_i \neq \alpha_j$ for any $i \neq j < n$.
\end{mydefn}

The axiomatic system $AX_{\text{CLL}}$ is reported in Table~\ref{Ta:PROOF_SYSTEM_AXIOMS}.
It is an inequational logic where $t=t'$ means $t\leqslant t'$ and $t' \leqslant t$.
Axioms in $AX_{\text{CLL}}$ may be divided into two groups:

\begin{table}
\rule{\textwidth}{0.5pt}
\noindent \textbf{Axioms}
\begin{align*}
       EC1   &  \;\; x \Box y  = y \Box x  &  DI1  &   \;\;  x \vee y  = y \vee x \\
       EC2   &  \;\; (x \Box y)\Box z  = x \Box (y \Box z)       &    DI2   &  \;\; x \vee (y \vee z ) = (x\vee y) \vee z\\
     EC3   &   \;\; x \Box x  = x    &   DI3  &  \;\;   x \vee x  = x        \\
     EC4   &  \;\; x \Box 0  = x    &   DI4   &   \;\;  x \vee \bot  = x    \\
     EC5   &  \;\; x \Box  \bot = \bot    &   DI5  &   \;\;  x  \leqslant x \vee y        \\
     CO1   &  \;\; x\wedge y  = y \wedge x   &  DS1  &  \;\;   x \Box (y\vee z)  \leqslant (x \Box y) \vee (x \Box z)      \\
     CO2   &  \;\; x\wedge x  = x &   DS2  &   \;\;  x \wedge (y\vee z)  \leqslant (x \wedge y) \vee (x \wedge z) \\
     CO3   &  \;\; x\wedge \bot  = \bot    &  DS3  &  \;\;   x \parallel_A (y\vee z)  \leqslant (x \parallel_A y) \vee (x \parallel_A z) \\
     PR1    &  \;\; a.\bot  = \bot    &   DS4  &  \;\;   a.(x \vee y)  \leqslant a.x\Box a.y, \;\text{where}\;  x,y \in T(\Sigma_B)\qquad \\
    PR2  &   \;\;  \tau.x  = x    &    PA1   &   \;\; x \parallel_A y  =y \parallel_A x \\
       &          &    PA2  &   \;\;  x \parallel_A \bot  = \bot
    \end{align*}
\begin{align*}
  ECC1 &\;\;\underset{i< n}{\square}a_i.x_i\wedge \underset{j< m}{\square}b_j.y_j  = \bot \;\text{ if}\; \{a_i|i< n\}\neq \{b_j|j < m\}  \\
   ECC2 &\;\; \underset{i<n}{\square}a_i.(x_i \wedge y_i)  \leqslant \underset{i< n}{\square}a_i.x_i\wedge \underset{i< n}{\square}a_i.y_i  \\
  ECC3& \;\; \underset{i< n}{\square}a_i.x_i\wedge \underset{i< n}{\square}a_i.y_i \leqslant   \underset{i<n}{\square}a_i.(x_i \wedge y_i) \;\text{if}\;\underset{i<n}{\square}a_i.x_i\;\text{is injective in prefixes}  \quad\;\;
\end{align*}
\noindent $EXP1  $
    \begin{multline*}
     \underset{i< n}{\square}a_i.x_i \parallel_A \underset{j< m}{\square}b_j.y_j  \leqslant\\
      \left(\underset {\begin{subarray}
                   \;i< n,\\
                   a_i \notin A
                \end{subarray}}
                \square a_i.(x_i \parallel_A \underset{j< m}{\square}b_j.y_j) \Box
                \underset {\begin{subarray}
                   \;j< m,\\
                   b_j \notin A
                \end{subarray}}
                \square
                b_j.(\underset{i< n}{\square}a_i.x_i \parallel_A y_j)\right) \Box
        \underset {\begin{subarray}
                   \;i< n,j<m\\
                  a_i= b_j\in A
                \end{subarray}}
                \square
                a_i.(x_i \parallel_A y_j)
    \end{multline*}
     $EXP2  \qquad$
      \begin{multline*}
      \left(\underset {\begin{subarray}
                   \;i< n,\\
                   a_i \notin A
                \end{subarray}}
                \square a_i.(x_i \parallel_A \underset{j< m}{\square}b_j.y_j) \Box
                \underset {\begin{subarray}
                   \;j< m,\\
                   b_j \notin A
                \end{subarray}}
                \square
                b_j.(\underset{i< n}{\square}a_i.x_i \parallel_A y_j)\right) \Box
       \underset {\begin{subarray}
                   \;i< n,j<m\\
                  a_i= b_j\in A
                \end{subarray}}
                \square
                a_i.(x_i \parallel_A y_j)\\
     \leqslant \underset{i< n}{\square}a_i.x_i \parallel_A \underset{j< m}{\square}b_j.y_j, \;\text{where}\;  x_i,y_j \in T(\Sigma_B)\; \text{for each}\; i< n\; \text{and}\; j< m
    \end{multline*}
    \noindent \textbf{Inference rules}
    \begin{align*}\hfill
    &\text{REF}  &       &\frac{-}{t\leqslant t}  \\
    &\text{TRANS}  &    &\frac{t\leqslant t',t'\leqslant t''}{t\leqslant t''}  \\
    &\text{CONTEXT}  &     & \text{for each n-ary operator} \;f\\
    & & & \frac{t_1 \leqslant t_1',\dots, t_n \leqslant t_n'}{f(t_1, \dots, t_n) \leqslant f(t_1',\dots,t_n')}
    \end{align*}
    \rule{\textwidth}{0.5pt}
\caption{Axioms and inference rules of $AX_{\text{CLL}}$\label{Ta:PROOF_SYSTEM_AXIOMS}}
\end{table}


 First the ones that involve only a single operator, which capture fundamental properties of operators, e.g., commutativity, associativity, idempotent, etc.
      These axioms are standard.

 Second the ones that characterize the interaction between operators.
        Among them, the axioms $DS1$, $DS3$, $DS4$ and $ECCi(1 \leq i \leq 3)$ describe the interaction between logical and operational operators.
        As mentioned early, it is one distinguishing feature of LLTS that it involves consideration of inconsistencies.
A number of axioms in this group embody such feature.
In particular, as a consequence of considering inconsistency, side conditions are associated with $DS4$, $ECC3$ and $EXP2$.
In the next subsection, we will show that these side conditions are necessary by giving counterexamples.

It should be pointed out that some axioms have been considered by L\"{u}ttgen and Vogler semantically in \cite{Luttgen10}, including $DS2$, $CO2$, $CO3$ and $DIi(3\leq i \leq 5)$.

Given the axioms and rules of inference, we assume that the resulting notions of proof, length of proof and theorem are already familiar to the reader. Following standard usage, $\vdash t \leqslant t'$ means that $t \leqslant t'$ is a theorem of $AX_{\text{CLL}}$.

%

\subsection{Soundness}
This subsection will establish the soundness of $AX_{\text{CLL}}$ w.r.t $\sqsubseteq_{RS}$.
Although $AX_{\text{CLL}}$ is a proof system for $\text{CLL}_R$-processes with finite behaviours, it is sound for the full calculus.
Therefore this subsection doesn't restrict itself to finite terms.


As usual, in order to get soundness, we need to check that all ground instances of axioms are sound w.r.t $\sqsubseteq_{RS}$ and all inference are sound.
The latter immediately follows from reflexivity and transitivity of $\sqsubseteq_{RS}$ and Theorem~\ref{L:precongruence}.
Therefore the remainder of this subsection will devote itself to verifying the soundness of axioms.

 We begin by giving a simple but useful property about combined processes $p \odot q$ with $\odot \in \{\Box,\parallel_A,\wedge\}$.
 Roughly speaking, it says that consistent and stable $\epsilon$-derivatives of $p \odot q$ must be compositions of consistent and stable $\epsilon$-derivatives of $p$ and $q$, and the converse also (almost) holds.

 \begin{lemma}\label{L:TAU_I}
  \noindent  (1) For any $\odot \in \{\Box,\parallel_A,\wedge\}$, if $p_1 \odot p_2 \stackrel{\epsilon}{\Rightarrow}_F| p_3$ then $p_1 \stackrel{\epsilon}{\Rightarrow}_F| p_1'$, $p_2 \stackrel{\epsilon}{\Rightarrow}_F|p_2'$ and $p_3 \equiv p_1' \odot p_2'$ for some $p_1', p_2'$.

  \noindent (2)     If $p_1 \stackrel{\epsilon}{\Rightarrow}_F| p_1'$ and $p_2 \stackrel{\epsilon}{\Rightarrow}_F|p_2'$ then $p_1 \odot p_2 \stackrel{\epsilon}{\Rightarrow}_F| p_1' \odot p_2'$  for $\odot \in \{\Box,\parallel_A\}$,
            and $p_1\wedge p_2 \stackrel{\epsilon}{\Rightarrow}_F| p_1' \wedge p_2'$ if $p_1' \wedge p_2' \notin F$.
\end{lemma}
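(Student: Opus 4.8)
The plan is to exploit a single structural observation that drives both parts: among the SOS rules, the only ones whose conclusion is a $\tau$-transition of a composite $p_1\odot p_2$ with $\odot\in\{\Box,\parallel_A,\wedge\}$ are $Ra_4,Ra_5$ (for $\Box$), $Ra_{11},Ra_{12}$ (for $\parallel_A$) and $Ra_7,Ra_8$ (for $\wedge$); each of these lifts a $\tau$-move of a single component while leaving the other unchanged. Consequently every $\tau$-path issuing from $p_1\odot p_2$ is an interleaving of a $\tau$-path of $p_1$ with a $\tau$-path of $p_2$, and each state it visits has the shape $q_1\odot q_2$. I would establish this interleaving fact by a routine induction on the length of the path, and then read off the desired properties (decomposition, consistency, stability) from it.

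For part (1) I would argue as follows. Given $p_1\odot p_2\stackrel{\epsilon}{\Rightarrow}_F| p_3$, the interleaving fact yields $p_3\equiv p_1'\odot p_2'$ with $p_1\stackrel{\epsilon}{\Rightarrow} p_1'$ and $p_2\stackrel{\epsilon}{\Rightarrow} p_2'$. Because the given path is consistent, each visited state $q_1\odot q_2$ lies outside $F$; by Lemma~\ref{L:F_NORMAL}(3) for $\odot\in\{\Box,\parallel_A\}$ and by Lemma~\ref{L:F_NORMAL}(4) for $\odot=\wedge$ this forces $q_1\notin F$ and $q_2\notin F$. Projecting onto each coordinate therefore produces consistent paths, i.e.\ $p_1\stackrel{\epsilon}{\Rightarrow}_F p_1'$ and $p_2\stackrel{\epsilon}{\Rightarrow}_F p_2'$. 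Finally, stability of $p_3\equiv p_1'\odot p_2'$ propagates to the components: were $p_1'$ (or $p_2'$) to admit a $\tau$-move, the corresponding lifting rule would give $p_1'\odot p_2'$ a $\tau$-move. Hence $p_1\stackrel{\epsilon}{\Rightarrow}_F| p_1'$ and $p_2\stackrel{\epsilon}{\Rightarrow}_F| p_2'$.

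For part (2) with $\odot\in\{\Box,\parallel_A\}$ I would build the required path in two phases: first lift the path $p_1\stackrel{\epsilon}{\Rightarrow}_F| p_1'$ (via $Ra_4$ or $Ra_{11}$) to reach $p_1'\odot p_2$, then lift $p_2\stackrel{\epsilon}{\Rightarrow}_F| p_2'$ (via $Ra_5$ or $Ra_{12}$) to reach $p_1'\odot p_2'$. Every intermediate state has the form $q_1\odot p_2$ or $p_1'\odot q_2$ with both components consistent, so the ``iff'' of Lemma~\ref{L:F_NORMAL}(3) makes each such composite consistent; stability of the endpoint follows exactly as in part (1).

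The main obstacle is the conjunctive case of part (2), where the hypothesis $p_1'\wedge p_2'\notin F$ becomes essential. Here Lemma~\ref{L:F_NORMAL}(4) supplies only the implication ``a conjunct in $F$ forces the conjunction in $F$'', so consistency of the two coordinates $q_1,p_2$ does \emph{not} guarantee $q_1\wedge p_2\notin F$; indeed $Rp_{10}$--$Rp_{13}$ can render a conjunction inconsistent even when both conjuncts are consistent. To repair this I would keep the same two-phase path $p_1\wedge p_2\,(\stackrel{\tau}{\rightarrow})^{*}\,p_1'\wedge p_2\,(\stackrel{\tau}{\rightarrow})^{*}\,p_1'\wedge p_2'$ but verify its consistency \emph{backward}, starting from the endpoint $p_1'\wedge p_2'$, which is consistent by hypothesis. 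The engine is the additional clause of Theorem~\ref{L:LLTS}: a non-stable inconsistent state sends every $\tau$-successor into $F$. Read contrapositively, any state with $\tau\in\mathcal{I}(p)$ that has at least one consistent $\tau$-successor is itself consistent. Since each intermediate state on the path is non-stable and has its successor on the path, a backward induction propagates consistency from $p_1'\wedge p_2'$ through all of phase two to the junction $p_1'\wedge p_2$, and then through all of phase one back to $p_1\wedge p_2$. With the endpoint already stable (both $p_1',p_2'$ are stable), this yields $p_1\wedge p_2\stackrel{\epsilon}{\Rightarrow}_F| p_1'\wedge p_2'$. I expect this backward-consistency argument, together with the recognition that the forward-propagation clause of Theorem~\ref{L:LLTS} is precisely the tool it calls for, to be the crux of the whole lemma.
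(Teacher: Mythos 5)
Your proposal is correct and follows the same route the paper intends: the paper dismisses this lemma as ``straightforward by applying Theorem~\ref{L:LLTS} and Lemma~\ref{L:F_NORMAL}'', and your argument is exactly a fleshed-out version of that, with the interleaving/decomposition of $\tau$-paths handled by Lemma~\ref{L:F_NORMAL} and the delicate conjunctive case of part~(2) handled by the backward use of the forward-propagation clause of Theorem~\ref{L:LLTS}. You have correctly identified that last point as the crux, so nothing is missing.
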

\begin{proof}
  Straightforward by applying Theorem~\ref{L:LLTS} and Lemma~\ref{L:F_NORMAL}.
\end{proof}

The next observation, which is due to L\"{u}ttgen and Vogler, reveals that the relation $\sqsubseteq_{RS}$ interacts well with logic operators conjunction and disjunction.

\begin{lemma}\label{L:CON_ID_I}

\noindent (1) $p_i \sqsubseteq_{RS} p_1 \vee p_2$ for $i=1,2$.

\noindent (2) If $p_1 \sqsubseteq_{RS} p_3 $ and $p_2 \sqsubseteq_{RS} p_3$ then $p_1 \vee p_2\sqsubseteq_{RS} p_3 $.

\noindent (3) $p_1 \wedge p_2 \sqsubseteq_{RS} p_i$ for $i=1,2$.

\noindent (4) If $p_1 \sqsubseteq_{RS} p_2 $ and $p_1 \sqsubseteq_{RS} p_3$, then $p_1  \sqsubseteq_{RS} p_2 \wedge p_3$.

\end{lemma}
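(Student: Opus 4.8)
The plan is to treat the four statements separately, handling (1) and (2) by direct appeal to the operational rules $Ra_9$, $Ra_{10}$ for $\vee$ together with Lemma~\ref{L:F_NORMAL}(1), and (3), (4) by exhibiting explicit stable ready simulation relations in the sense of Definition~\ref{D:RS}. For (1), I would first note that $p_i \stackrel{\epsilon}{\Rightarrow}_F| p_i'$ forces $p_i \notin F$, whence $p_1 \vee p_2 \notin F$ by Lemma~\ref{L:F_NORMAL}(1); since $p_1 \vee p_2 \stackrel{\tau}{\rightarrow} p_i$, the step $p_1 \vee p_2 \stackrel{\tau}{\rightarrow}_F p_i$ is consistent, so $p_1 \vee p_2 \stackrel{\epsilon}{\Rightarrow}_F| p_i'$, and taking $q' \equiv p_i'$ with reflexivity of $\underset{\thicksim}{\sqsubset}_{RS}$ discharges the definition of $\sqsubseteq_{RS}$. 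For (2), any derivative $p_1 \vee p_2 \stackrel{\epsilon}{\Rightarrow}_F| q$ must factor through the first $\tau$-step (the only transitions of $\vee$) as $p_1 \vee p_2 \stackrel{\tau}{\rightarrow} p_i \stackrel{\epsilon}{\Rightarrow}_F| q$, so $p_i \stackrel{\epsilon}{\Rightarrow}_F| q$, and the hypothesis $p_i \sqsubseteq_{RS} p_3$ supplies the matching $q'$ directly.

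For (3) I would fix $i=1$ (with $i=2$ symmetric via $CO1$) and check that $\mathcal{R} = \{(s \wedge t,\, s) : s \wedge t \text{ stable and } s \wedge t \notin F\}$ is a stable ready simulation. Stability of $s$ (RS1) and $s \notin F$ (RS2) follow from $Ra_7$, $Ra_8$ and Lemma~\ref{L:F_NORMAL}(4); for RS4 the decisive observation is that a stable consistent conjunction has conjuncts of equal ready set, which is exactly the contrapositive of $Rp_{10}$, $Rp_{11}$, so $Ra_6$ yields $\mathcal{I}(s \wedge t) = \mathcal{I}(s)$. For RS3 I would decompose $s \wedge t \stackrel{a}{\Rightarrow}_F| r'$ by $Ra_6$ and Lemma~\ref{L:TAU_I}(1) into matching moves of $s$ and $t$, landing again at a pair $(s'' \wedge t'',\, s'') \in \mathcal{R}$. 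Finally I would lift this to $\sqsubseteq_{RS}$: given $p_1 \wedge p_2 \stackrel{\epsilon}{\Rightarrow}_F| r$, Lemma~\ref{L:TAU_I}(1) writes $r \equiv p_1' \wedge p_2'$ with $p_i \stackrel{\epsilon}{\Rightarrow}_F| p_i'$, and $(p_1' \wedge p_2',\, p_1') \in \mathcal{R}$ gives $r \underset{\thicksim}{\sqsubset}_{RS} p_1'$.

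For (4) the natural candidate is $\mathcal{R} = \{(s,\, t \wedge u) : s \underset{\thicksim}{\sqsubset}_{RS} t \text{ and } s \underset{\thicksim}{\sqsubset}_{RS} u\}$, reduced to the stable level as in the definition of $\sqsubseteq_{RS}$. Conditions RS1 and RS4 are routine: $t \wedge u$ is stable because $t,u$ are (nothing among $Ra_7$, $Ra_8$ fires), and $\mathcal{I}(t) = \mathcal{I}(s) = \mathcal{I}(u)$ forces $\mathcal{I}(t \wedge u) = \mathcal{I}(s)$ through $Ra_6$. Condition RS3 matches an $a$-move of $s$ against $a$-moves of $t$ and of $u$, recombining them by $Ra_6$ and Lemma~\ref{L:TAU_I}(2) into an $a$-move of $t \wedge u$ that stays in $\mathcal{R}$. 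Both RS2 and this recombination, however, rest on the single nontrivial fact that a common stable refinement forces consistency, namely that $s \underset{\thicksim}{\sqsubset}_{RS} t$ and $s \underset{\thicksim}{\sqsubset}_{RS} u$ imply $t \wedge u \notin F$.

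That claim is where I expect the real work to lie, since $F$ is a least fixed point and inconsistency can still be manufactured by the propagation rules $Rp_{12}$, $Rp_{13}$ even when the conjuncts are consistent with equal ready sets. I would prove it by well-founded induction on the rank of $t \wedge u$ in $F$ (equivalently, by the principle that the least fixed point $F$ is disjoint from any set each of whose members, whenever it is the conclusion of an applicable predicate rule, possesses an $F$-premise lying in the same set). For a minimal-rank supposed counterexample $t \wedge u \in F$, consistency of $t,u$ kills $Rp_8$, $Rp_9$, equality of ready sets kills $Rp_{10}$, $Rp_{11}$, and stability of $t \wedge u$ neutralizes $Rp_{13}$, so the derivation uses $Rp_{12}$ for some $a \in \mathcal{I}(s)$ with every $a$-successor of $t \wedge u$ in $F$. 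Since $s \notin F$ and $a \in \mathcal{I}(s)$, (LTS1) and (LTS2) of Theorem~\ref{L:LLTS} yield a consistent $s \stackrel{a}{\Rightarrow}_F| s'$; RS3 applied to $t$ and to $u$ gives $t \stackrel{a}{\rightarrow}_F t_0 \stackrel{\epsilon}{\Rightarrow}_F| t'$ and $u \stackrel{a}{\rightarrow}_F u_0 \stackrel{\epsilon}{\Rightarrow}_F| u'$ with $s'$ a common refinement of $t'$ and $u'$, while $Ra_6$ exhibits the concrete $a$-successor $t_0 \wedge u_0$. The delicate point, which I expect to be the main obstacle, is to convert consistency of the smaller pair $(t',u')$ into consistency of $t_0 \wedge u_0$ by Lemma~\ref{L:TAU_I}(2) while arranging the induction so that the measure strictly decreases; this forces me to carry the claim for not-necessarily-stable conjunctions and to track how the $F$-rank behaves along the $\tau$-closure from $t_0 \wedge u_0$ down to the stable $t' \wedge u'$.
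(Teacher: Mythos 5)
Your parts (1)--(3) follow the paper's route. The paper dismisses (1) and (2) as ``straightforward'', and your unfolding via $Ra_9$, $Ra_{10}$ and Lemma~\ref{L:F_NORMAL}(1) is exactly the intended argument. For (3) the paper uses the relation $\{(s\wedge t,s): s,t \text{ stable}\}$ together with Lemma~\ref{L:TAU_I}(1), which is your relation up to the extra side condition $s\wedge t\notin F$; that condition is harmless (indeed redundant), since (RS2)--(RS4) are all conditional on consistency of the left component anyway, and your verification of (RS2)--(RS4) via Lemma~\ref{L:F_NORMAL}(4) and the contrapositive of $Rp_{10}$, $Rp_{11}$ is correct.

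The divergence is in (4). The paper does not prove the key fact at all: it cites \cite[Lemma~4.5]{Zhang14} for ``$p\underset{\thicksim}{\sqsubset}_{RS}q$ and $p\underset{\thicksim}{\sqsubset}_{RS}r$ imply $p\underset{\thicksim}{\sqsubset}_{RS}q\wedge r$'' and then applies Lemma~\ref{L:TAU_I}(2), which is also how you conclude. You instead attack the crucial sub-claim ($t\wedge u\notin F$ when stable $t,u$ have a common consistent stable refinement $s$) directly, and you have correctly isolated the one place where your argument is not yet a proof: the $a$-successor $t_0\wedge u_0$ supplied by $Ra_6$ need not be stable, the inductive hypothesis speaks only of the stable pair $(t',u')$, and a rank-of-$t\wedge u$ induction does not visibly decrease in passing from $t_0\wedge u_0$ to $t'\wedge u'$. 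As written this is a genuine gap. It closes, however, if you abandon the rank and use the closure principle you yourself parenthetically state (and which the paper uses in its Appendix for $ET(\Sigma_B)\cap F=\emptyset$): let $\Omega$ consist of all $v\wedge w$ such that $v\stackrel{\epsilon}{\Rightarrow}_F|t'$ and $w\stackrel{\epsilon}{\Rightarrow}_F|u'$ for some stable $t',u'$ admitting a common consistent stable refinement. Then every rule instance concluding $zF$ with $z\in\Omega$ has a premise in $\Omega$: $Rp_8$, $Rp_9$ are blocked by $v,w\notin F$; $Rp_{10}$, $Rp_{11}$ by (RS4); for $Rp_{12}$ with visible $a$ ($\tau$-purity forces $z$ stable, so $z\equiv t'\wedge u'$) your construction puts the premise $t_0\wedge u_0$ in $\Omega$; for $Rp_{12}$ with $\tau$ the first step of the consistent $\tau$-path of $v$ (or $w$) gives a premise in $\Omega$; and for $Rp_{13}$ the stable derivative $t'\wedge u'$ itself is a premise in $\Omega$. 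Well-foundedness of proof trees then yields $\Omega\cap F=\emptyset$, which is exactly your claim. So your approach is sound and, once repaired this way, proves more than the paper does locally (it reconstructs the imported \cite{Zhang14} lemma); but in its present form the induction measure is not actually supplied.
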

\begin{proof}

\noindent \textbf{(1,2)} Straightforward.

\noindent \textbf{(3)}   Assume $p_1 \wedge p_2 \stackrel{\epsilon}{\Rightarrow}_F| p_{12}$.
    By Lemma~\ref{L:TAU_I}, $p_1 \stackrel{\epsilon}{\Rightarrow}_F| p_1'$ and $p_{12} \equiv p_1' \wedge p_2'$ for some $p_1',p_2'$.
    Then it suffices to show $ p_1'  \wedge p_2' \underset{\thicksim}{\sqsubset}_{RS} p_1'$.
    To this end, put ${\mathcal R}\triangleq\{(s \wedge t , s)| \;s\;\text{and}\;t \;\text{are stable}\}$.
    It is routine to verify that $\mathcal R$ is a stable ready simulation relation, as desired.

\noindent \textbf{(4)} It immediately follows from Lemma~\ref{L:TAU_I} and the fact that $p  \underset{\thicksim}{\sqsubset}_{RS} q$ and $p \underset{\thicksim}{\sqsubset}_{RS}r$ implies $p  \underset{\thicksim}{\sqsubset}_{RS} q \wedge r$ (see \cite[Lemma~4.5]{Zhang14}).
%
%
%
\end{proof}

As an immediate consequence of items (3) and (4) in previous lemma, the property below is given, which is obtained in \cite{Luttgen10}.
\[p_1 \sqsubseteq_{RS} p_2 \wedge p_3 \;\text{iff}\;p_1 \sqsubseteq_{RS} p_2\;\text{and}\; p_1 \sqsubseteq_{RS} p_3. \tag{FP}\]

As pointed out by L\"{u}ttgen and Vogler \cite{Luttgen07,Luttgen10}, this is a fundamental property of ready simulation in the presence of logic operators. Intuitively, it says that $p_1$ is an implementation  of the specification $p_2\wedge p_3$ if and only if $p_1$ implements both $p_2$ and $p_3$.
Moreover, by Lemma~\ref{L:CON_ID_I}, it is easy to see that the following equation holds.
\[p \wedge (p \vee q)=_{RS}p=_{RS}p \vee (p \wedge q)\tag{Absorption}\]

More fundamental algebraic laws are collected in the next proposition.

\begin{proposition}\label{L:ONE_OPERATOR}\hfill

\noindent (1) Commutativity: $p_1\odot p_2 =_{RS} p_2\odot p_1$ for each $\odot \in \{\Box,\parallel_A,\wedge,\vee\}$;

\noindent (2) Associativity: $(p_1 \odot p_2) \odot p_3 =_{RS} p_1 \odot (p_2 \odot p_3)$ for each $\odot \in \{\Box,\vee,\wedge\}$;

\noindent (3) Idempotency: $p\odot p =_{RS} p$ for each $\odot \in \{\Box, \wedge,\vee\}$;

\noindent (4) Unit element: $p\Box 0 =_{RS} p$, $p \vee \bot =_{RS}p$;

\noindent (5) Zero element: $p\odot \bot =_{RS} \bot$ for each $\odot \in \{\Box,\parallel_A,\wedge\}$;

\noindent (6) Identity property: $\tau.p=_{RS}p$, $\alpha.\bot=_{RS}\bot$.
\end{proposition}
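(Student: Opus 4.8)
The plan is to establish each identity as two $\sqsubseteq_{RS}$-inclusions (recall that $=_{RS}$ is the kernel of $\sqsubseteq_{RS}$), organising the listed identities by the operators involved and exploiting throughout that $\sqsubseteq_{RS}$ is a preorder together with Lemmas~\ref{L:TAU_I} and~\ref{L:CON_ID_I}. The laws mentioning only $\vee$ and $\wedge$ come essentially for free from Lemma~\ref{L:CON_ID_I}. For $\vee$, commutativity, associativity and idempotency all follow by feeding instances of part~(1) into part~(2); e.g. idempotency is $p\sqsubseteq_{RS}p\vee p$ by~(1) and $p\vee p\sqsubseteq_{RS}p$ by~(2) with $p_1=p_2=p_3=p$. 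Dually the $\wedge$-laws follow from parts~(3),(4) (equivalently from property (FP)) and transitivity, e.g. $p_1\wedge(p_2\wedge p_3)\sqsubseteq_{RS}p_i$ for each $i$ by~(3), so~(4) yields one inclusion and symmetry the other. The unit $p\vee\bot=_{RS}p$ is $p\sqsubseteq_{RS}p\vee\bot$ by~(1) and $p\vee\bot\sqsubseteq_{RS}p$ by~(2), the latter using $\bot\sqsubseteq_{RS}p$, which holds vacuously since $\bot\in F$ (Lemma~\ref{L:F_NORMAL}(5)) has no consistent stable $\epsilon$-derivative.

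The laws around $\bot$ and the $\tau$-prefix reduce to consistency bookkeeping. For the zero laws $p\odot\bot=_{RS}\bot$ with $\odot\in\{\Box,\parallel_A,\wedge\}$, Lemma~\ref{L:F_NORMAL}(3,4) gives $p\odot\bot\in F$; moreover every $\tau$-successor of $p\odot\bot$ again carries $\bot$ in the corresponding slot (inspect $Ra_4,Ra_5,Ra_7,Ra_8,Ra_{11},Ra_{12}$) and is hence in $F$, so neither side has any consistent stable $\epsilon$-derivative and both inclusions hold vacuously; the same argument with Lemma~\ref{L:F_NORMAL}(2) gives $\alpha.\bot=_{RS}\bot$. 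Finally $\tau.p=_{RS}p$ follows because $\tau.p\stackrel{\tau}{\to}p$ is the only transition of $\tau.p$, so $\tau.p\stackrel{\epsilon}{\Rightarrow}_F|r$ iff $p\stackrel{\epsilon}{\Rightarrow}_F|r$ (both conditioned on $p\notin F$, which is equivalent to $\tau.p\notin F$); the two share the same consistent stable $\epsilon$-derivatives, each matched by itself through the identity relation, which is trivially a stable ready simulation.

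The substantial work concerns $\Box$ and $\parallel_A$. For commutativity of both, associativity of $\Box$, and the unit $p\Box0$, I would first apply Lemma~\ref{L:TAU_I}(1) to strip the outer $\stackrel{\epsilon}{\Rightarrow}_F|$, reducing each claim to the stable case: a consistent stable $\epsilon$-derivative of $p_1\Box p_2$ has the form $p_1'\Box p_2'$ with $p_i\stackrel{\epsilon}{\Rightarrow}_F|p_i'$, and Lemma~\ref{L:TAU_I}(2) supplies the matching derivative $p_2'\Box p_1'$ of $p_2\Box p_1$ (and similarly the regrouped or $0$-padded derivatives for the remaining laws). It then remains to exhibit stable ready simulations between these stable combinations, for which I would take relations such as $\{(s\Box t,t\Box s):s,t\ \text{stable}\}\cup Id$ and verify RS1--RS4. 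Stability is inherited from $s,t$ via $Ra_4,Ra_5$; RS2 is symmetric by Lemma~\ref{L:F_NORMAL}(3); and RS3/RS4 reduce to the fact that, for stable $s,t$, rules $Ra_2,Ra_3$ (resp.\ $Ra_{13}$--$Ra_{15}$ for $\parallel_A$) make the enabled actions and the $a$-successors of $s\Box t$ coincide with those of $t\Box s$, so the matched successors are literally identical and lie in $Id$.

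The delicate case, which I expect to be the main obstacle, is idempotency. For $\wedge$ it still goes through, but only because of the inconsistency predicate: Lemma~\ref{L:TAU_I}(1) yields derivatives $p_1'\wedge p_2'$ with independent stable consistent components, yet a \emph{consistent} stable conjunction must satisfy $\mathcal{I}(p_1')=\mathcal{I}(p_2')$ by rules $Rp_{10},Rp_{11}$, which is exactly what RS4 needs in order to conclude $p_1'\wedge p_2'\underset{\thicksim}{\sqsubset}_{RS}p_1'$ (the converse $p\sqsubseteq_{RS}p\wedge p$ being immediate from Lemma~\ref{L:CON_ID_I}(4) and reflexivity). For $\Box$, however, the same reduction produces $p_1'\Box p_2'$ whose two independently-evolving copies need \emph{not} have equal ready sets, and no predicate rule excludes such combinations; matching $p_1'\Box p_2'$ against a single stable derivative of $p$ while respecting RS4 is therefore the crux, and it is here that the consistency and stability hypotheses together with $\tau$-purity (Theorem~\ref{L:LLTS}) must be deployed most carefully. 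Once the required stable ready simulations are secured, the two inclusions combine to the stated $=_{RS}$ identities.
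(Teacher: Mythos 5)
Your strategy coincides with the paper's own proof on everything the paper actually writes down: the paper proves only the commutativity laws, deriving the $\vee$- and $\wedge$-cases from Lemma~\ref{L:CON_ID_I} and handling $\Box,\parallel_A$ by stripping with Lemma~\ref{L:TAU_I} and checking that $\{(p\odot q,q\odot p): p,q\ \text{stable}\}\cup Id$ is a stable ready simulation --- exactly your relation --- and then ``leaves the other laws to the reader.'' Your treatment of the laws the paper omits (vacuous inclusions for the $\bot$-laws via Lemma~\ref{L:F_NORMAL}, the identity relation for $\tau.p=_{RS}p$, $0$-padding for the unit law, and Lemma~\ref{L:CON_ID_I}(1)--(4) for the remaining $\vee$/$\wedge$ laws) is correct.

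The one place you stop, idempotency of $\Box$, is a genuine gap in your write-up --- but your suspicion that it is the crux is exactly right, and in fact the obstacle you describe cannot be overcome: the law $p\Box p=_{RS}p$ is \emph{false} as stated. Take $p\equiv a.0\vee b.0$ with $a\neq b$. By $Ra_4$, $Ra_5$, $Ra_9$, $Ra_{10}$ the two copies of $p$ resolve their internal choices independently, so $p\Box p\stackrel{\tau}{\rightarrow}a.0\Box(a.0\vee b.0)\stackrel{\tau}{\rightarrow}a.0\Box b.0$, and all states on this path are consistent by Lemma~\ref{L:F_NORMAL}, whence $p\Box p\stackrel{\epsilon}{\Rightarrow}_F|\,a.0\Box b.0$ with ${\mathcal I}(a.0\Box b.0)=\{a,b\}$. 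The only consistent stable $\epsilon$-derivatives of $p$ are $a.0$ and $b.0$, with ready sets $\{a\}$ and $\{b\}$, so (RS4) forbids $a.0\Box b.0\underset{\thicksim}{\sqsubset}_{RS}q'$ for either candidate $q'$, and $p\Box p\not\sqsubseteq_{RS}p$. This is precisely the desynchronised derivative $p_1'\Box p_2'$ with $p_1'\not\equiv p_2'$ that you worried about. The law (and axiom $EC3$) survives only under a restriction that prevents the two copies from diverging --- e.g.\ for stable $p$, where Lemma~\ref{L:TAU_I} forces $p_1'\equiv p_2'\equiv p$ and both sides have literally the same transitions by $Ra_2$, $Ra_3$. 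Note that the paper's own proof does not address this case either, so the defect lies in the proposition rather than in your argument; but as submitted your proof of part~(3) for $\odot=\Box$ is incomplete.
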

\begin{proof}
%
We give the proof only for Commutativity laws, the other laws are left to the reader.
Clearly Commutativity laws for $\wedge$ and $\vee$ are implied by Lemma~\ref{L:CON_ID_I}.
For $\odot \in \{\Box,\parallel_A\}$, the argument is similar to that in the proof of Lemma~\ref{L:CON_ID_I}(3), that is, by Lemma~\ref{L:TAU_I}, it is enough to check that the relation ${\mathcal R}_{\odot}$ below is a stable ready simulation relation.
\[{\mathcal R}_{\odot}\triangleq \{(p \odot q, q \odot p):p,q \;\text{are stable}\} \cup Id\]
where $Id$ is the identity relation over $T(\Sigma_{\text{CLL}_R})$.
\end{proof}

\begin{rmk}
  Due to Commutativity, Associativity, Idempotency and Absorption laws of $\wedge$ and $\vee$, modulo $=_{RS}$, the structure $< T(\Sigma_{\text{CLL}_R}),\wedge,\vee>$ is a lattice.
  In fact, such lattice is distributive by Prop.~\ref{S:DISTRIBUTIVE} given later.
  Moreover, by Lemma~\ref{L:CON_ID_I}(3) and (FP), the partial order corresponding to the lattice $< T(\Sigma_{\text{CLL}_R}),\wedge,\vee >$ indeed is $\sqsubseteq_{RS}$, that is, $p\sqsubseteq_{RS} q$ iff  $p\wedge q =_{RS}p$ for any $p,q \in T(\Sigma_{\text{CLL}_R})$.
\end{rmk}

In the following, we shall deal with a few of laws  referring to different operators in one (in)equation.
In order to show so-called distributive law, the next lemma is needed which reveals that there exist  ``canonical'' evolving paths from $p_1 \odot (p_2 \vee p_2)$ to its stable $\epsilon$-derivatives (if exist).

\begin{lemma}\label{L:DIS}
    Let $\odot\in\{\Box,\wedge,\parallel_A\}$. If $p_1 \odot( p_2 \vee p_3) \stackrel{\epsilon}{\Rightarrow}_F|p_4$  then there are $p_1'$ and $r_i(i \leq n\;\text{and}\;n>0)$ such that (1) $p_1 \odot( p_2 \vee p_3) \equiv r_0 \stackrel{\tau}{\rightarrow}_F,\dots,\stackrel{\tau}{\rightarrow}_F r_n \equiv p_4$, (2) $p_1\stackrel{\epsilon}{\Rightarrow}_F p_1'$, (3) $r_j\equiv p_1' \odot( p_2 \vee p_3)$ and  $r_{j+1}\equiv p_1' \odot p_k $ for some $j<n$ and $k\in \{2,3\}$.
\end{lemma}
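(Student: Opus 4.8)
The plan is to analyze directly the shape of the $\tau$-transitions emanating from states of the form $s \odot (p_2 \vee p_3)$, and to track how a $\tau$-path witnessing $p_1 \odot (p_2 \vee p_3) \stackrel{\epsilon}{\Rightarrow}_F| p_4$ must evolve. First I would unfold the definition of $\stackrel{\epsilon}{\Rightarrow}_F|$ to obtain a sequence $r_0 \equiv p_1 \odot (p_2 \vee p_3) \stackrel{\tau}{\rightarrow}_F r_1 \stackrel{\tau}{\rightarrow}_F \cdots \stackrel{\tau}{\rightarrow}_F r_n \equiv p_4$ in which every state is consistent and $p_4$ is stable; this immediately yields clause (1).

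The key structural observation is that, uniformly for all three operators, inspecting the SOS rules shows the only $\tau$-transitions available from $s \odot (p_2 \vee p_3)$ are either a left move $s \odot (p_2 \vee p_3) \stackrel{\tau}{\rightarrow} s' \odot (p_2 \vee p_3)$ arising from $s \stackrel{\tau}{\rightarrow} s'$ (rules $Ra_4$, $Ra_7$, $Ra_{11}$), or a resolution $s \odot (p_2 \vee p_3) \stackrel{\tau}{\rightarrow} s \odot p_k$ with $k \in \{2,3\}$ arising from $(p_2 \vee p_3) \stackrel{\tau}{\rightarrow} p_k$ (rules $Ra_5$, $Ra_8$, $Ra_{12}$, combined with $Ra_9$, $Ra_{10}$). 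Here I would note that $Ra_9$ and $Ra_{10}$ are the only rules applicable to a disjunction, so the $\tau$-successors of $p_2 \vee p_3$ are exactly $p_2$ and $p_3$, while the visible-action rules for $\odot$ ($Ra_2, Ra_3, Ra_6, Ra_{13}$--$Ra_{15}$) never produce a $\tau$-step. Consequently, as long as the disjunction has not yet been resolved, every state stays in the family $(\cdot) \odot (p_2 \vee p_3)$, and every step before the first resolution is a left move.

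I would then argue that the first resolution must occur. Since $(p_2 \vee p_3) \stackrel{\tau}{\rightarrow} p_2$ always holds, every state $s \odot (p_2 \vee p_3)$ has a $\tau$-transition and is therefore unstable; as $p_4 \equiv r_n$ is stable, it cannot lie in the family, so some step resolves the disjunction. Let $j$ be the least index with $r_{j+1} \not\equiv (\cdot) \odot (p_2 \vee p_3)$; by the structural observation $r_j \equiv s_j \odot (p_2 \vee p_3)$ and $r_{j+1} \equiv s_j \odot p_k$, which gives clause (3) with $p_1' \triangleq s_j$, and $j < n$. The same instability argument applied to $r_0$ yields $n > 0$. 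Finally, for clause (2), the prefix $r_0, \dots, r_j$ consists of left moves, so $p_1 \equiv s_0 \stackrel{\tau}{\rightarrow} s_1 \stackrel{\tau}{\rightarrow} \cdots \stackrel{\tau}{\rightarrow} s_j \equiv p_1'$; since each $r_i \notin F$ for $i \leq j$, the contrapositives of Lemma~\ref{L:F_NORMAL}(3),(4) force each $s_i \notin F$, whence $p_1 \stackrel{\epsilon}{\Rightarrow}_F p_1'$.

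The main obstacle is purely the bookkeeping of this transition analysis: I must verify, once and for all over $\odot \in \{\Box, \wedge, \parallel_A\}$, that no SOS rule other than the left-$\tau$ and disjunction-resolution rules can fire from $s \odot (p_2 \vee p_3)$, and then set up the ``first resolution'' index cleanly so that the in-family prefix is isolated. Once this case analysis is in place, the remaining steps are routine applications of the consistency-propagation facts recorded in Lemma~\ref{L:F_NORMAL}.
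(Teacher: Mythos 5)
Your proof is correct and follows essentially the same route as the paper: the paper observes that $p_2 \vee p_3 \stackrel{\tau}{\rightarrow}$ forces $m>0$ and then declares the rest ``routine by induction on $m$'', and your SOS case analysis (left moves vs.\ disjunction resolution) together with the least-resolution-index argument is precisely that routine content made explicit, merely organized around a minimal index instead of an induction on the path length. The details check out, including the use of the contrapositives of Lemma~\ref{L:F_NORMAL}(3),(4) to propagate consistency down to the left component.
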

\begin{proof}
Since $p_1 \odot( p_2 \vee p_3) \stackrel{\epsilon}{\Rightarrow}_F|p_4$ and $p_2 \vee p_3 \stackrel{\tau}{\rightarrow}$, $p_1 \odot( p_2 \vee p_3) (\stackrel{\tau}{\rightarrow}_F)^m|p_4$ for some $m > 0$.
The rest of the proof is routine by induction on $m$.
\end{proof}

The following Distributive law with $\odot = \wedge$ was first proved in \cite{Luttgen10}.

\begin{proposition}[Distributive]\label{S:DISTRIBUTIVE}
       $p_1 \odot( p_2 \vee p_3) =_{RS} (p_1 \odot p_2) \vee (p_1 \odot p_3)$ for each  $\odot \in \{ \Box, \parallel_A,\wedge \}$.
\end{proposition}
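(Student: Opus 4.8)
The plan is to establish the two inequalities $p_1\odot(p_2\vee p_3)\sqsubseteq_{RS}(p_1\odot p_2)\vee(p_1\odot p_3)$ and $(p_1\odot p_2)\vee(p_1\odot p_3)\sqsubseteq_{RS}p_1\odot(p_2\vee p_3)$ separately, since $=_{RS}$ is by definition the kernel of $\sqsubseteq_{RS}$. The first is routine and purely algebraic; the second requires analysis of the transition structure and is where Lemma~\ref{L:DIS} is used.

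For the easy inequality I would argue as follows. By Lemma~\ref{L:CON_ID_I}(1), $p_k\sqsubseteq_{RS}p_2\vee p_3$ for $k=2,3$, so applying the precongruence Theorem~\ref{L:precongruence} to the context $p_1\odot X$ gives $p_1\odot p_k\sqsubseteq_{RS}p_1\odot(p_2\vee p_3)$ for $k=2,3$. Feeding these two facts into the join property Lemma~\ref{L:CON_ID_I}(2) yields $(p_1\odot p_2)\vee(p_1\odot p_3)\sqsubseteq_{RS}p_1\odot(p_2\vee p_3)$ at once, with no reasoning about derivatives required.

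For the hard inequality I would unfold the definition of $\sqsubseteq_{RS}$ directly: given an arbitrary stable consistent derivative $p_1\odot(p_2\vee p_3)\stackrel{\epsilon}{\Rightarrow}_F|p_4$, I must exhibit a matching derivative of $(p_1\odot p_2)\vee(p_1\odot p_3)$. Lemma~\ref{L:DIS} supplies the canonical path: there are $p_1'$ and $k\in\{2,3\}$ with $p_1\stackrel{\epsilon}{\Rightarrow}_F p_1'$ and $p_1'\odot p_k\stackrel{\epsilon}{\Rightarrow}_F|p_4$, the latter being the tail of the path after the single disjunction-resolving step $p_1'\odot(p_2\vee p_3)\stackrel{\tau}{\rightarrow}_F p_1'\odot p_k$. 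The target side reaches the same $p_k$-branch in one step, namely $(p_1\odot p_2)\vee(p_1\odot p_3)\stackrel{\tau}{\rightarrow}p_1\odot p_k$ via rule $Ra_9$ or $Ra_{10}$, so it suffices to prove $p_1\odot p_k\stackrel{\epsilon}{\Rightarrow}_F|p_4$ and then match $p_4$ with itself using reflexivity of $\underset{\thicksim}{\sqsubset}_{RS}$. To obtain $p_1\odot p_k\stackrel{\epsilon}{\Rightarrow}_F|p_4$ I would decompose $p_1'\odot p_k\stackrel{\epsilon}{\Rightarrow}_F|p_4$ by Lemma~\ref{L:TAU_I}(1) into $p_1'\stackrel{\epsilon}{\Rightarrow}_F|p_1''$ and $p_k\stackrel{\epsilon}{\Rightarrow}_F|p_k'$ with $p_4\equiv p_1''\odot p_k'$; concatenating with $p_1\stackrel{\epsilon}{\Rightarrow}_F p_1'$ gives $p_1\stackrel{\epsilon}{\Rightarrow}_F|p_1''$, and then Lemma~\ref{L:TAU_I}(2) recomposes these into $p_1\odot p_k\stackrel{\epsilon}{\Rightarrow}_F|p_1''\odot p_k'\equiv p_4$. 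Finally the prefixed $\tau$-step is $F$-respecting: $p_1\odot p_k\notin F$ because it is the source of an $\stackrel{\epsilon}{\Rightarrow}_F|$-path, whence $(p_1\odot p_2)\vee(p_1\odot p_3)\notin F$ by Lemma~\ref{L:F_NORMAL}(1), so indeed $(p_1\odot p_2)\vee(p_1\odot p_3)\stackrel{\epsilon}{\Rightarrow}_F|p_4$.

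The main obstacle is the case $\odot=\wedge$, where consistency must be tracked by hand: conjunction does not preserve consistency componentwise (Rules $Rp_{10}$--$Rp_{13}$), so a naive lifting of $p_1\stackrel{\epsilon}{\Rightarrow}_F p_1'$ through $\wedge$ could pass through inconsistent intermediate states, and Lemma~\ref{L:TAU_I}(2) indeed carries the extra proviso $p_1''\wedge p_k'\notin F$ for conjunction. The argument still goes through because this proviso is automatically met: the common target $p_4\equiv p_1''\wedge p_k'$ is, by hypothesis, a stable consistent state, so $p_4\notin F$ is exactly the side condition required. Organizing the recomposition so that it lands on the already-given consistent state $p_4$ (rather than first lifting and only afterwards resolving the disjunction) is the one point that needs care.
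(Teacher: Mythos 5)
Your proposal is correct and follows essentially the same route as the paper: the easy direction via Lemma~\ref{L:CON_ID_I}(1)(2) and Theorem~\ref{L:precongruence}, and the hard direction by taking an arbitrary $p_1\odot(p_2\vee p_3)\stackrel{\epsilon}{\Rightarrow}_F|p_4$, invoking the canonical path of Lemma~\ref{L:DIS}, and showing the disjunction reaches the very same $p_4$. The paper compresses your decomposition/recomposition step into ``by Theorem~\ref{L:LLTS} and Lemma~\ref{L:DIS}, it is easy to get''; your explicit use of Lemma~\ref{L:TAU_I}, including the observation that the consistency proviso for $\wedge$ is discharged by $p_4\notin F$, is just a careful elaboration of that same step.
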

\begin{proof}
    The inequation $(p_1 \odot p_2 )\vee (p_1 \odot p_3) \sqsubseteq_{RS} p_1 \odot (p_2 \vee p_3)$ immediately follows from Theorem~\ref{L:precongruence} and Lemma~\ref{L:CON_ID_I}(1)(2).
    For the converse inequation, suppose $p_1 \odot( p_2 \vee p_3) \stackrel{\epsilon}{\Rightarrow}_F| p_4$.
    Then by Theorem~\ref{L:LLTS} and Lemma~\ref{L:DIS}, it is easy to get   $(p_1 \odot p_2) \vee (p_1 \odot p_3) \stackrel{\epsilon}{\Rightarrow}_F| p_4$.
    Hence  $p_1 \odot( p_2 \vee p_3) \sqsubseteq_{RS} (p_1 \odot p_2) \vee (p_1 \odot p_3)$.
\end{proof}

Since $< T(\Sigma_{\text{CLL}_R}),\wedge,\vee >$ is a lattice, it immediately follows from Prop.~\ref{S:DISTRIBUTIVE} with $\odot = \wedge$ that $p_1 \vee (p_2 \wedge p_3)=_{RS}(p_1 \vee p_2)\wedge (p_1 \vee p_3)$.

\begin{proposition}\label{S:SPECIAL_I}
$\alpha.p_1 \Box \alpha.p_2 \sqsubseteq_{RS} \alpha.(p_1 \vee p_2) $ for each $\alpha \in Act_{\tau}$.
\end{proposition}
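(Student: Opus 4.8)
The plan is to unfold the definition of $\sqsubseteq_{RS}$ and reduce the claim to producing a single stable ready simulation relation. Fix $\alpha=a\in Act$. Because the prefix is visible, $a.p_1\Box a.p_2$ is stable: by $Ra_4,Ra_5$ a $\tau$-move of $x_1\Box x_2$ can only arise from a $\tau$-move of a summand, and neither $a.p_1$ nor $a.p_2$ has one. If $a.p_1\Box a.p_2\in F$ the required containment holds vacuously, since then there is no $p'$ with $a.p_1\Box a.p_2\stackrel{\epsilon}{\Rightarrow}_F| p'$. Otherwise, by Lemma~\ref{L:F_NORMAL}(2),(3) we have $p_1,p_2\notin F$, and the only stable consistent $\epsilon$-derivative of $a.p_1\Box a.p_2$ is itself. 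On the right, $p_1\vee p_2\notin F$ by Lemma~\ref{L:F_NORMAL}(1), so $a.(p_1\vee p_2)$ is stable and consistent and I may take $q'\equiv a.(p_1\vee p_2)$. Thus it suffices to prove $a.p_1\Box a.p_2\underset{\thicksim}{\sqsubset}_{RS}a.(p_1\vee p_2)$.

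To this end I would exhibit the relation $\mathcal{R}\triangleq\{(a.p_1\Box a.p_2,\,a.(p_1\vee p_2)):p_1,p_2\notin F\}\cup Id$, where $Id$ is the identity relation on stable processes, and verify that it is a stable ready simulation. Conditions RS1, RS2 and RS4 are immediate for the distinguished pairs: both components are stable and consistent, and each has ready set exactly $\{a\}$ (using $Ra_2,Ra_3$ on the left and $Ra_1$ on the right). The pairs in $Id$ satisfy all four clauses trivially, since the target of any $\stackrel{a}{\Rightarrow}_F|$ is stable and so its diagonal pair again lies in $Id\subseteq\mathcal{R}$.

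The crux, and the step I expect to be the main obstacle, is RS3 for a distinguished pair. Suppose $a.p_1\Box a.p_2\stackrel{a}{\Rightarrow}_F| p'$. Unfolding this, the single visible step must be $a.p_1\Box a.p_2\stackrel{a}{\rightarrow}_F p_k$ for some $k\in\{1,2\}$ with $p_k\notin F$, followed by $p_k\stackrel{\epsilon}{\Rightarrow}_F| p'$. I then match it on the right by first firing $a.(p_1\vee p_2)\stackrel{a}{\rightarrow}p_1\vee p_2$ and then exploiting the internal branching of $\vee$ (rule $Ra_9$ or $Ra_{10}$) to select the chosen summand, $p_1\vee p_2\stackrel{\tau}{\rightarrow}p_k$. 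The delicate point is preserving the $F$-freeness demanded by $\stackrel{a}{\Rightarrow}_F|$ along this path: both $a.(p_1\vee p_2)\notin F$ and $p_1\vee p_2\notin F$ hold by Lemma~\ref{L:F_NORMAL}(1), precisely because $p_k\notin F$, so the composite $a.(p_1\vee p_2)\stackrel{a}{\Rightarrow}_F p_k\stackrel{\epsilon}{\Rightarrow}_F| p'$ is a legitimate consistent derivation witnessing $a.(p_1\vee p_2)\stackrel{a}{\Rightarrow}_F| p'$. Taking $q'\equiv p'$ closes the clause since $(p',p')\in Id$. It is this interplay, namely that $\vee$ resolves internally to either branch while $\Box$ retains a combined menu, together with the consistency bookkeeping, that makes the containment hold; note in particular that it forces the left-hand menu to be exactly $\{a\}$, which is what renders RS4 compatible with the refinement.
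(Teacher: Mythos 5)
Your argument for the case $\alpha=a\in Act$ is correct, and it takes a genuinely different route from the paper's: you build the stable ready simulation $\{(a.p_1\Box a.p_2,\,a.(p_1\vee p_2))\}\cup Id$ by hand and discharge (RS1)--(RS4) directly, handling the inconsistent case vacuously and using Lemma~\ref{L:F_NORMAL} for the consistency bookkeeping; these steps all check out, and the style is essentially the one the paper reserves for the harder converse direction (Prop.~\ref{S:SPECIAL}). The paper instead derives the present inequation in three lines, for all of $Act_\tau$ at once, from $p_i\sqsubseteq_{RS}p_1\vee p_2$ (Lemma~\ref{L:CON_ID_I}(1)), precongruence (Theorem~\ref{L:precongruence}) and the idempotency of $\Box$ (Prop.~\ref{L:ONE_OPERATOR}): $\alpha.p_1\Box\alpha.p_2\sqsubseteq_{RS}\alpha.(p_1\vee p_2)\Box\alpha.(p_1\vee p_2)=_{RS}\alpha.(p_1\vee p_2)$. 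Your approach buys an explicit witness relation; the paper's buys uniformity in $\alpha$ and brevity.

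The gap is that you prove strictly less than the stated proposition: you fix $\alpha=a\in Act$ at the outset and never return to $\alpha=\tau$, which the statement explicitly includes. The omission is not cosmetic, because your construction relies essentially on the visibility of the prefix --- it is what makes $a.p_1\Box a.p_2$ stable and its own unique stable $\epsilon$-derivative, and what lets you route every $\stackrel{a}{\Rightarrow}_F|$ through a single summand $p_k$. For $\alpha=\tau$ the left-hand side $\tau.p_1\Box\tau.p_2$ is not stable; its stable consistent $\epsilon$-derivatives have the shape $p_1'\Box p_2'$ with $p_i\stackrel{\epsilon}{\Rightarrow}_F|p_i'$ (cf.\ Lemma~\ref{L:TAU_I}), whose ready set is ${\mathcal I}(p_1')\cup{\mathcal I}(p_2')$, whereas $\tau.(p_1\vee p_2)$ only offers stable derivatives of a single disjunct $p_k$; so the pairing you use for (RS3)/(RS4) has no analogue (try $p_1\equiv a.0$, $p_2\equiv b.0$: the left yields $a.0\Box b.0$ with ready set $\{a,b\}$, while the right offers only $a.0$ or $b.0$). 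To cover $\tau$ you cannot extend the direct construction; you would have to argue as the paper does, via precongruence and $\Box$-idempotency, or else explicitly restrict the statement you are proving.
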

\begin{proof}
   $ p_1 \sqsubseteq_{RS} p_1 \vee p_2\;\text{and}\;p_2 \sqsubseteq_{RS} p_1 \vee p_2$  \qquad\qquad  (by Lemma~\ref{L:CON_ID_I}(1))\\
 $\Rightarrow \alpha.p_1 \sqsubseteq_{RS} \alpha.(p_1 \vee p_2)\;\text{and}\;\alpha.p_2 \sqsubseteq_{RS} \alpha.(p_1 \vee p_2)$ \quad (by Theorem~\ref{L:precongruence})\\
 $\Rightarrow \alpha.p_1 \Box \alpha.p_2 \sqsubseteq_{RS} \alpha.(p_1 \vee p_2)$ \qquad \qquad  (by Theorem~\ref{L:precongruence} and Prop.~\ref{L:ONE_OPERATOR}).
\end{proof}

A natural problem arises at this point, that is, whether the inequation below holds
\[\alpha.(p_1 \vee p_2) \sqsubseteq_{RS} \alpha.p_1 \Box \alpha.p_2. \tag{DS}\]
The answer is negative by considering $p_1 \equiv \bot$ and $p_2 \equiv 0$.
By Lemma~\ref{L:F_NORMAL}, $a.(\bot \vee 0)\notin F $ and $a.\bot \Box a.0 \in F $.
Hence $a.(\bot \vee 0) \not\sqsubseteq_{RS} a.\bot \Box a.0$.
However we can give a necessary and sufficient condition for the inequation (DS) with $\alpha \in  Act$ to be true.
To this end, we introduce the notion

\begin{mydefn}[Uniform w.r.t $F $]
  Two processes $p$ and $q$ are uniform w.r.t $F $ if $p\in F $ iff $q \in F $.
\end{mydefn}

\begin{proposition}\label{S:SPECIAL}
For each $a \in Act$,
 $a.(p_1 \vee p_2) \sqsubseteq_{RS} a.p_1 \Box a.p_2 $ iff
 $p_1$ and $p_2$ are uniform w.r.t $F $.
\end{proposition}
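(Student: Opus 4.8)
The plan is to analyse directly the transition structure and consistency status of the two stable processes $P \equiv a.(p_1 \vee p_2)$ and $Q \equiv a.p_1 \Box a.p_2$. Both are stable, since $a \in Act$ and no prefixed summand can perform a $\tau$-move; their only moves are $P \stackrel{a}{\rightarrow} p_1 \vee p_2$ (with $p_1 \vee p_2 \stackrel{\tau}{\rightarrow} p_1$ and $p_1 \vee p_2 \stackrel{\tau}{\rightarrow} p_2$ by $Ra_9, Ra_{10}$) and $Q \stackrel{a}{\rightarrow} p_1$, $Q \stackrel{a}{\rightarrow} p_2$ (by $Ra_2, Ra_3$). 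The decisive asymmetry, extracted from Lemma~\ref{L:F_NORMAL}, is that $P \in F$ iff $p_1 \vee p_2 \in F$ iff both $p_1, p_2 \in F$, whereas $Q \in F$ iff $p_1 \in F$ or $p_2 \in F$. Since $P$ and $Q$ are stable, their only candidate $\stackrel{\epsilon}{\Rightarrow}_F|$-derivatives are $P$ and $Q$ themselves, and each is available precisely when the corresponding process is consistent.

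For the \textbf{only if} direction I would argue by contraposition. Suppose $p_1, p_2$ are not uniform w.r.t.\ $F$, say (using commutativity of $\Box$ and $\vee$) $p_1 \in F$ and $p_2 \notin F$. Then $p_1 \vee p_2 \notin F$, so $P \notin F$ and $P \stackrel{\epsilon}{\Rightarrow}_F| P$; but $p_1 \in F$ forces $Q \in F$, so $Q$ has no $\stackrel{\epsilon}{\Rightarrow}_F|$-derivative at all. Hence the defining clause of $\sqsubseteq_{RS}$ fails for the derivative $P$ of $P$, giving $P \not\sqsubseteq_{RS} Q$.

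For the \textbf{if} direction, assume $p_1 \in F \Leftrightarrow p_2 \in F$ and split into two cases. If $p_1, p_2 \in F$ then $P \in F$, so $P$ has no $\stackrel{\epsilon}{\Rightarrow}_F|$-derivative and $P \sqsubseteq_{RS} Q$ holds vacuously. If $p_1, p_2 \notin F$, then $P, Q \notin F$ and the only obligation is $P \underset{\thicksim}{\sqsubset}_{RS} Q$; for this I would exhibit $\mathcal{R} \triangleq \{(P,Q)\} \cup \{(r,r): r \text{ is stable}\}$ and verify (RS1)--(RS4). Clauses (RS1), (RS2), (RS4) are immediate (both are stable, consistent, and $\mathcal{I}(P) = \mathcal{I}(Q) = \{a\}$), and the identity part of $\mathcal{R}$ is trivially a stable ready simulation.

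The main obstacle — and the only nontrivial verification — is clause (RS3) for the pair $(P,Q)$: one must show every stable consistent $a$-derivative of $P$ is also one of $Q$. The key point is that any $P \stackrel{a}{\Rightarrow}_F| P'$ factors as $P \stackrel{a}{\rightarrow}_F (p_1 \vee p_2) \stackrel{\tau}{\rightarrow}_F p_k \stackrel{\epsilon}{\Rightarrow}_F| P'$ for some $k \in \{1,2\}$, because $p_1 \vee p_2$ is not stable and its only $\tau$-successors are $p_1$ and $p_2$; since $Q \stackrel{a}{\rightarrow} p_k$ with $p_k, Q \notin F$, we obtain $Q \stackrel{a}{\Rightarrow}_F| P'$ with $(P', P') \in \mathcal{R}$. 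Thus each derivative of $P$ is matched by the identical derivative of $Q$, establishing (RS3) and hence $P \underset{\thicksim}{\sqsubset}_{RS} Q$, which yields $P \sqsubseteq_{RS} Q$.
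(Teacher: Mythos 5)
Your proposal is correct and follows essentially the same route as the paper: the \textbf{only if} direction is the same contrapositive argument via the $F$-characterizations of Lemma~\ref{L:F_NORMAL}, and the \textbf{if} direction exhibits exactly the relation $\{(a.(p_1\vee p_2),\,a.p_1\Box a.p_2)\}\cup Id$ and verifies (RS1)--(RS4), with your explicit factoring of $a$-derivatives through $p_1\vee p_2\stackrel{\tau}{\rightarrow}p_k$ merely spelling out what the paper dismisses as ``easy to see.'' Your preliminary case split on whether $p_1,p_2\in F$ is an equivalent, slightly more explicit way of reducing $\sqsubseteq_{RS}$ to $\underset{\thicksim}{\sqsubset}_{RS}$ for these stable processes.
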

\begin{proof}
\noindent \textbf{(Left implies Right)} Suppose $p_1$ and $p_2$ are not uniform w.r.t $F $. W.l.o.g, assume that $p_1 \in F $ and $p_2\notin F $. By Lemma~\ref{L:F_NORMAL}, we get $a.(p_1 \vee p_2) \notin F $ and $a.p_1 \Box a.p_2 \in F $. Hence $a.(p_1 \vee p_2) \not\sqsubseteq_{RS} a.p_1 \Box a.p_2 $.

\noindent \textbf{(Right implies Left)}
Since  $a \in Act$,  it suffices to prove  $a.(p_1 \vee p_2) \underset{\thicksim}{\sqsubset}_{RS}  a.p_1 \Box a.p_2 $.
Put \[{\mathcal R}\triangleq\{(a.(p_1 \vee p_2), a.p_1 \Box a.p_2)\} \cup Id.\]
We will show that $\mathcal R$ is a stable ready simulation relation.
It is obvious that (RS1-4) hold for each pair in $Id$.
In the following, we deal with the pair $(a.(p_1 \vee p_2),a.p_1 \Box a.p_2)$.
Clearly, such pair satisfies (RS1) and (RS4) .

\textbf{(RS2)} Suppose $a.p_1 \Box a.p_2 \in F $. By Lemma~\ref{L:F_NORMAL}, $p_i \in F $ for some $i\in \{1,2\}$. Then, since $p_1$ and $p_2$ are uniform w.r.t $F $, we get $p_1,p_2 \in F $. So  $a.(p_1 \vee p_2) \in F $.

\textbf{(RS3)} Suppose $a.(p_1 \vee p_2) \stackrel{a}{\Rightarrow}_F|r$.
It is easy to see that $a.p_1 \Box a.p_2 \stackrel{a}{\rightarrow} \stackrel{\epsilon}{\Rightarrow}_F| r$.
Moreover $a.p_1 \Box a.p_2 \notin F $ by $a.(p_1 \vee p_2)\notin F $ and (RS2).
So $a.p_1 \Box a.p_2 \stackrel{a}{\Rightarrow}_F| r$.
\end{proof}

Notice that the situation is different if $\alpha = \tau$.
In such case, the inequation (DS) does not always hold even if $p_1$ and $p_2$ are  uniform w.r.t $F $.
As a simple example, consider $p_1\equiv a.0$ and $p_2\equiv b.0$ with $a \neq b$.
Clearly, they are uniform w.r.t $F$ because of $p_1,p_2 \notin F$.
Moreover, $\tau.(a.0\vee b.0)\stackrel{\epsilon}{\Rightarrow}_F|a.0$,  and $a.0 \Box b.0$ is the unique process such that $\tau.a.0\Box \tau.b.0 \stackrel{\epsilon}{\Rightarrow}_F|a.0\Box b.0$.
But $a.0 \not \underset{\thicksim}{\sqsubset}_{RS} a.0\Box b.0$ due to $a.0 \notin F $ and ${\mathcal I}(a.0)\neq {\mathcal I}(a.0\Box b.0)$.
Thus $\tau.(a.0\vee b.0) \not\sqsubseteq_{RS} \tau.a.0 \Box \tau.b.0$.

Given the key role that general external choice $\underset{i<n}\square p_i$ plays in the axiomatic system $AX_{\text{CLL}}$, we need to discuss this operator in some detail.
We begin with giving the following simple result, of which we omit the straightforward proof.

\begin{lemma}\label{L:BIG_SQUARE}
Let $n\geq 0$ and $\{a_i | i<n \}\subseteq Act$.

\noindent  (1) $ \underset{i<n}\square p_i \in F $ iff $p_k\in F $ for some $k<n$.

\noindent  (2)  $\underset{i<n}\square a_i.p_i \stackrel{a_i}{\rightarrow}  p_i$ for each $i<n$.

\noindent  (3) If $\underset{i<n}\square a_i.p_i \stackrel{\alpha}{\rightarrow}  s$ then $\alpha = a_k$ and $s \equiv p_k$ for some $k<n$.
\end{lemma}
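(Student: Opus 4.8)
The plan is to prove all three items by induction on $n$, following the recursive definition of $\underset{i<n}{\square}p_i$, relying on the inconsistency laws of Lemma~\ref{L:F_NORMAL} for item (1) and on a careful reading of the SOS rules for external choice for items (2) and (3).

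For item (1), I would induct on $n$. The base cases are immediate: $\underset{i<0}{\square}p_i \equiv 0 \notin F$ by Lemma~\ref{L:F_NORMAL}(5) (and there is no $k<0$), while $\underset{i<1}{\square}p_i \equiv p_0$ makes the equivalence trivial. For the step, writing $\underset{i<k+1}{\square}p_i \equiv (\underset{i<k}{\square}p_i)\Box p_k$ and applying Lemma~\ref{L:F_NORMAL}(3) with $\odot = \Box$, membership in $F$ reduces to the disjunction ``$\underset{i<k}{\square}p_i \in F$ or $p_k \in F$''; the induction hypothesis rewrites the first disjunct as ``$p_j\in F$ for some $j<k$'', and combining the two yields the claim for $n=k+1$.

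For items (2) and (3), the essential preliminary observation, which I would establish first, is that $\underset{i<n}{\square}a_i.p_i$ is stable whenever every $a_i \in Act$. Each prefix $a_i.p_i$ admits only the transition $a_i.p_i \stackrel{a_i}{\rightarrow} p_i$ via $Ra_1$ and hence emits no $\tau$; since the only rules producing a $\tau$-transition of $t_1 \Box t_2$ are $Ra_4$ and $Ra_5$, both of which require a $\tau$-move of a component, a short induction shows that the whole external choice is stable. This stability is precisely what discharges the negative premises $x_2 \not\stackrel{\tau}{\rightarrow}$ and $x_1 \not\stackrel{\tau}{\rightarrow}$ occurring in rules $Ra_2$ and $Ra_3$.

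With stability in hand, item (2) follows by induction on $n$: in the step $\underset{i<k+1}{\square}a_i.p_i \equiv (\underset{i<k}{\square}a_i.p_i)\Box a_k.p_k$, the induction hypothesis gives $\underset{i<k}{\square}a_i.p_i \stackrel{a_i}{\rightarrow} p_i$ for $i<k$, which together with the stability of $a_k.p_k$ and rule $Ra_2$ yields the transition for $i<k$, while $a_k.p_k \stackrel{a_k}{\rightarrow} p_k$ together with the stability of $\underset{i<k}{\square}a_i.p_i$ and rule $Ra_3$ yields the transition for $i=k$. Item (3) is the converse direction of the same induction: since both components are stable, no transition of $\underset{i<k+1}{\square}a_i.p_i$ can come from $Ra_4$ or $Ra_5$, so any transition arises from $Ra_2$ (handled by the induction hypothesis) or from $Ra_3$ (handled by $Ra_1$, which forces $\alpha = a_k$ and $s\equiv p_k$). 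The only point requiring genuine care, and the step I expect to be the main obstacle, is the stability observation, since it is what both licenses the negative-premise rules and rules out spurious $\tau$-transitions; the remainder is a routine unwinding of the operational rules.
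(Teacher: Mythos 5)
Your proof is correct and is precisely the routine induction on $n$ that the paper has in mind when it states the lemma with the remark that the straightforward proof is omitted; the stability observation you isolate (needed to discharge the negative premises of $Ra_2$ and $Ra_3$ and to exclude $Ra_4$/$Ra_5$) is indeed the one non-trivial point, and you handle it correctly. No gaps.
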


\begin{proposition}\label{L:MULTIPLE_VI}
Let $a_i, b_j \in Act$ for each $i<n$ and $j<m$.

\noindent (1) If $\{a_i|i< n\}\neq \{b_j|j< m\}$ then $\underset{i< n}{\square}a_i.p_i \wedge \underset{j< m }{\square} b_j.q_j =_{RS} \bot$.

\noindent (2) $\underset{i< n} {\square}a_i.(p_i \wedge q_i) \sqsubseteq_{RS} \underset{i< n}{\square}a_i.p_i \wedge \underset{i< n}{\square}a_i.q_i$.
\end{proposition}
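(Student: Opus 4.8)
The plan is to treat the two items separately. Item (1) is essentially a statement about ready sets, while item (2) calls for an explicit stable ready simulation whose only delicate point is the inconsistency clause (RS2).

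For item (1), I would first note that every prefix $a_i,b_j$ is visible, so by the SOS rules both $\underset{i<n}{\square}a_i.p_i$ and $\underset{j<m}{\square}b_j.q_j$ are stable, and by Lemma~\ref{L:BIG_SQUARE}(2) their ready sets are exactly $\{a_i\mid i<n\}$ and $\{b_j\mid j<m\}$; consequently, via $Ra_6$--$Ra_8$, the conjunction is stable too. Since the two ready sets differ, some visible action belongs to one but not the other, so exactly one of the mismatch rules $Rp_{10}$, $Rp_{11}$ applies and places the conjunction in $F$. To finish I would use the general fact that any $p\in F$ satisfies $p=_{RS}\bot$: because $\bot\in F$ and $p\in F$, neither has an $F$-free stable $\epsilon$-derivative, so both defining implications of $\sqsubseteq_{RS}$ hold vacuously in each direction.

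For item (2), write $L\triangleq\underset{i<n}{\square}a_i.(p_i\wedge q_i)$ and $R\triangleq\underset{i<n}{\square}a_i.p_i\wedge\underset{i<n}{\square}a_i.q_i$; both are stable. As $L$ is stable, $L\sqsubseteq_{RS}R$ reduces, in the case $L\notin F$, to exhibiting a stable $F$-free derivative of $R$ that stably ready-simulates $L$, and is vacuous when $L\in F$; so it suffices to show $\mathcal{R}\triangleq\{(L,R)\}\cup Id$ is a stable ready simulation. For the pair $(L,R)$, clause (RS1) is immediate and (RS2)--(RS4) are vacuous when $L\in F$, so the work is confined to $L\notin F$. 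In that case Lemma~\ref{L:BIG_SQUARE}(1) and Lemma~\ref{L:F_NORMAL}(2,4) yield $p_i\wedge q_i\notin F$, hence $p_i,q_i\notin F$, for every $i<n$; this gives (RS4) since $L$ and $R$ both have ready set $\{a_i\mid i<n\}$, and gives (RS3) by matching any move $L\stackrel{a_k}{\rightarrow}p_k\wedge q_k$ (Lemma~\ref{L:BIG_SQUARE}(3)) with the diagonal move $R\stackrel{a_k}{\rightarrow}p_k\wedge q_k$, the resulting pair lying in $Id$.

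The main obstacle is clause (RS2), i.e. proving $R\notin F$ whenever $L\notin F$, which is exactly where the cross terms $p_i\wedge q_j$ with $a_i=a_j$ but $i\neq j$ threaten trouble. The plan is to rule out each predicate rule that could conclude $RF$: rules $Rp_8,Rp_9$ are excluded since both conjuncts are consistent (from $p_i,q_i\notin F$ together with Lemma~\ref{L:BIG_SQUARE}(1) and Lemma~\ref{L:F_NORMAL}(2)); the mismatch rules $Rp_{10},Rp_{11}$ cannot fire because the two conjuncts share the common ready set $\{a_i\mid i<n\}$; and the backward-propagation rule $Rp_{12}$ cannot fire because for each action $a=a_k$ in that ready set the diagonal successor $p_k\wedge q_k\notin F$ witnesses an $a$-successor outside $F$, so it is never the case that all $a$-successors are inconsistent (the rule $Rp_{13}$ degenerates for the stable $R$ and yields nothing). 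I expect this diagonal-shielding argument to be the delicate step, and it is precisely the point at which the missing injectivity hypothesis — the one that $ECC3$ would require for the reverse inequation — is not needed here.
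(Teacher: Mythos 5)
Your proof is correct, but for part (2) it takes a genuinely different route from the paper's. The paper argues entirely algebraically: from Lemma~\ref{L:CON_ID_I}(3) it gets $p_i\wedge q_i\sqsubseteq_{RS}p_i$, lifts this through prefixing and external choice by precongruence (Theorem~\ref{L:precongruence} and Prop.~\ref{L:ONE_OPERATOR}) to obtain $\underset{i<n}{\square}a_i.(p_i\wedge q_i)\sqsubseteq_{RS}\underset{i<n}{\square}a_i.p_i$ and likewise for the $q_i$, and then concludes by Lemma~\ref{L:CON_ID_I}(4). That route never has to decide whether $R\in F$, because all the inconsistency bookkeeping is already packaged inside Lemma~\ref{L:CON_ID_I}; its cost is that it hides why the non-injective case is harmless. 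Your direct construction of the relation $\{(L,R)\}\cup Id$ pays that cost explicitly in clause (RS2), and your diagonal-shielding argument there is sound: the only threats to $R\notin F$ are $Rp_8$--$Rp_{13}$, and the diagonal successors $p_k\wedge q_k\notin F$ (obtained from $L\notin F$ via Lemmas~\ref{L:BIG_SQUARE}(1) and \ref{L:F_NORMAL}(2),(4)) block $Rp_{12}$ exactly as you say, while $Rp_{13}$ is excluded by the well-foundedness of proof trees in $Strip(\text{CLL}_R,M_{\text{CLL}_R})$ since the stable $R$ is its own unique stable $\epsilon$-derivative --- the same argument the paper uses in the example that follows this proposition. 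Your treatment makes visible the asymmetry with $ECC3$ (why injectivity is needed for the converse but not here), which the paper's proof does not; the paper's version is shorter and reuses infrastructure. For part (1) both proofs are the same appeal to $Rp_{10}$/$Rp_{11}$; your added observation that any $p\in F$ satisfies $p=_{RS}\bot$ (both directions of $\sqsubseteq_{RS}$ holding vacuously) is correct and is the step the paper leaves implicit in the word ``trivially''.
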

\begin{proof}
\textbf{(1)} By Rules $Rp_{10}$ and $Rp_{11}$, it holds trivially.

\noindent \textbf{(2)} If $n=0$, it is trivial because of the definition of general external choice.
Next we treat the case $n>0$.
By Lemma~\ref{L:CON_ID_I} and Theorem~\ref{L:precongruence},
$a_i.(p_i \wedge q_i) \sqsubseteq_{RS} a_i.p_i$ for each $i<n$.
Then $\underset{i< n} {\square}a_i.(p_i \wedge q_i) \sqsubseteq_{RS} \underset{i< n}{\square}a_i.p_i$ by Theorem~\ref{L:precongruence} and Prop.~\ref{L:ONE_OPERATOR}.
Similarly, we also have $\underset{i< n} {\square}a_i.(p_i \wedge q_i) \sqsubseteq_{RS} \underset{i< n}{\square}a_i.q_i$.
Hence $\underset{i< n} {\square}a_i.(p_i \wedge q_i) \sqsubseteq_{RS} \underset{i< n}{\square}a_i.p_i \wedge \underset{i< n}{\square}a_i.q_i$ by Lemma~\ref{L:CON_ID_I}.
\end{proof}

In the following, we provide an example to illustrate that it does not always hold that $\underset{i< n}{\square}a_i.p_i\wedge \underset{i< n}{\square}a_i.q_i \sqsubseteq_{RS} \underset{i< n} {\square}a_i.(p_i \wedge q_i)$.

\begin{example}
Consider process $a_0.p_0 \triangleq a.b.0$, $a_1.p_1 \triangleq a.c.0$, $a_0.q_0 \triangleq a.b.0$ and $a_1.q_1 \triangleq a.b.0$ where $c \neq b$.
Then, $\underset{i<2}\square a_i.p_i \equiv a.b.0 \Box a.c.0$,
$\underset{i<2}\square a_i.q_i \equiv a.b.0 \Box a.b.0$
and $\underset{i<2}\square a_i.(p_i \wedge q_i) \equiv a.(b.0 \wedge b.0) \Box a.(c.0 \wedge b.0)$.
Assume for contradiction that $\underset{i<2}\square a_i.p_i \wedge \underset{i<2}\square a_i.q_i \sqsubseteq_{RS} \underset{i<2}\square a_i.(p_i \wedge q_i) $.
Thus $\underset{i<2}\square a_i.p_i \wedge \underset{i<2}\square a_i.q_i \underset{\thicksim}{\sqsubset}_{RS} \underset{i<2}\square a_i.(p_i \wedge q_i) $ due to $a \in Act$.
It follows from $c.0 \wedge b.0 \not\stackrel{\tau}{\rightarrow}$ and $b.0 \not\stackrel{c}{\rightarrow}$ that $\frac{c.0 \stackrel{c}{\rightarrow}0}{c.0 \wedge b.0F} \in Strip(\text{CLL}_R,M_{\text{CLL}_R})$.
So $c.0 \wedge b.0 \in F$ because of $c.0 \stackrel{c}{\rightarrow}0$.
Further $\underset{i<2}\square a_i.(p_i \wedge q_i)  \in F$ by Lemma~\ref{L:F_NORMAL}.
Thus, it follows from $\underset{i<2}\square a_i.p_i \wedge \underset{i<2}\square a_i.q_i \underset{\thicksim}{\sqsubset}_{RS} \underset{i<2}\square a_i.(p_i \wedge q_i) $ that $\underset{i<2}\square a_i.p_i \wedge \underset{i<2}\square a_i.q_i \in F$.
Since $\underset{i<2}\square a_i.p_i \notin F$, $ \underset{i<2}\square a_i.q_i \notin F$ and ${\mathcal I}(\underset{i<2}\square a_i.p_i) = {\mathcal I}(\underset{i<2}\square a_i.q_i)$, the last rule applied in the proof tree of $Strip(\text{CLL}_R,M_{\text{CLL}_R})\vdash \underset{i<2}\square a_i.p_i \wedge \underset{i<2}\square a_i.q_iF$ is of the form
\[\frac{\{sF:\underset{i<2}\square a_i.p_i \wedge \underset{i<2}\square a_i.q_i \stackrel{a}{\rightarrow}s\}}{\underset{i<2}\square a_i.p_i \wedge \underset{i<2}\square a_i.q_iF}\;\text{or}\; \frac{\{sF:\underset{i<2}\square a_i.p_i \wedge \underset{i<2}\square a_i.q_i \stackrel{\epsilon}{\Rightarrow}|s\}}{\underset{i<2}\square a_i.p_i \wedge \underset{i<2}\square a_i.q_iF}.\]
However, since $b.0 \wedge b.0$ is an $a$-derivative of $\underset{i<2}\square a_i.p_i \wedge \underset{i<2}\square a_i.q_i$ and $b.0 \wedge b.0 \notin F$, the former is impossible.
Moreover, since $\underset{i<2}\square a_i.p_i \wedge \underset{i<2}\square a_i.q_i$ is the unique stable $\epsilon$-derivative of itself, the latter is also impossible due to the well-foundedness of proof tree.
Thus a contradiction arises, as desired.
\end{example}

However, for any $\underset{i<n}\square a_i.p_i$ with distinct prefixes, we have

\begin{proposition}\label{L:MULTIPLE_II}
Let $a_i \in Act$ for each $i<n$.
  If $\underset{i< n}{\square}a_i.p_i$ is injective in prefixes then $\underset{i< n}{\square}a_i.p_i\wedge \underset{i< n}{\square}a_i.q_i \sqsubseteq_{RS} \underset{i< n} {\square}a_i.(p_i \wedge q_i)$.
\end{proposition}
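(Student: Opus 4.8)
The plan is to exhibit a single stable ready simulation relation witnessing the inequation, the whole difficulty being concentrated in the backward propagation of inconsistency. Write $P \equiv \underset{i<n}{\square}a_i.p_i$, $Q \equiv \underset{i<n}{\square}a_i.q_i$ and $R \equiv \underset{i<n}{\square}a_i.(p_i \wedge q_i)$. Since every prefix $a_i$ lies in $Act$, neither $P$ nor $Q$ performs a $\tau$-transition, so by Rules $Ra_7$ and $Ra_8$ the process $P \wedge Q$ is stable; $R$ is stable for the same reason. First I would pin down the transition structure. By Lemma~\ref{L:BIG_SQUARE}(2) and Rule $Ra_6$ we have $P \wedge Q \stackrel{a_i}{\rightarrow} p_i \wedge q_i$ for every $i<n$, and by Lemma~\ref{L:BIG_SQUARE}(3) together with the hypothesis that $P$ (hence also $Q$ and $R$) is injective in prefixes, these are the only transitions of $P \wedge Q$ and, crucially, $p_i \wedge q_i$ is the \emph{unique} $a_i$-derivative of $P \wedge Q$. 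The same computation gives $R \stackrel{a_i}{\rightarrow} p_i \wedge q_i$ as its only transitions, so $\mathcal{I}(P \wedge Q) = \{a_i : i<n\} = \mathcal{I}(R)$.

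Because $P \wedge Q$ and $R$ are both stable, it suffices to prove $P \wedge Q \underset{\thicksim}{\sqsubset}_{RS} R$: when $P \wedge Q \notin F$, its only consistent stable $\epsilon$-derivative is $P \wedge Q$ itself, and matching it by $R \stackrel{\epsilon}{\Rightarrow}_F| R$ yields $P \wedge Q \sqsubseteq_{RS} R$ straight from Definition~\ref{D:RS} (when $P\wedge Q\in F$ the statement is vacuous). To obtain the stable part I would take
\[ \mathcal{R} \triangleq \{(P \wedge Q,\, R)\} \cup \underset{\thicksim}{\sqsubset}_{RS} \]
and verify \textbf{(RS1)}--\textbf{(RS4)} for the new pair, the remaining pairs being handled by the fact that $\underset{\thicksim}{\sqsubset}_{RS}$ is itself a stable ready simulation relation. \textbf{(RS1)} and \textbf{(RS4)} are immediate from the first paragraph. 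For \textbf{(RS3)}, any $P \wedge Q \stackrel{a}{\Rightarrow}_F| s$ factors, by stability, as $P \wedge Q \stackrel{a_i}{\rightarrow}_F p_i \wedge q_i \stackrel{\epsilon}{\Rightarrow}_F| s$; since $R \stackrel{a_i}{\rightarrow} p_i \wedge q_i$ as well and $R \notin F$ (by the argument for \textbf{(RS2)} below, as $P \wedge Q \notin F$ here), we get $R \stackrel{a_i}{\Rightarrow}_F| s$, and the matching pair $(s,s)$ lies in the reflexive part of $\underset{\thicksim}{\sqsubset}_{RS}$.

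The crux is \textbf{(RS2)}: I must show that $P \wedge Q \notin F$ implies $R \notin F$, equivalently (by Lemma~\ref{L:F_NORMAL} and Lemma~\ref{L:BIG_SQUARE}(1)) that $p_k \wedge q_k \in F$ for some $k$ forces $P \wedge Q \in F$. This is exactly where injectivity is indispensable. Suppose $p_k \wedge q_k \in F$. As established above, $P \wedge Q \stackrel{a_k}{\rightarrow} p_k \wedge q_k$ and $p_k \wedge q_k$ is the \emph{only} $a_k$-derivative of $P \wedge Q$; hence the premise set $\{\,yF : P \wedge Q \stackrel{a_k}{\rightarrow} y\,\}$ equals $\{(p_k \wedge q_k)F\}$ and is provable, so Rule $Rp_{12}$ (with $\alpha = a_k$) fires and yields $P \wedge Q \in F$. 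This is the main obstacle and the heart of the argument: without injectivity $P \wedge Q \stackrel{a_k}{\rightarrow}$ could admit several derivatives $p_i \wedge q_j$ (for $a_i=a_j=a_k$, $i\neq j$) not all inconsistent, so $Rp_{12}$ could not fire—precisely the phenomenon of the preceding counterexample. With \textbf{(RS2)} in hand, $\mathcal{R}$ is a stable ready simulation relation, giving $P \wedge Q \underset{\thicksim}{\sqsubset}_{RS} R$ and therefore $P \wedge Q \sqsubseteq_{RS} R$.
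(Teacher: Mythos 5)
Your proposal is correct and follows essentially the same route as the paper: reduce to the stable relation, take the singleton pair plus a trivially valid remainder, and use injectivity of prefixes to make $p_k\wedge q_k$ the unique $a_k$-derivative so that inconsistency propagates backwards for (RS2). The only cosmetic differences are that you pad the relation with $\underset{\thicksim}{\sqsubset}_{RS}$ rather than $Id$ and derive (RS2) directly from Rule $Rp_{12}$ where the paper invokes (LTS1) via Theorem~\ref{L:LLTS}; these are interchangeable.
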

\begin{proof}
We examine the case $n>0$.
Since $\{a_i:i<n\} \subseteq Act$, it suffices to prove $ \underset{i< n}{\square}a_i.p_i \wedge \underset{i< n}{\square}a_i.q_i \underset{\thicksim}{\sqsubset}_{RS} \underset{i< n} {\square}a_i.(p_i \wedge q_i)$.
Put
 \[{\mathcal R}\triangleq \{ (\underset{i< n}{\square}a_i.p_i \wedge \underset{i< n}{\square}a_i.q_i , \underset{i< n} {\square}a_i.(p_i \wedge q_i)) \} \cup Id.\]
We need to check that $(\underset{i< n}{\square}a_i.p_i \wedge \underset{i< n}{\square}a_i.q_i , \underset{i< n} {\square}a_i.(p_i \wedge q_i))$ satisfies (RS1-4).
 For the conditions (RS1,4), it is trivial and omitted.

\textbf{(RS2)} Suppose $\underset{i< n} {\square}a_i.(p_i \wedge q_i) \in F $.
Then, by Lemma~\ref{L:BIG_SQUARE}, $p_k \wedge q_k \in F $ for some $k$.
Since both $\underset{i<n}{\square}a_i.p_i$ and $\underset{i<n}{\square}a_i.q_i$ are injective in prefixes, $p_k \wedge q_k$ is the unique $a_k$-derivative of $ \underset{i< n}{\square}a_i.p_i\wedge \underset{i< n}{\square}a_i.q_i$.
Therefore $ \underset{i< n}{\square}a_i.p_i\wedge \underset{i< n}{\square}a_i.q_i \in F $ comes from $p_k \wedge q_k \in F $ by Theorem~\ref{L:LLTS} and (LTS1) in Def.~\ref{D:LLTS}, as desired.

\textbf{(RS3)} Suppose $\underset{i< n}{\square}a_i.p_i \wedge \underset{i< n}{\square}a_i.q_i \stackrel{a}{\Rightarrow}_F |p'$.
 Then $\underset{i< n}{\square}a_i.p_i \wedge \underset{i< n}{\square}a_i.q_i \stackrel{a}{\rightarrow}_F p''  \stackrel{\epsilon}{\Rightarrow}_F |p'$ for some $p''$.
Since $\underset{i<n}\square a_i.p_i$ and $\underset{i<n} \square a_i.q_i$ are injective in prefixes, there exists $k<n$ such that $\underset{i< n}{\square}a_i.p_i \stackrel{a_k}{\rightarrow}  p_k$, $\underset{i< n}{\square}a_i.q_i \stackrel{a_k}{\rightarrow}  q_k$, $a = a_k$ and $p'' \equiv p_k \wedge q_k$.
Clearly $\underset{i< n} {\square}a_i.(p_i \wedge q_i) \stackrel{a_k}{\rightarrow}  p_k \wedge q_k$.
Moreover $\underset{i< n} {\square}a_i.(p_i \wedge q_i) \notin F $ by $\underset{i< n}{\square}a_i.p_i \wedge \underset{i< n}{\square}a_i.q_i \notin F $ and (RS2).
Hence $\underset{i< n} {\square}a_i.(p_i \wedge q_i) \stackrel{a}{\rightarrow}_F p_k\wedge q_k \equiv p''  \stackrel{\epsilon}{\Rightarrow}_F |p' $ and $(p',p')\in {\mathcal R}$.
\end{proof}

The next two propositions state the properties of the interaction of general external choice and parallel operator, which are analogous to the expansion law in usual process calculi, e.g., \cite{Milner89}.

\begin{proposition}\label{L:MULTIPLE_I}
Let $n \geq 0$, $m \geq 0$, $A \subseteq Act$ and
$a_i,b_j\in Act$ for each $i<n$ and $j<m$.Then
\[\underset{i< n}{\square}a_i.p_i \parallel_A \underset{j< m}{\square}b_j.q_j
  \sqsubseteq_{RS} ((\square \Omega_1) \Box (\square \Omega_2)) \Box (\square \Omega_3),\]
where
$\Omega_1 = \{a_i.(p_i \parallel_A \underset{j< m}{\square}b_j.q_j)|i<n\;\text{and}\;a_i \notin A \}$,
$\Omega_2 =  \{b_j.(\underset{i< n}{\square}a_i.p_i \parallel_A q_j)|j<m\;\text{and}\;b_j \notin A\}$ and
$\Omega_3 =  \{a_i.(p_i \parallel_A q_j)|a_i = b_j \in A,i<n\;\text{and}\;j<m\}$.
\end{proposition}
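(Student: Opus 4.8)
The plan is to exploit that both sides of the inequation are \emph{stable} and that, as soon as they are consistent, each is its own unique stable $F$-free $\epsilon$-derivative; the whole argument then reduces to a single-step matching together with careful bookkeeping of the inconsistency predicate. Write $P \equiv \underset{i<n}{\square}a_i.p_i$ and $Q \equiv \underset{j<m}{\square}b_j.q_j$, and let $R$ denote the right-hand side $((\square\Omega_1)\Box(\square\Omega_2))\Box(\square\Omega_3)$. Since every prefix $a_i,b_j$ lies in $Act$, neither $P$ nor $Q$ performs a $\tau$-transition, so by rules $Ra_{11}$ and $Ra_{12}$ the term $P\parallel_A Q$ is stable; likewise $R$, being a general external choice of visibly-prefixed terms, is stable.

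First I would dispose of the inconsistent case. By Lemma~\ref{L:F_NORMAL}(3) together with Lemma~\ref{L:BIG_SQUARE}(1), $P\parallel_A Q \in F$ iff some $p_i \in F$ or some $q_j \in F$. If this happens, then the stable term $P\parallel_A Q$ has no stable $F$-free $\epsilon$-derivative, and the required $\sqsubseteq_{RS}$ holds vacuously. So assume $P\parallel_A Q \notin F$, i.e. all $p_i,q_j \notin F$. A short computation with Lemma~\ref{L:F_NORMAL} then shows that every element of $\Omega_1,\Omega_2,\Omega_3$ (such as $a_i.(p_i\parallel_A Q)$ or $a_i.(p_i\parallel_A q_j)$) is consistent, whence $R \notin F$ by Lemma~\ref{L:BIG_SQUARE}(1). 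As both $P\parallel_A Q$ and $R$ are now stable and consistent, each is its own unique stable $F$-free $\epsilon$-derivative, so it suffices to establish $P\parallel_A Q \underset{\thicksim}{\sqsubset}_{RS} R$.

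For this I would take the candidate relation $\mathcal R \triangleq \{(P\parallel_A Q, R)\} \cup Id$ and verify (RS1)--(RS4) for the new pair, the identity pairs being immediate. Conditions (RS1) and (RS2) are settled by the paragraph above. For (RS4) I would read $\mathcal I(P\parallel_A Q)$ off rules $Ra_{13}$--$Ra_{15}$, using stability of $P$ and $Q$, obtaining exactly $\{a_i : a_i \notin A\} \cup \{b_j : b_j \notin A\} \cup \{a_i : a_i = b_j \in A\}$, and compare it with $\mathcal I(R)$, which by Lemma~\ref{L:BIG_SQUARE}(2)--(3) is precisely the set of prefixes occurring in $\Omega_1 \cup \Omega_2 \cup \Omega_3$; the two sets coincide. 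The crux is (RS3): every $\stackrel{a}{\rightarrow}_F$-step of $P\parallel_A Q$ ends in one of $p_i\parallel_A Q$, $P\parallel_A q_j$, or $p_i\parallel_A q_j$, and the matching summand of $\Omega_1$, $\Omega_2$, or $\Omega_3$ lets $R$ perform the same action into the \emph{syntactically identical} state; after the ensuing $\epsilon$-closure to a stable consistent $p'$ we land in the identity pair $(p',p') \in Id$. Because $P\parallel_A Q$ is stable, the leading $\epsilon$-part of $\stackrel{a}{\Rightarrow}_F|$ is trivial, so this single-step argument is enough.

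The only genuinely delicate part is the $F$-bookkeeping: correctly isolating the consistent case and checking that consistency propagates to every summand of $R$; everything else is a matter of reading transitions and ready sets off the SOS rules and invoking Lemma~\ref{L:BIG_SQUARE}. This is also what makes the present direction require no restriction to $T(\Sigma_B)$, unlike its converse $EXP2$: when some $p_i$ or $q_j$ is inconsistent the left-hand side already lies in $F$, so the claim is vacuous, whereas the converse must avoid exactly the mismatch in which an inconsistent $p_i$ makes $P\parallel_A Q \in F$ without contributing any consistent summand to $R$.
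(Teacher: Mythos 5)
Your proposal is correct and follows essentially the same route as the paper: the same candidate relation $\{(P\parallel_A Q,\,R)\}\cup Id$, the same reduction to stable ready simulation via stability of both sides, and the same matching of each $Ra_{13}$--$Ra_{15}$ step of $P\parallel_A Q$ by the syntactically identical summand of $R$, with the identity pairs closing (RS3). The only cosmetic difference is that you prove (RS2) in the direct direction (consistency of the left side propagates to every summand of $R$) and split off the inconsistent case up front, whereas the paper argues the contrapositive ($M\in F$ implies $N\in F$) inside the simulation check; the two are equivalent.
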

\begin{proof}
  Set $N \triangleq \underset{i< n}{\square}a_i.p_i \parallel_A \underset{j< m}{\square}b_j.q_j$ and $M\triangleq  ((\square \Omega_1) \Box (\square \Omega_2)) \Box (\square \Omega_3)$.
 Clearly, both $N$ and $M$ are stable.
  It is sufficient to prove $N\underset{\thicksim}{\sqsubset}_{RS} M$. Put
  \[{\mathcal R}\triangleq\{(N,M)\}\cup Id.\]
  We intend to check that the pair $(N, M)$ satisfies (RS1-4). For (RS1,4), it is straightforward and omitted.

  \textbf{(RS2)} Suppose $M\in F $.
  Then $t \in F $ for some $t \in \Omega_1 \cup \Omega_2 \cup \Omega_3$ by Lemma~\ref{L:BIG_SQUARE}.
  We shall consider the case where $t \in \Omega_1$, the others may be treated similarly and omitted.
  In such case, we may assume that $t \equiv a_{i_0}.(p_{i_0} \parallel_A \underset{j< m}{\square}b_j.q_j)$ with $i_0<n$ and $a_{i_0}\notin A$.
  So $p_{i_0} \in F $ or $\underset{j< m}{\square}b_j.q_j  \in F $.
   Clearly each of them implies   $N\equiv \underset{i< n}{\square}a_i.p_i \parallel_A \underset{j< m}{\square}b_j.q_j \in F $, as desired.

  \textbf{(RS3)} Suppose $N \stackrel{a}{\Rightarrow}_F|p'$.
  Then $ M \notin F $ by $N \notin F $ and (RS2).
  Since $N$ is stable, $N \stackrel{a}{\rightarrow}_F p'' \stackrel{\epsilon}{\Rightarrow}_F|p'$ for some $p''$.
  The proof proceeds by case analysis on the last rule applied in the proof tree of $Strip(\text{CLL}_R,M_{\text{CLL}_R}) \vdash N \stackrel{a}{\rightarrow}p''$.

\noindent  Case 1. $\frac{\underset{i< n}{\square}a_i.p_i \stackrel{a}{\rightarrow} r}{\underset{i< n}{\square}a_i.p_i \parallel_A \underset{j< m}{\square}b_j.q_j \stackrel{a}{\rightarrow}r\parallel_A \underset{j< m}{\square}b_j.q_j}$ with $ \underset{j< m}{\square}b_j.q_j \not\stackrel{\tau}{\rightarrow}$ and $a\notin A$.

  Then $\underset{i< n}{\square}a_i.p_i \stackrel{a}{\rightarrow} r$ and $p'' \equiv r\parallel_A \underset{j< m}{\square}b_j.q_j$.
  By Lemma~\ref{L:BIG_SQUARE}(3), we have $a= a_{i_0}$ and $r \equiv p_{i_0}$ for some $i_0 < n$.
  Due to $a_{i_0}=a \notin A$, $a_{i_0}.(p_{i_0} \parallel_A \underset{j< m}{\square}b_j.q_j) \in \Omega_1$.
  So $\square \Omega_1 \stackrel{a_{i_0}}{\rightarrow}p_{i_0} \parallel_A \underset{j< m}{\square}b_j.q_j$ by Lemma~\ref{L:BIG_SQUARE}(2).
  Moreover, since $\{a_i,b_j|i<n\;\text{and}\;j<m\}\subseteq Act$, we get $\square \Omega_2 \not\stackrel{\tau}{\rightarrow}$ and $\square \Omega_3 \not\stackrel{\tau}{\rightarrow}$ by Lemma~\ref{L:BIG_SQUARE}(3).
  Then $M \stackrel{a_{i_0}}{\rightarrow}p_{i_0} \parallel_A \underset{j< m}{\square}b_j.q_j\equiv p''$.
  Hence, $ M \stackrel{a}{\Rightarrow}_F|p'$ and $(p',p')\in {\mathcal R}$.\\

\noindent  Case 2. $\frac{\underset{j< m}{\square}b_j.q_j \stackrel{a}{\rightarrow} r}{\underset{i< n}{\square}a_i.p_i \parallel_A \underset{j< m}{\square}b_j.q_j \stackrel{a}{\rightarrow} \underset{i< n}{\square}a_i.p_i \parallel_A r}$ with $\underset{i< n}{\square}a_i.p_i \not\stackrel{\tau}{\rightarrow}$ and $a\notin A$.

  Similar to Case 1.\\

\noindent  Case 3. $\frac{\underset{i< n}{\square}a_i.p_i \stackrel{a}{\rightarrow} r,\underset{j< m}{\square}b_j.q_j \stackrel{a}{\rightarrow} s}{\underset{i< n}{\square}a_i.p_i \parallel_A \underset{j< m}{\square}b_j.q_j \stackrel{a}{\rightarrow}r\parallel_A s}$ with $a\in A$.

  Then $\underset{i< n}{\square}a_i.p_i \stackrel{a}{\rightarrow} r$, $ \underset{j< m}{\square}b_j.q_j \stackrel{a}{\rightarrow} s $ and $p'' \equiv r\parallel_A s$.
  By Lemma~\ref{L:BIG_SQUARE}(3), we have $a= a_{i_0}$, $r \equiv p_{i_0}$ for some $i_0 < n$ and $a = b_{j_0}$, $s \equiv q_{j_0}$ for some $j_0 < m$.
  Then $a_{i_0}.(p_{i_0} \parallel_A q_{j_0}) \in \Omega_3$ because of $a_{i_0} = b_{j_0}=a \in A$.
  So $\square \Omega_3 \stackrel{a_{i_0}}{\rightarrow}p_{i_0} \parallel_A q_{j_0}$ by Lemma~\ref{L:BIG_SQUARE}(2).
  Moreover, since $\{a_i,b_j|i<n\;\text{and}\;j<m\}\subseteq Act$, we get $\square \Omega_1 \not\stackrel{\tau}{\rightarrow}$ and $\square \Omega_2 \not\stackrel{\tau}{\rightarrow}$ by Lemma~\ref{L:BIG_SQUARE}(3). Then $M \stackrel{a_{i_0}}{\rightarrow}p_{i_0} \parallel_A q_{j_0}\equiv p''$.
  Hence, $ M \stackrel{a}{\Rightarrow}_F|p'$ and $(p',p')\in {\mathcal R}$.
\end{proof}

Compared with usual expansion law in process calculus, e.g., Prop. 3.3.5 in \cite{Milner89}, someone may expect that the inequation below holds, where $\Omega_i$ ($1\leq i\leq 3$) is same as ones in Prop.~\ref{L:MULTIPLE_I}.
\[((\square \Omega_1) \Box (\square \Omega_2)) \Box (\square \Omega_3) \sqsubseteq_{RS}    \underset{i< n}{\square}a_i.t_i \parallel_A \underset{j< m}{\square}b_j.s_j. \tag{EXP}\]
Unfortunately, it isn't valid.
For instance, consider $a_0.t_0 \triangleq a.\bot$, $a_1.t_1 \triangleq c.0$ and $b_0.s_0 \triangleq b.0$ with $a\neq b\neq c$.
Let $A = \{a,b\}$.
Clearly, the set $\Omega_i(1 \leq i \leq 3)$ corresponding to ones in the above proposition are: $\Omega_1 =\{c.(0\parallel_{\{a,b\}}b.0)\}$ and $\Omega_2 = \Omega_3 = \emptyset$.
Then \[(( \square \Omega_1) \Box (\square \Omega_2)) \Box ( \square \Omega_3 ) \equiv (c.(0 \parallel_{\{a,b\}} b.0)\Box 0)\Box 0.\]
By Lemma~\ref{L:F_NORMAL}, $(a.\bot \Box c.0) \parallel_{\{a,b\}} b.0 \in F$ and  $(c.(0\parallel_{\{a,b\}}b.0) \Box 0)\Box 0 \notin F$.
Then it is easy to see that $(c.(0\parallel_{\{a,b\}}b.0) \Box 0)\Box 0 \not \sqsubseteq_{RS} (a.\bot \Box c.0) \parallel_{\{a,b\}} b.0 $.

However, the inequation (EXP) holds for processes satisfying a moderate condition. Formally, we have the result below.

\begin{proposition}\label{L:MULTIPLE_IV}
Let $n,m\geq0$, $A \subseteq Act$ and $a_i,b_j\in Act$ for each $i<n$ and $j<m$.
Assume that $(\{p_i|a_i\in A\;\text{and}\; a_i\neq b_j\;\text{for each}\; j<m\}\cup \{q_j|b_j\in A\;\text{and}\; b_j\neq a_i\;\text{for each}\;i<n\}) \cap F  =\emptyset$,
 then \[((\square \Omega_1) \Box (\square \Omega_2)) \Box (\square \Omega_3) \sqsubseteq_{RS}    \underset{i< n}{\square}a_i.p_i \parallel_A \underset{j< m}{\square}b_j.q_j\]
where $\Omega_i$ ($1\leq i\leq 3$) is same as ones in Prop.~\ref{L:MULTIPLE_I}.
\end{proposition}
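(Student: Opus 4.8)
The plan is to establish the reverse inequation of Prop.~\ref{L:MULTIPLE_I} directly, by exhibiting a stable ready simulation relation, in exactly the style of that proof. Write $N \equiv \underset{i< n}{\square}a_i.p_i \parallel_A \underset{j< m}{\square}b_j.q_j$ and $M\equiv ((\square \Omega_1) \Box (\square \Omega_2)) \Box (\square \Omega_3)$. Since every prefix $a_i,b_j$ lies in $Act$, both $N$ and $M$ are stable, so it suffices to prove $M \underset{\thicksim}{\sqsubset}_{RS} N$. I would take ${\mathcal R}\triangleq\{(M,N)\}\cup Id$ and verify that the pair $(M,N)$ satisfies (RS1)--(RS4); the identity pairs are trivial. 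Conditions (RS1) and (RS4) are routine: $M$ and $N$ are stable, and comparing the rules $Ra_{13}$--$Ra_{15}$ against the construction of $\Omega_1,\Omega_2,\Omega_3$ shows ${\mathcal I}(M)={\mathcal I}(N)=\{a_i:a_i\notin A\}\cup\{b_j:b_j\notin A\}\cup\{a_i:a_i=b_j\in A\}$, regardless of any $F$-information (an orphaned synchronising action $a_k\in A$ with no partner $b_j$ contributes to neither ready set).

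The crux is (RS2), i.e.\ $M\notin F$ implies $N\notin F$, and this is precisely where the consistency hypothesis is consumed; I would argue the contrapositive $N\in F\Rightarrow M\in F$. By Lemma~\ref{L:F_NORMAL} and Lemma~\ref{L:BIG_SQUARE}(1), $N\in F$ forces either some $p_k\in F$ ($k<n$) or some $q_l\in F$ ($l<m$); by the symmetry of the construction I would treat only $p_k\in F$. If $a_k\notin A$, then $a_k.(p_k\parallel_A\underset{j<m}{\square}b_j.q_j)\in\Omega_1$ is inconsistent by Lemma~\ref{L:F_NORMAL}, whence $M\in F$ again by Lemma~\ref{L:BIG_SQUARE}(1); if $a_k=b_{j_0}\in A$ for some $j_0<m$, then $a_k.(p_k\parallel_A q_{j_0})\in\Omega_3$ is inconsistent, once more giving $M\in F$. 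The only remaining possibility is $a_k\in A$ with $a_k\neq b_j$ for every $j<m$ — but this is exactly the configuration forbidden by the hypothesis, since the set $\{p_i| a_i\in A\text{ and }a_i\neq b_j\text{ for each }j<m\}$ is required to be disjoint from $F$; hence $p_k\notin F$ and this case is vacuous. Thus the moderate condition guarantees that every inconsistency of $N$ is inherited by some summand of $M$.

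Finally (RS3). Suppose $M\stackrel{a}{\Rightarrow}_F|p'$; since $M$ is stable this unpacks to $M\stackrel{a}{\rightarrow}_F p''\stackrel{\epsilon}{\Rightarrow}_F|p'$ with $M,p''\notin F$. The decisive observation is that $M$ and $N$ possess matching initial $a$-transitions with \emph{identical} targets: by Lemma~\ref{L:BIG_SQUARE}(3) the move $M\stackrel{a}{\rightarrow}p''$ originates from a single summand in $\Omega_1$, $\Omega_2$, or $\Omega_3$, and in each of these three cases rule $Ra_{13}$, $Ra_{14}$, or $Ra_{15}$ (respectively) produces $N\stackrel{a}{\rightarrow}p''$ to the very same $p''$ — here one uses that both components of $N$ are stable, so the negative premises of $Ra_{13}$ and $Ra_{14}$ are met. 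Since $M\notin F$ yields $N\notin F$ by (RS2), we obtain $N\stackrel{a}{\rightarrow}_F p''\stackrel{\epsilon}{\Rightarrow}_F|p'$, that is $N\stackrel{a}{\Rightarrow}_F|p'$; taking $q'\equiv p'$ gives $(p',q')\in Id\subseteq{\mathcal R}$. I expect (RS2) to be the sole genuine obstacle: once the side condition is recognised as eliminating the orphaned-synchronisation case, (RS3) collapses to the trivial fact that after one transition $M$ and $N$ coincide, and (RS1),(RS4) are immediate.
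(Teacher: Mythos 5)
Your proposal is correct and follows essentially the same route as the paper: the same relation $\{(\text{expansion},\text{parallel})\}\cup Id$, with (RS2) established via the contrapositive by tracing an inconsistency of the parallel composition to some $p_k\in F$ or $q_l\in F$ and using the side condition to exclude the orphaned-synchronisation case, exactly as in the paper's proof. The paper only writes out (RS2) and defers (RS1), (RS3), (RS4) to the argument of Prop.~\ref{L:MULTIPLE_I}; your explicit treatment of those conditions matches what that reference supplies.
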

\begin{proof}
  Set $M \triangleq \underset{i< n}{\square}a_i.p_i \parallel_A \underset{j< m}{\square}b_j.q_j$ and $N \triangleq  ((\square \Omega_1) \Box (\square \Omega_2)) \Box (\square \Omega_3)$.
  Similar to Prop.~\ref{L:MULTIPLE_I}, we shall prove $N\underset{\thicksim}{\sqsubset}_{RS} M$. Put
  ${\mathcal R}\triangleq\{(N,M)\}\cup Id$.
  It suffices to show that $\mathcal R$ is a stable ready simulation relation. We will check that the pair $(N, M)$ satisfies (RS2), the remainder is analogous to ones of Prop.~\ref{L:MULTIPLE_I}.

  \textbf{(RS2)} Suppose $M \in F $. By Lemmas~\ref{L:F_NORMAL} and \ref{L:BIG_SQUARE}, we get either $p_{i_0} \in F$ for some $i_0<n$ or $q_{j_0} \in F$  for some $j_0<m$.
  W.l.o.g, we consider the first alternative.
  Then, by the assumption,  $a_{i_0} \notin A$ or $a_{i_0} = b_{j_0}$ for some $j_0 < m$.
  Consequently,  $a_{i_0}.(p_{i_0} \parallel_A \underset{j< m}{\square}b_j.q_j) \in \Omega_1$ or $a_{i_0}.(p_{i_0} \parallel_A q_{j_0}) \in \Omega_3$. Hence $N\in F $ by Lemma~\ref{L:BIG_SQUARE}, as desired.
\end{proof}



We now have all of the properties that we need to prove the soundness of the axiomatic system $AX_{\text{CLL}}$.

\begin{theorem}[Soundness]\label{T:SOUNDNESS}
If $\vdash p\leqslant q$ then $p\sqsubseteq_{RS} q$ for any $p,q \in T(\Sigma_{\text{CLL}_R})$.
\end{theorem}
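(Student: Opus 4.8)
The plan is to reduce the soundness of the entire system $AX_{\text{CLL}}$ to the soundness of its individual ground axiom instances, exploiting the standard decomposition of equational/inequational proofs. Since $\sqsubseteq_{RS}$ is a preorder (reflexive and transitive) and a precongruence by Theorem~\ref{L:precongruence}, the three inference rules \textbf{REF}, \textbf{TRANS} and \textbf{CONTEXT} are automatically sound. Thus it suffices to verify that every ground instance of every axiom $t \leqslant t'$ in Table~\ref{Ta:PROOF_SYSTEM_AXIOMS} satisfies $t \sqsubseteq_{RS} t'$, and for the equational axioms (written $t = t'$) both directions. Formally, I would argue by induction on the length of the derivation $\vdash p \leqslant q$: the base case handles axiom instances, and the inductive step dispatches each inference rule using reflexivity, transitivity and Theorem~\ref{L:precongruence} respectively.

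The bulk of the work is then a case analysis over the axioms, and the crucial point is that nearly all of them have already been discharged by the propositions and lemmas accumulated earlier in Section~3. Specifically, I would cite Prop.~\ref{L:ONE_OPERATOR} for the single-operator laws ($EC1$--$EC5$, $DI1$--$DI4$, $CO1$--$CO3$, $PA1$, $PA2$, $PR1$, $PR2$); Lemma~\ref{L:CON_ID_I}(1) for $DI5$ and, via Prop.~\ref{S:SPECIAL_I}, for $DS4$; Prop.~\ref{S:DISTRIBUTIVE} for the distributive inequalities $DS1$, $DS2$ and $DS3$ (taking $\odot \in \{\Box, \wedge, \parallel_A\}$ and keeping only the direction actually asserted by the axiom); Prop.~\ref{L:MULTIPLE_VI}(1) for $ECC1$, Prop.~\ref{L:MULTIPLE_VI}(2) for $ECC2$, and Prop.~\ref{L:MULTIPLE_II} for $ECC3$ with its injectivity side condition; and Prop.~\ref{L:MULTIPLE_I} for $EXP1$. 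Each of these is essentially a one-line citation once the correspondence between the axiom and the proposition is made explicit, so the proof amounts to organizing these references cleanly.

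The genuinely new obstacle, and the step I expect to be the main one, is $EXP2$ together with the side conditions on $DS4$ and $ECC3$. The inequality $EXP2$ asserts the converse direction of the expansion law, namely $((\square \Omega_1)\Box(\square \Omega_2))\Box(\square \Omega_3) \sqsubseteq_{RS} \underset{i<n}{\square}a_i.x_i \parallel_A \underset{j<m}{\square}b_j.y_j$, which by the counterexample preceding Prop.~\ref{L:MULTIPLE_IV} fails in general and holds only under a consistency hypothesis. Here the side condition $x_i, y_j \in T(\Sigma_B)$ is precisely what is needed: by Lemma~\ref{L:BPT}, every subterm drawn from $T(\Sigma_B)$ is consistent, so the antecedent hypothesis of Prop.~\ref{L:MULTIPLE_IV} (that certain $p_i$ and $q_j$ lie outside $F$) is automatically satisfied. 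Hence the soundness of $EXP2$ follows by instantiating Prop.~\ref{L:MULTIPLE_IV} and checking that the $T(\Sigma_B)$ restriction implies its emptiness-of-intersection-with-$F$ premise. I would similarly observe that the $T(\Sigma_B)$ constraint in $DS4$ makes $x$ and $y$ consistent, so they are trivially uniform w.r.t.\ $F$ and Prop.~\ref{S:SPECIAL} applies to give $DS4$; the main care is in verifying that the syntactic side conditions translate faithfully into the semantic consistency hypotheses of the cited propositions.

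\begin{proof}
Since $\sqsubseteq_{RS}$ is a preorder, the rules \textbf{REF} and \textbf{TRANS} are sound, and by Theorem~\ref{L:precongruence} the rule \textbf{CONTEXT} is sound. Hence, arguing by induction on the length of a derivation of $\vdash p \leqslant q$, it suffices to check that every ground instance of each axiom is sound w.r.t.\ $\sqsubseteq_{RS}$. The single-operator laws $EC1$--$EC5$, $DI1$--$DI4$, $CO1$--$CO3$, $PR1$, $PR2$, $PA1$ and $PA2$ follow from Prop.~\ref{L:ONE_OPERATOR}. The law $DI5$ is Lemma~\ref{L:CON_ID_I}(1). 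The inequalities $DS1$, $DS2$ and $DS3$ are the respective halves of Prop.~\ref{S:DISTRIBUTIVE} for $\odot \in \{\Box, \wedge, \parallel_A\}$. The laws $ECC1$ and $ECC2$ are Prop.~\ref{L:MULTIPLE_VI}(1) and (2), and $ECC3$ (under its injectivity hypothesis) is Prop.~\ref{L:MULTIPLE_II}. The law $EXP1$ is Prop.~\ref{L:MULTIPLE_I}.

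It remains to treat the conditional axioms $DS4$ and $EXP2$, whose side conditions confine the relevant subterms to $T(\Sigma_B)$. For $DS4$, since $x, y \in T(\Sigma_B)$, Lemma~\ref{L:BPT} gives $x, y \notin F$, so $x$ and $y$ are uniform w.r.t.\ $F$; thus $a.(x \vee y) \sqsubseteq_{RS} a.x \Box a.y$ by Prop.~\ref{S:SPECIAL}. For $EXP2$, the side condition $x_i, y_j \in T(\Sigma_B)$ together with Lemma~\ref{L:BPT} yields $x_i, y_j \notin F$ for all $i<n$ and $j<m$, so in particular the set $(\{x_i \mid a_i \in A \text{ and } a_i \neq b_j \text{ for each } j<m\} \cup \{y_j \mid b_j \in A \text{ and } b_j \neq a_i \text{ for each } i<n\}) \cap F$ is empty. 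Hence the hypothesis of Prop.~\ref{L:MULTIPLE_IV} is met, and that proposition delivers exactly the inequality asserted by $EXP2$. This completes the verification of all axioms, and therefore the soundness of $AX_{\text{CLL}}$.
\end{proof}
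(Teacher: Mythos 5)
Your proof is correct and follows essentially the same route as the paper, which simply cites Lemmas~\ref{L:CON_ID_I}(1) and \ref{L:BPT}, Propositions~\ref{L:ONE_OPERATOR}, \ref{S:DISTRIBUTIVE}, \ref{S:SPECIAL}, \ref{L:MULTIPLE_VI}, \ref{L:MULTIPLE_II}, \ref{L:MULTIPLE_I} and \ref{L:MULTIPLE_IV}, Theorem~\ref{L:precongruence}, and the preorder properties of $\sqsubseteq_{RS}$; your version merely makes the axiom-by-axiom correspondence explicit, including the correct use of Lemma~\ref{L:BPT} to discharge the $T(\Sigma_B)$ side conditions of $DS4$ and $EXP2$. (The only blemish is the informal plan's passing attribution of $DS4$ to Prop.~\ref{S:SPECIAL_I} rather than Prop.~\ref{S:SPECIAL}, which the formal proof gets right.)
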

\begin{proof}%
Immediately follows from Lemmas~\ref{L:CON_ID_I}(1) and \ref{L:BPT}, Prop.~\ref{L:ONE_OPERATOR}, \ref{S:DISTRIBUTIVE}, \ref{S:SPECIAL}, \ref{L:MULTIPLE_VI}, \ref{L:MULTIPLE_II}, \ref{L:MULTIPLE_I} and \ref{L:MULTIPLE_IV},  Theorem~\ref{L:precongruence} and the fact that $\sqsubseteq_{RS}$ is reflexive and transitive.
\end{proof}

\section{Normal form and ground-completeness}

This section will establish the ground-completeness of $AX_{\text{CLL}}$ for processes that  are generated by BNF
\[ t::= 0\;|\perp\;|\;(\alpha.t) \;|\; (t\Box t)\;|\;(t\wedge t)\;|\;(t\vee t)\;|\;(t\parallel_A t). \]
The set of all these processes is denoted by $T(\Sigma_{\text{CLL}})$.

To prove the ground-completeness of $AX_{\text{CLL}}$, we use a standard technique involving normal forms.
The idea is to isolate a particular subclass of terms, called normal forms, such that the proof of the completeness is straightforward for it.
The completeness for arbitrary terms will follow if we can show that each term can be reduced to normal form using axioms and inference rules in $AX_{\text{CLL}}$.
Therefore the proof of ground-completeness falls naturally into two parts: first, we will show that each process in $T(\Sigma_{\text{CLL}})$ is normalizable; second, it will be demonstrated that $AX_{\text{CLL}}$ is ground-complete w.r.t processes in normal form.
Before defining the normal form, we first introduce two useful notations.\\

\noindent \textbf{Notation}
\begin{enumerate}
  \item $Prefix(\underset{i<n}{\square}a_i.t_i)\triangleq \{a_i|i<n\}$.
  \item Let $<t_0,\dots,t_{n-1}>$ be a finite sequence of process terms with $n > 0$.
The general disjunction $\underset{i<n}\bigvee t_i$ is defined as
    \[\underset{i<1}\bigvee t_i \triangleq t_0,\;\text{and}\;
    \underset{i<k+1}\bigvee t_i \triangleq (\underset{i<k}\bigvee t_i) \vee t_k\; \text{for} \;k\geq 1.\]
    Similar to general external choice, the order and grouping of terms in $\underset{i<n}\bigvee t_i$  may be ignored by virtue of Axioms $DI1$ and $DI2$.
\end{enumerate}

\begin{mydefn}[Normal Form]\label{D:NORMAL_FORM}
    The set $NF_B$ is the least subset of $T(\Sigma_{\text{CLL}})$ such that $\underset{i< n}{\bigvee}t_i\in NF_B$ if $n > 0$ and for each $i<n$, $t_i$ has the format $\underset{j< m_i}{\square}a_{ij}.t_{ij}$ with $m_i \geq 0$ such that

        \noindent (N)\;\;\;  $t_{ij}\in NF_B$ for each $j<m_i$,

        \noindent (D) \;\; $\underset{j< m_i}{\square}a_{ij}.t_{ij}$ is injective in prefixes, and

        \noindent (N-$\tau$)  $a_{ij} \in Act$ for each $j<m_i$.

We put $NF \triangleq\{ \bot \} \cup NF_B$.
Each process term in $NF$ is  in normal form.
Notice that $NF_B \subseteq T(\Sigma_B)$, and $0\in NF_B$ by taking $n=1$ and $m_0 = 0$ in $ \underset{i<n}{\bigvee}\underset{j<m_i}{\square}t_{ij}$.
\end{mydefn}

The following simple observations inspire the format of normal processes in $NF_B$.

First, due to $\tau$-purity, the behaviour of any process consists of external and internal choices, which are interleaving but never mixing.
This fact induces us to adopt the format $\underset{i< n}{\bigvee}\underset{j<m_i}{\square}t_{ij}$ as normal forms.

Second, because of $a.p \Box a.q =_{RS} a.(p \vee q)$ for $p,q \in T(\Sigma_B)$ and $\tau.p=_{RS}p$, we may require normal forms to satisfy Conditions~(D) and (N-$\tau$), which make demonstrating the completeness w.r.t $NF$ (see Lemma~\ref{L:COMPLETENESS}) easier.
In fact, processes $\underset{j< m_i}{\square}a_{ij}.t_{ij}$ satisfying (N-$\tau$) indeed are $\vee$-irreducible in the distributive lattice $T(\Sigma_{\text{CLL}},\vee,\wedge)$ (see Remark~\ref{R:NORMAL_FORM} given later).
Hence, from the lattice-theoretical viewpoint, defining normal form as above is natural.

In the following, we will show that each process term can be transformed using axioms in $AX_{\text{CLL}}$ into a normal form.
To this end, the next four lemmas are firstly proved.

\begin{lemma}\label{L:DIS_INEQUATION}
\noindent (1) $\vdash a.t \Box a.s  \leqslant a.(t \vee s) $.

\noindent (2) $\vdash (t\odot s_1)\vee(t\odot s_2)\leqslant t \odot (s_1 \vee s_2)$ for each $\odot \in \{\Box, \wedge, \parallel_A\}$.

\end{lemma}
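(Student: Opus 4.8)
The plan is to treat both inequalities as the ``easy'' converse halves of the distributivity-style laws: part~(1) is the reverse direction of axiom $DS4$ (but, crucially, without its $T(\Sigma_B)$ side condition), and part~(2) with $\odot=\Box,\wedge,\parallel_A$ is the reverse of $DS1$, $DS2$, $DS3$ respectively. In each case the right-hand side is an upper bound of the left, so the derivation should need none of $DS1$--$DS4$; it ought to follow purely from the join-semilattice structure of $\vee$ --- the upper-bound axiom $DI5$, the idempotency axioms $DI3$ and $EC3$, and commutativity $DI1$ --- together with the congruence rule CONTEXT and TRANS. Part~(1) is, in effect, the syntactic counterpart of the semantic Proposition~\ref{S:SPECIAL_I}.

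For part~(1) I would first derive the two ``upper-bound'' facts $\vdash t\leqslant t\vee s$ and $\vdash s\leqslant t\vee s$: the former is $DI5$ directly, the latter follows from $DI5$ and $DI1$ by TRANS. Applying CONTEXT to the unary prefix operator $a.(\,\cdot\,)$ yields $\vdash a.t\leqslant a.(t\vee s)$ and $\vdash a.s\leqslant a.(t\vee s)$, and a further application of CONTEXT to $\Box$ gives $\vdash a.t\Box a.s\leqslant a.(t\vee s)\Box a.(t\vee s)$. Collapsing the right-hand side by the idempotency axiom $EC3$ and chaining with TRANS then delivers $\vdash a.t\Box a.s\leqslant a.(t\vee s)$.

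Part~(2) follows the identical template, uniformly in $\odot\in\{\Box,\wedge,\parallel_A\}$, with $\odot$ playing the role of the prefix and $DI3$ that of $EC3$. From $\vdash s_1\leqslant s_1\vee s_2$ and $\vdash s_2\leqslant s_1\vee s_2$ (again $DI5$ and $DI1$), CONTEXT applied to $\odot$ --- keeping the first argument fixed by REF --- gives $\vdash t\odot s_1\leqslant t\odot(s_1\vee s_2)$ and $\vdash t\odot s_2\leqslant t\odot(s_1\vee s_2)$; CONTEXT applied to $\vee$ then yields $\vdash (t\odot s_1)\vee(t\odot s_2)\leqslant (t\odot(s_1\vee s_2))\vee(t\odot(s_1\vee s_2))$, and $DI3$ together with TRANS collapses the right-hand side to $t\odot(s_1\vee s_2)$. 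No case analysis on $\odot$ is needed, and --- unlike $DS4$ --- no side condition is imposed.

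There is no genuine technical obstacle here; the only point worth flagging is the temptation to reach for the distributivity axioms $DS1$--$DS3$ or $DS4$, which point the wrong way and, in the case of $DS4$, carry a $T(\Sigma_B)$ restriction that we specifically want to avoid. The substantive content is simply that $\vee$ is a join and all operators are monotone, so the inequalities hold for arbitrary $t,s,s_1,s_2$; recognising this is what turns the argument into a short congruence derivation rather than a semantic one.
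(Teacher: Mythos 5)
Your derivation is correct and matches the paper's own proof essentially step for step: both parts start from $\vdash t\leqslant t\vee s$ (resp. $\vdash s_i\leqslant s_1\vee s_2$) via $DI5$, $DI1$ and TRANS, push through CONTEXT, and collapse the duplicated right-hand side with $EC3$ (resp. $DI3$) and TRANS. Nothing to add.
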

\begin{proof}

\noindent \textbf{(1)} $ \vdash t  \leqslant t \vee s$ and $\vdash s  \leqslant t \vee s$   \qquad\;\;\;\;\; (by  $DI1$ ,  $DI5$ and TRANS )

\noindent $\Rightarrow   \vdash a.t  \leqslant a.(t \vee s)$ and $\vdash a.s  \leqslant a.(t \vee s)$  \;\qquad\qquad\qquad\;  (by CONTEXT)

\noindent $\Rightarrow   \vdash a.t \Box a.s \leqslant a.(t \vee s)$ \qquad\qquad\qquad (by CONTEXT,  $EC3$ and TRANS)

\noindent \textbf{(2)}  $\vdash s_1 \leqslant s_1 \vee s_2$ and $\vdash s_2 \leqslant s_1 \vee s_2$  \qquad(by $DI1$, $DI5$ and TRANS)

\noindent   $\Rightarrow \vdash t \odot s_1 \leqslant t \odot (s_1 \vee s_2)$ and $\vdash t\odot s_2 \leqslant t\odot (s_1 \vee s_2)$ (by CONTEXT and REF)

\noindent  $\Rightarrow \vdash (t \odot s_1) \vee (t\odot s_2) \leqslant t \odot (s_1 \vee s_2)$  \qquad (by $DI3$, CONTEXT and TRANS)
%
\end{proof}

The next three lemmas provide a series of closure properties of $NF$, which ensure that the inductive proof of Normal Form Theorem can be carried out smoothly.

\begin{lemma}\label{L:COMP_CONJ}
  If $t,s\in NF_B$ then $\vdash t \wedge s = r$ for some $r \in NF$.
\end{lemma}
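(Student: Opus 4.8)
The plan is to prove the claim by induction on the combined height $\mathrm{ht}(t)+\mathrm{ht}(s)$ of the two normal forms, reducing $t\wedge s$ to a general disjunction of summation terms and then discarding the inconsistent summands. Write $t\equiv\underset{i<n}{\bigvee}t_i$ and $s\equiv\underset{k<n'}{\bigvee}s_k$ with $n,n'>0$, where $t_i\equiv\underset{j<m_i}{\square}a_{ij}.t_{ij}$ and $s_k\equiv\underset{l<m'_k}{\square}b_{kl}.s_{kl}$ satisfy (N), (D) and (N-$\tau$). The first step pushes $\wedge$ through $\vee$: combining axiom $DS2$ with Lemma~\ref{L:DIS_INEQUATION}(2) (instance $\odot=\wedge$) gives the provable distributivity $\vdash x\wedge(y\vee z)=(x\wedge y)\vee(x\wedge z)$, and iterating this together with $CO1$, $DI1$, $DI2$ yields $\vdash t\wedge s=\underset{i<n,\,k<n'}{\bigvee}(t_i\wedge s_k)$.

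Next I analyse each flat conjunction $t_i\wedge s_k$. If $Prefix(t_i)\neq Prefix(s_k)$, then $ECC1$ gives $\vdash t_i\wedge s_k=\bot$. Otherwise the two summations share the same prefix set and, being injective in prefixes by (D), have their summands in bijection; using $EC1$ and $EC2$ I reorder $s_k$ as $\underset{j<m_i}{\square}a_{ij}.s'_{kj}$ so that prefixes align, and then $ECC2$ with $ECC3$ (the latter applicable because $t_i$ is injective in prefixes) give $\vdash t_i\wedge s_k=\underset{j<m_i}{\square}a_{ij}.(t_{ij}\wedge s'_{kj})$. As $t_{ij},s'_{kj}\in NF_B$ by (N) and have strictly smaller height, the induction hypothesis supplies $r_{ijk}\in NF$ with $\vdash t_{ij}\wedge s'_{kj}=r_{ijk}$. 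If some $r_{ijk}\equiv\bot$, then $PR1$ and $EC5$ collapse the whole summation and $\vdash t_i\wedge s_k=\bot$; otherwise every $r_{ijk}\in NF_B$, and $\underset{j<m_i}{\square}a_{ij}.r_{ijk}$ is a legitimate normal-form summand, inheriting (D) and (N-$\tau$) from $t_i$ and satisfying (N) since each $r_{ijk}\in NF_B$.

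Finally I reassemble: $\vdash t\wedge s=\underset{i,k}{\bigvee}(t_i\wedge s_k)$, where each disjunct is now provably equal either to $\bot$ or to a valid normal-form summand. Using $DI1$ and $DI4$ I delete every disjunct equal to $\bot$; if all of them are $\bot$ then $\vdash t\wedge s=\bot\in NF$, and otherwise the surviving disjuncts form a nonempty general disjunction lying in $NF_B\subseteq NF$, as required. The main obstacle is the bookkeeping of the middle step: one must check that the prefix-matching reordering combined with $ECC2$ and $ECC3$ yields a genuine \emph{equation} of $AX_{\text{CLL}}$ rather than a single inequation, and must track exactly when the $\bot$ cases arise so that the three defining conditions of $NF_B$ are provably preserved by the construction. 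The degenerate summations with $m_i=0$ (so that $t_i\wedge s_k$ reduces to $0\wedge 0$) are a minor additional point, settled by $CO2$ or the empty instances of $ECC2$ and $ECC3$.
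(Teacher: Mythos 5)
Your proposal is correct and follows essentially the same route as the paper's own proof: distribute $\wedge$ over the outer disjunctions via $DS2$ and Lemma~\ref{L:DIS_INEQUATION}(2), split on whether the prefix sets agree ($ECC1$ versus $ECC2$/$ECC3$ with the injectivity side condition), apply the induction hypothesis to the strictly smaller conjuncts, collapse $\bot$ summands with $PR1$ and $EC5$, and clean up the final disjunction with $DI1$/$DI4$. The only cosmetic difference is your use of height rather than operator count as the induction measure, which changes nothing.
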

\begin{proof}
We prove it by induction on the number $|t|+|s|$ \footnote{$|t|$ is the number of operators occurring in $t$.}.
  Since $t,s \in NF_B$, we may assume that $t \equiv \underset{i< n}{\bigvee}t_{i}$ and $s \equiv \underset{i'< n'}{\bigvee}s_{i'}$.
   By $DI1$, $DI2$, $CO1$, $DS2$ and Lemma~\ref{L:DIS_INEQUATION}(2), we get
   \[ \vdash t \wedge s
    = \bigvee_{i< n,i'< n'}(t_{i}\wedge s_{i'}) \tag{\ref{L:COMP_CONJ}.1}.\]
    Let $i< n$ and $i'< n'$. We will show that $\vdash t_{i} \wedge s_{i'}=r_{ii'}$ for some $r_{ii'} \in NF$. Clearly, we may assume that $t_{i}\equiv \underset{j< m_{i}}{\square}a_{ij}.t_{ij}$ and $s_{i'}\equiv \underset{j'< m_{i'}'}{\square}b_{i'j'}.s_{i'j'}$ satisfying (N), (D) and (N-$\tau$) in Def.~\ref{D:NORMAL_FORM}. We consider two cases below.\\

    \noindent Case 1. $Prefix(t_{i})\neq Prefix(s_{i'})$.

          By  $ECC1$, we have $\vdash t_{i} \wedge s_{i'}  = \bot $.\\

    \noindent Case 2.  $Prefix(t_{i})= Prefix(s_{i'})$.

        Thus, by the item (D) in Def.~\ref{D:NORMAL_FORM}, we have $m_i = m_{i'}'$.
        If $m_i=0$ then, by the definition of general external choice, we get $t_i\equiv s_{i'} \equiv 0$.
        Moreover, $\vdash t_i \wedge s_{i'}=0$ follows from $CO2$.
        In the following, we consider the nontrivial case where $m_i>0$.
        By $EC1$, $EC2$, $ECC2$ and $ECC3$, it follows that
        \[\vdash t_{i} \wedge s_{i'}  = \underset{\begin{subarray}
                   \; j,j'<m_i,\\
                   a_{ij}= b_{i'j'}
                \end{subarray}}\square a_{ij}.(t_{ij} \wedge s_{i'j'}). \]
         For each pair $j,j'< m_i$  with $a_{ij}= b_{i'j'}$,
         since $t_{ij},s_{i'j'} \in NF_B$ and $|t|+|s|>|t_{ij}|+|s_{i'j'}|$, by IH, we have $\vdash t_{ij} \wedge s_{i'j'} = t_{iji'j'}$ for some  $t_{iji'j'} \in NF$.
         Set \[S \triangleq \underset{\begin{subarray}
                   \; j,j'<m_i,\\
                   a_{ij}= b_{i'j'}
                \end{subarray}}\square a_{ij}.t_{iji'j'}.\]
        Consequently, by  CONTEXT and TRANS, we have
            \[\vdash t_{i} \wedge s_{i'} = S.\]
        Clearly, if $t_{iji'j'}\in NF_{B}$ for each pair $j,j'<m_i$ with  $a_{ij}= b_{i'j'}$, then $S \in NF_{B}$.
        Otherwise, we have $t_{ij_0i'j_0'} \equiv \bot$ for some $j_0,j_0'<m_i$, then it follows from $PR1$ that \[\vdash  a_{ij_0}.t_{ij_0i'j_0'}  =\bot.\]
        Further, by $EC5$, CONTEXT and TRANS, we get $\vdash S =\bot$.

     In summary, it follows from the discussion above that, for each $i<n$ and $i'<n'$,
     \[\text{either} \vdash t_{i} \wedge s_{i'}= r_{ii'}\;\text{for some}\;r_{ii'}\in NF_{B}\;\text{or}\; \vdash t_{i} \wedge s_{i'}= \bot.\]
     Then, by $DI1$, $DI4$ and (\ref{L:COMP_CONJ}.1), $\vdash t \wedge s = r$ for some $r \in NF_{B}$ or $\vdash t \wedge s= \bot$.
\end{proof}

In the above proof, we do not  explicitly show the proof for the induction basis where $t \equiv s \equiv 0$, as it is an instance of the proof of the induction step.

\begin{lemma}\label{L:BIG_SQUARE_EC}
  If $t \equiv \underset{i<n}{\square}a_i.t_i \in NF_B$  and $s \equiv \underset{j<m}{\square}b_j.s_j \in NF_B$, then $\vdash t \Box s = \underset{i<k}\square c_i.r_i$ for some $\underset{i<k}\square c_i.r_i \in NF_B$.
\end{lemma}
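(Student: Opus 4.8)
The plan is to merge the two external choices into a single one by combining precisely those summands that share a prefix. First I would exploit condition (D): since both $t$ and $s$ are injective in prefixes, each action in $Prefix(t)$ labels exactly one summand $a_i.t_i$, and likewise each action in $Prefix(s)$ labels exactly one $b_j.s_j$. Using the commutativity and associativity axioms $EC1$ and $EC2$, I would rearrange $t\Box s$ into the external choice of the summand collection $\{a_i.t_i : i<n\}\cup\{b_j.s_j : j<m\}$, and then partition it according to the three possibilities for a prefix $c\in Prefix(t)\cup Prefix(s)$: that $c$ occurs only in $t$, only in $s$, or in both.

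For a prefix occurring in exactly one of $t,s$, the corresponding summand $a_i.t_i$ (resp.\ $b_j.s_j$) is carried over unchanged, and its body already lies in $NF_B$ by (N). The crucial case is a shared prefix $c\equiv a_i\equiv b_j$, where the rearranged term contains $c.t_i\Box c.s_j$. Here I would establish the \emph{equality} $\vdash c.t_i\Box c.s_j = c.(t_i\vee s_j)$: the inequality $\leqslant$ is exactly Lemma~\ref{L:DIS_INEQUATION}(1), while the reverse inequality is an instance of axiom $DS4$, whose side condition is met because $t_i,s_j\in NF_B\subseteq T(\Sigma_B)$. It then remains to note that $t_i\vee s_j$ is, up to the reassociation permitted by $DI1$ and $DI2$, again a member of $NF_B$: writing $t_i$ and $s_j$ as general disjunctions of square terms, their disjunction is again a general disjunction of square terms, so $\vdash t_i\vee s_j = w$ for some $w\in NF_B$.

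Collecting the pieces, I would obtain $\vdash t\Box s = \square\,\{\,c.r_c : c\in Prefix(t)\cup Prefix(s)\,\}$, where each $c$ occurs once and each $r_c$ is one of $t_i$, $s_j$, or a normal form of $t_i\vee s_j$. This single external choice is injective in prefixes (each $c$ appears exactly once), has all prefixes in $Act$ by (N-$\tau$), and has all bodies in $NF_B$; hence it satisfies (N), (D), (N-$\tau$) and lies in $NF_B$, as required. The degenerate cases $n=0$ or $m=0$, where $t$ or $s$ equals $0$, I would dispatch separately using $EC4$ together with $EC1$.

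The main obstacle is bookkeeping rather than depth: one must track carefully which prefixes are shared so that the assembled external choice is genuinely injective in prefixes, and one must justify the reassociation steps ($EC1,EC2$ for $\Box$ and $DI1,DI2$ for $\vee$) that let us treat the general choice and general disjunction as unordered. The single genuinely algebraic ingredient is the two-sided derivation of $c.t_i\Box c.s_j = c.(t_i\vee s_j)$ from Lemma~\ref{L:DIS_INEQUATION}(1) and $DS4$, where the side condition $t_i,s_j\in T(\Sigma_B)$ of $DS4$ holds precisely because the bodies are drawn from $NF_B$.
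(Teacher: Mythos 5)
Your proposal is correct and follows essentially the same route as the paper: dispatch the degenerate cases with $EC4$/$EC1$, merge summands sharing a prefix via the two-sided equality $c.t_i\Box c.s_j = c.(t_i\vee s_j)$ obtained from Lemma~\ref{L:DIS_INEQUATION}(1) together with $DS4$ (whose side condition holds since $NF_B\subseteq T(\Sigma_B)$), renormalize $t_i\vee s_j$ with $DI1$/$DI2$, and reassemble the disjoint-prefix, shared-prefix pieces with $EC1$/$EC2$ into a single external choice that is injective in prefixes. The paper organizes this as an explicit two-case split (disjoint vs.\ overlapping prefix sets) with named subterms $S_1,S_2,S_3$, but the content is identical.
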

\begin{proof}
If $n=0$ or $m=0$ then it immediately follows from $EC1$ and $EC4$ due to the definition of general external choice.
In the following, we consider the non-trivial case where $n>0$ and $m>0$.
We distinguish two cases below.\\

\noindent Case 1. $Prefix(t) \cap Prefix(s)=\emptyset$.

          Set
            \[p_k \triangleq \begin{cases}
                a_k.t_k  & k<n,\\
                b_{k-n}.s_{k-n} &  n \leq k < m+n.\\
                \end{cases}\]
        Then, it is trivial to check that $\underset{k<m+n}\square p_k $ satisfies (N), (D) and (N-$\tau$) in Def.~\ref{D:NORMAL_FORM}, that is, $\underset{k<m+n}\square p_k \in NF_B$.
        Moreover, by $EC2$ and TRANS, it immediately follows that $\vdash t \Box s = \underset{k<m+n}\square p_k$.\\

\noindent Case 2. $Prefix(t) \cap Prefix(s) \neq \emptyset $.

  Let $i_0<n$ and $j_0<m$ with $a_{i_0} = b_{j_0}$, since $NF_B \subseteq T(\Sigma_B)$, by Lemma~\ref{L:DIS_INEQUATION}(1) and $DS4$, we get $\vdash a_{i_0}.t_{i_0} \Box b_{j_0}.s_{j_0} = a_{i_0}.(t_{i_0} \vee s_{j_0})$.
  Further, by Def.~\ref{D:NORMAL_FORM}, $DI1$, $DI2$, CONTEXT and TRANS, it follows from $t_{i_0},s_{j_0}\in NF_B$ that
  \[\vdash a_{i_0}.t_{i_0} \Box b_{j_0}.s_{j_0} = a_{i_0}.p\;\text{for some}\;p\in NF_B.\]
  Thus, for each $i<n$ and $j<m$ with $a_i = b_j$, we can fix a process term $p_{ij}\in NF_B$ such that
  \[\vdash a_i.t_i \Box b_j.s_j = a_i.p_{ij}.\]
  Put

\[
     S_1 \triangleq \underset{ \begin{subarray} \;a_i \notin Prefix(s),\\\;\;\;\;\;\;\;i<n\end{subarray}}{\square}a_i.t_i,  \;
       S_2 \triangleq \underset{ \begin{subarray} \;b_j \notin Prefix(t),\\\;\;\;\;\;\;\;j<m\end{subarray}}{\square}b_j.s_j, \;
       S_3 \triangleq \underset{\begin{subarray} \;a_i \in Prefix(t)\cap Prefix(s),\\\;\;\;\;\;a_i = b_j,i<n,j<m\end{subarray}
       }{\square}a_i.p_{ij}.\]

    \noindent Then, by $EC1$, $EC2$, TRANS and CONTEXT, we obtain $\vdash t \Box s = (S_1 \Box S_2)\Box S_3$.
    Clearly, both $S_1$ and $S_2$ are in $NF_B$.
    Moreover, since $t$ and $s$ are injective in prefixes, so is $S_3$.
    Hence, $S_3$ is also in $NF_B$.
    Further, since $Prefix(S_i)\cap Prefix(S_j) = \emptyset$ for $1 \leq i \neq j \leq 3$, similar to Case 1, we have
    $\vdash (S_1 \Box S_2)\Box S_3 = \underset{i<k}\square c_i.r_i$ for some $\underset{i<k}\square c_i.r_i \in NF_B$.
\end{proof}

\begin{lemma}\label{L:COMP_PARALLEL}
  If $t,s \in NF_B$ then $\vdash  t\parallel_A s  = r$ for some $r \in NF_B$.
\end{lemma}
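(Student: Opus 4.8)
The plan is to mirror the proof of Lemma~\ref{L:COMP_CONJ}, running an induction on the size $|t|+|s|$. Writing $t\equiv\underset{i<n}{\bigvee}t_i$ and $s\equiv\underset{i'<n'}{\bigvee}s_{i'}$ with each $t_i,s_{i'}$ a single $\square$-term satisfying (N), (D) and (N-$\tau$), I would first push the parallel operator inside the disjunctions. Axiom $DS3$ together with its converse Lemma~\ref{L:DIS_INEQUATION}(2) gives $\vdash x\parallel_A(y\vee z)=(x\parallel_A y)\vee(x\parallel_A z)$, and $PA1$ lets me distribute on the left as well; iterating and reordering by $DI1,DI2$ yields
\[\vdash t\parallel_A s=\underset{i<n,\,i'<n'}{\bigvee}(t_i\parallel_A s_{i'}).\]
Thus it suffices to rewrite each $t_i\parallel_A s_{i'}$ into an $NF_B$ term, after which the outer $\bigvee$ is again in $NF_B$ upon flattening with $DI1,DI2$.

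For a fixed pair write $t_i\equiv\underset{j<m_i}{\square}a_{ij}.t_{ij}$ and $s_{i'}\equiv\underset{j'<m_{i'}'}{\square}b_{i'j'}.s_{i'j'}$. Since $NF_B\subseteq T(\Sigma_B)$, the side condition of $EXP2$ is met, so $EXP1$ and $EXP2$ together give $\vdash t_i\parallel_A s_{i'}=((\square\Omega_1)\Box(\square\Omega_2))\Box(\square\Omega_3)$ with $\Omega_1,\Omega_2,\Omega_3$ as in Prop.~\ref{L:MULTIPLE_I}. Every summand of these sets is a prefix $a.(u\parallel_A v)$ whose continuation $u\parallel_A v$ is one of $t_{ij}\parallel_A s_{i'}$, $t_i\parallel_A s_{i'j'}$ or $t_{ij}\parallel_A s_{i'j'}$, a parallel composition of two $NF_B$ terms of strictly smaller total size than $|t|+|s|$ (in each case the left or the right factor is a proper subterm, the other being no larger than the corresponding operand). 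Hence the induction hypothesis rewrites each continuation into some $NF_B$ term, and CONTEXT turns each summand into $a.r'$ with $r'\in NF_B$.

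Crucially the result never collapses to $\bot$: by Lemma~\ref{L:F_NORMAL}(3) a parallel composition is inconsistent only if one factor is, and Lemma~\ref{L:BPT} gives $NF_B\cap F=\emptyset$, so every continuation stays consistent and the hypothesis lands in $NF_B$ rather than in $NF$. Each $\square\Omega_k$ is injective in prefixes — the prefixes of $\Omega_1$ are among the distinct $a_{ij}$, those of $\Omega_2$ among the distinct $b_{i'j'}$, and those of $\Omega_3$ among the matched visible actions — so each $\square\Omega_k\in NF_B$. Finally I would fuse the three external choices, which may share prefixes, by two applications of Lemma~\ref{L:BIG_SQUARE_EC}, obtaining $\vdash t_i\parallel_A s_{i'}=r_{ii'}$ for some $r_{ii'}\in NF_B$.

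The main obstacle is bookkeeping rather than conceptual. I must arrange the induction measure to decrease uniformly across all three cases of the expansion (which works because $t_{ij}$, or $s_{i'j'}$, is always a strict subterm), and I must verify that the overlapping prefixes coming from $\Omega_1,\Omega_2,\Omega_3$ are merged correctly by Lemma~\ref{L:BIG_SQUARE_EC} while preserving the injectivity condition (D). The genuinely reassuring point, which distinguishes this from the conjunction case of Lemma~\ref{L:COMP_CONJ}, is that parallel composition preserves consistency, so no $\bot$-branch ever arises and the output can always be taken in $NF_B$.
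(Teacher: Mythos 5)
Your proposal is correct and follows essentially the same route as the paper's proof: induction on $|t|+|s|$, distribution of $\parallel_A$ over the disjunctions via $DS3$, $PA1$ and Lemma~\ref{L:DIS_INEQUATION}(2), expansion of each $t_i\parallel_A s_{i'}$ by $EXP1$/$EXP2$ (whose side condition holds since $NF_B\subseteq T(\Sigma_B)$), application of the induction hypothesis to the strictly smaller continuations, and merging of the three resulting external choices with Lemma~\ref{L:BIG_SQUARE_EC}. The only cosmetic difference is that the paper splits off the degenerate case $m_i=0$ or $m_{i'}'=0$ explicitly, whereas you absorb it into the general argument; your size-measure justification covers it.
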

\begin{proof}
We prove it by induction on the number $|t|+|s|$.
Since $t,s \in NF_B$, we may assume that $t\equiv \underset{i< n}{\bigvee}t_{i}$ and $s\equiv \underset{i'< n'}{\bigvee}s_{i'}$.
 By axioms $DI1$, $DI2$, $PA1$, $DS3$ and Lemma~\ref{L:DIS_INEQUATION}(2), we get
    \[\vdash  t \parallel_A s
    = \underset{i< n,i'< n'}{\bigvee}(t_{i}\parallel_A s_{i'}). \tag{\ref{L:COMP_PARALLEL}.1}\]
    We shall show that for each $i< n$ and $i'< n'$,
    \[\vdash t_{i}\parallel_A s_{i'} =r_{ii'}\;\text{for some}\;r_{ii'} \in NF_B.\]
     Let $i< n$ and $i'< n'$.
     We may assume that $t_{i}\equiv \underset{j< m_{i}}{\square}a_{ij}.t_{ij}$ and $s_{i'}\equiv \underset{j'< m_{i'}'}{\square}b_{i'j'}.s_{i'j'}$ satisfying (N), (D) and (N-$\tau$) in Def.~\ref{D:NORMAL_FORM}.
     By $EXP1$ and $EXP2$, we have
            \begin{multline*}
                \vdash t_{i}\parallel_A s_{i'} =\\
                (\underset {\begin{subarray}
                   \;j< m_i,\\
                   a_{ij} \notin A
                \end{subarray}}
                \square a_{ij}.(t_{ij} \parallel_A s_{i'}) \Box
                \underset {\begin{subarray}
                   \;j'< m_{i'}',\\
                   b_{i'j'} \notin A
                \end{subarray}}
                \square
                b_{i'j'}.(t_{i} \parallel_A s_{i'j'})) \Box
             \underset {\begin{subarray}
                   \;j< m_i,j'<m_{i'}',\\
                  a_{ij}= b_{i'j'}\in A
                \end{subarray}}
                \square
                a_{ij}.(t_{ij} \parallel_A s_{i'j'}). \tag{\ref{L:COMP_PARALLEL}.2}
            \end{multline*}
    We consider two cases.\\

\noindent    Case 1. $m_i=0$ or $m_{i'}'=0$.

    W.l.o.g, assume that $m_i=0$. Then, by (\ref{L:COMP_PARALLEL}.2), $EC1$, $EC4$, CONTEXT and TRANS, we get
    \[ \vdash t_i \parallel_A s_{i'} = \underset {\begin{subarray}
                   \;j'< m_{i'}',\\
                   b_{i'j'} \notin A
                \end{subarray}}
                \square
                b_{i'j'}.( 0 \parallel_A s_{i'j'}). \tag{\ref{L:COMP_PARALLEL}.3}\]
    If $\{b_{i'j'} \notin A|j'<m_{i'}'\}=\emptyset$ then $\vdash t_i \parallel_A s_{i'} = 0$.
    Next, we consider the case where $\{b_{i'j'}\notin A|j'<m_{i'}'\} \not= \emptyset$.
    For each $j' < m_{i'}'$ with $b_{i'j'}\notin A$, we have $s_{i'j'}\in NF_B$, moreover, $|t|+|s|>|0|+|s_{i'j'}|$.
    Then, by IH, we get $\vdash 0\parallel_A s_{i'j'} = p_{j'}$ for some $p_{j'} \in NF_B$.
    Therefore, by CONTEXT, TRANS and (\ref{L:COMP_PARALLEL}.3), it is easy to see that $\vdash t_i \parallel_A s_{i'} =  r_{ii'}$ for some $r_{ii'} \in NF_B$.\\

\noindent    Case 2. $m_i > 0$ and  $m_{i'}' > 0$.

    In such case, for each $j<m_i$ and $j'<m_{i'}'$, we have $|t|+|s|>|t_{ij}|+|s_{i'}|$, $|t|+|s|>|t_{i}|+|s_{i'j'}|$ and $|t|+|s|>|t_{ij}|+|s_{i'j'}|$.
      Moreover, $t_{ij},s_{i'},t_{i},s_{i'j'}\in NF_B$.
      Then, by IH, there exist $t_{iji'},t_{ii'j'},t_{iji'j'}\in NF_B$ such that $ \vdash  t_{ij} \parallel_A s_{i'} = t_{iji'}$, $\vdash t_{i} \parallel_A s_{i'j'} = t_{ii'j'}$  and $ \vdash  t_{ij} \parallel_A s_{i'j'} =t_{iji'j'}$.
            Set
\[
       S_1 \triangleq \underset{ \begin{subarray} \;a_{ij} \notin A,\\j<m_i\end{subarray}}{\square}a_{ij}.t_{iji'}, \;
        S_2 \triangleq \underset{ \begin{subarray} \;b_{i'j'} \notin A,\\j'<m_{i'}'\end{subarray}}{\square}b_{i'j'}.t_{ii'j'}, \;
       S_3 \triangleq \underset{ \begin{subarray} \;a_{ij} = b_{i'j'}\in A,\\j'<m_{i'}',j<m_i\end{subarray}}{\square}a_{ij}.t_{iji'j'}.\]

    Clearly, $S_1,S_2,S_3\in NF_B$ and $\vdash t_{i}\parallel_A s_{i'}=(S_1 \Box S_2) \Box S_3$.
    Further, by Lemma~\ref{L:BIG_SQUARE_EC}, we get $\vdash t_{i}\parallel_A s_{i'}=r_{ii'}$ for some $r_{ii'} \in NF_B$, as desired.

     In summary, by the discussion above, we conclude that, for each $i<n$ and $i'<n'$, $\vdash t_{i}\parallel_A s_{i'} =r_{ii'}$ for some $r_{ii'}\in NF_B$.
     Then, by Def.~\ref{D:NORMAL_FORM} and (\ref{L:COMP_PARALLEL}.1), it immediately follows that $\vdash t \parallel_A s = r $ for some $r \in NF_B$, as desired.
\end{proof}

Now, we can prove that each process term is normalizable. That is

\begin{theorem}[Normal Form Theorem]\label{T:NORMALFORM}
  For each $t \in T(\Sigma_{\text{CLL}})$, $\vdash  t = s$   for some $s \in NF$.
\end{theorem}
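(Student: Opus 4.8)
The plan is to prove the Normal Form Theorem by structural induction on the process term $t \in T(\Sigma_{\text{CLL}})$, using the three closure lemmas just established (Lemmas~\ref{L:COMP_CONJ}, \ref{L:BIG_SQUARE_EC} and \ref{L:COMP_PARALLEL}) together with the distributive and identity axioms. The base cases are immediate: $\vdash 0 = 0$ with $0 \in NF_B$ (by taking $n=1$, $m_0=0$ as noted in Def.~\ref{D:NORMAL_FORM}), and $\vdash \bot = \bot$ with $\bot \in NF$ by definition. The inductive hypothesis asserts that every proper subterm is provably equal to some normal form, and for each of the remaining operators I must combine the normal forms of the immediate subterms into a single normal form, invoking the appropriate closure lemma.

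The inductive step proceeds by case analysis on the outermost operator of $t$. For $t \equiv t_1 \wedge t_2$, by IH I have $\vdash t_i = r_i$ with $r_i \in NF$; if either $r_i \equiv \bot$ then $\vdash t = \bot$ by $CO1$/$CO3$, and otherwise $r_1, r_2 \in NF_B$ and Lemma~\ref{L:COMP_CONJ} finishes the case directly. For $t \equiv t_1 \parallel_A t_2$, the same dichotomy applies using $PA1$/$PA2$ for the $\bot$-case and Lemma~\ref{L:COMP_PARALLEL} otherwise. For $t \equiv t_1 \vee t_2$, I use $DI4$ to absorb any $\bot$ summand and then simply juxtapose the two normal forms: since $NF_B$ is built as a general disjunction $\underset{i<n}{\bigvee} t_i$, the disjunction of two $NF_B$-terms is again in $NF_B$ by Def.~\ref{D:NORMAL_FORM}, modulo $DI1$ and $DI2$ for reordering. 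The prefix case $t \equiv \alpha.t_1$ splits on whether $\alpha = \tau$ or $\alpha \in Act$: when $\alpha = \tau$, axiom $PR2$ gives $\vdash \tau.t_1 = t_1$ and the IH normal form transfers; when $\alpha = a \in Act$ and the subterm normalizes to $\underset{i<n}{\bigvee} s_i$, I must push the prefix inside the disjunction, which requires the distributive law relating $a.(\cdot \vee \cdot)$ to external choice.

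I expect the external-choice case $t \equiv t_1 \Box t_2$ to be the main obstacle, because Lemma~\ref{L:BIG_SQUARE_EC} only handles the combination of two single $\underset{i<n}{\square}a_i.t_i$ summands, whereas the IH delivers full disjunctions $\underset{i<n}{\bigvee}(\cdot)$. The key move is to first distribute $\Box$ over $\vee$ using Prop.~\ref{S:DISTRIBUTIVE} (via the provable inequation of Lemma~\ref{L:DIS_INEQUATION}(2) in one direction and $DS1$ in the other), rewriting $(\underset{i<n}{\bigvee}t_i) \Box (\underset{i'<n'}{\bigvee}s_{i'})$ as $\underset{i<n,\,i'<n'}{\bigvee}(t_i \Box s_{i'})$; then each pairwise external choice $t_i \Box s_{i'}$ of two injective-prefix $NF_B$-summands is normalized by Lemma~\ref{L:BIG_SQUARE_EC}, and the resulting disjunction is reassembled into $NF_B$ exactly as in the $\vee$-case. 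A similar prefix-distribution argument handles the remaining subcase of $\alpha.t_1$. The principal subtlety is ensuring the side conditions of $DS4$ (namely $x,y \in T(\Sigma_B)$) are met when pushing prefixes through disjunctions; this is guaranteed because, by IH and Lemma~\ref{L:BPT}, all normalized subterms landing inside a prefix lie in $NF_B \subseteq T(\Sigma_B)$, so the conditional axioms apply legitimately.
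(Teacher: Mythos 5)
Your treatment of the cases $\vee$, $\wedge$, $\parallel_A$ and $\Box$ coincides with the paper's proof: in particular you correctly identify that the external-choice case must first distribute $\Box$ over $\vee$ via $DS1$ and Lemma~\ref{L:DIS_INEQUATION}(2) before Lemma~\ref{L:BIG_SQUARE_EC} can be applied summand by summand, and you correctly discharge the side condition of the conditional axioms via $NF_B \subseteq T(\Sigma_B)$. The problem is the prefix case. When $\alpha = a \in Act$ and the subterm normalizes to $t_1' \equiv \underset{i<n}{\bigvee} s_i \in NF_B$, you propose to ``push the prefix inside the disjunction'' using the law relating $a.(\cdot \vee \cdot)$ to external choice, i.e.\ $DS4$ together with Lemma~\ref{L:DIS_INEQUATION}(1). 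But that rewrite turns $a.(s_0 \vee \dots \vee s_{n-1})$ into $a.s_0 \Box \dots \Box a.s_{n-1}$, and for $n \geq 2$ this term has the same prefix $a$ on every summand, so it violates condition~(D) (injectivity in prefixes) of Def.~\ref{D:NORMAL_FORM} and is \emph{not} a normal form. The normalization direction needed throughout the paper is the opposite one: equal prefixes under $\Box$ get merged into a single prefix over a disjunction (this is exactly what Case~2 of Lemma~\ref{L:BIG_SQUARE_EC} does). As written, your prefix case terminates in a term outside $NF$.

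The repair is immediate and is what the paper does: no distribution is needed at all. Condition~(N) of Def.~\ref{D:NORMAL_FORM} allows the body under a prefix to be an arbitrary element of $NF_B$, so $a.t_1'$ with $t_1' \in NF_B$ is already a normal form --- a one-element disjunction whose single summand is the one-prefix external choice $\underset{j<1}{\square}\, a.t_1'$, which is trivially injective in prefixes and satisfies (N-$\tau$). You should also cover the subcase $t_1' \equiv \bot$, which your case split omits: there $PR1$ (together with $PR2$ when $\alpha=\tau$) gives $\vdash \alpha.t_1 = \bot \in NF$. With these two corrections the argument goes through and matches the paper's proof.
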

\begin{proof}
  We prove it by induction on the structure of  $t$.

\noindent $\bullet$ $t \equiv 0$ or $t \equiv \bot$.

    Trivially.

\noindent $\bullet$ $t \equiv \alpha.t_1$.

        By IH and CONTEXT, we get $\vdash t = \alpha.t_1'$ for some $t_1' \in NF$.
        If $t_1'\not\equiv \bot$ and $\alpha\in Act$, then $\alpha.t_1'\in NF_B$.
        If $t_1' \equiv \bot$, by  $PR1$, $PR2$ and TRANS, we obtain $\vdash t = \bot$.
        If $\alpha = \tau$, by $PR2$ and TRANS, we have  $\vdash t = t_1'$.

\noindent $\bullet$ $t \equiv t_1 \odot t_2$ with $\odot \in \{\vee,\Box,\wedge,\parallel_A\}$.

        For $i=1,2$, by IH, we have $\vdash t_i = t_i'$ for some $t_i'\in NF$. We distinguish four cases based on $\odot$.\\

\noindent Case 1. $\odot = \vee$.

        If  $t_1' \not\equiv \bot$  and  $t_2' \not\equiv \bot$ (i.e., $t_1',t_2'\in NF_B$), then it immediately follows from $DI1$, $DI2$, CONTEXT and TRANS that $\vdash t = s$ for some $s \in NF_B$.
        Otherwise, w.l.o.g, assume that $t_1' \equiv \bot$.
        Then, by $DI1$, $DI4$ and TRANS, we get $\vdash  t = t_2'$.\\

\noindent Case 2. $\odot = \Box$.

        If either $t_1' \equiv \bot$ or $t_2' \equiv \bot$,  then it follows from $EC1$ and $EC5$ that $\vdash  t = \bot$.
        In the following, we consider the case where $t_1' \not\equiv \bot$ and $t_2' \not\equiv \bot$.
        In this situation, we get $t_1',t_2'\in NF_B$.
        So, we may assume that $t_1' \equiv \underset{i< n}{\bigvee}\underset{j<m_i}\square a_{ij}.s_{ij}$ and $t_2' \equiv \underset{i'< n'}{\bigvee}\underset{j'<m_{i'}'}\square b_{i'j'}.r_{i'j'}$ with $\underset{j<m_i}\square a_{ij}.s_{ij},\underset{j'<m_{i'}'}\square b_{i'j'}.r_{i'j'}\in NF_B$ for each $i<n$ and $i'<n'$.
        Thus, by $DI1$, $DI2$, CONTEXT, TRANS, $DS1$ and Lemma~\ref{L:DIS_INEQUATION}(2),  we obtain
        \[\vdash t_1 \Box t_2 =
        \underset{i < n, i' < n'}{\bigvee}(\underset{j<m_i}\square a_{ij}.s_{ij} \Box \underset{j'<m_{i'}'}\square b_{i'j'}.r_{i'j'}).\]
        Further, by CONTEXT, Lemma~\ref{L:BIG_SQUARE_EC} and Def.~\ref{D:NORMAL_FORM}, it immediately follows that $\vdash t_1 \Box t_2 = t_3$ for some $t_3\in NF_B$.\\

\noindent Case 3. $\odot = \wedge$.

         If $t_i'\in NF_B$ for $i=1,2$ then, by Lemma~\ref{L:COMP_CONJ}, we have $\vdash t = t_3$ for some $t_3 \in NF$, otherwise, by $CO1$ and $CO3$, we get $\vdash t = \bot$.\\

\noindent Case 4. $\odot = \parallel_A$.

        If either $t_1' \equiv \bot$ or $t_2' \equiv \bot$ then, by $PA1$ and $PA2$, we get $\vdash t = \bot$.
        Otherwise,  we have $t_1',t_2' \in NF_B$, so, by Lemma~\ref{L:COMP_PARALLEL}, we obtain $\vdash  t= s$ for some $s \in NF_B$.
\end{proof}

\begin{rmk}\label{R:NORMAL_FORM}
  Clearly, $\underset{i<n}{\square}a_i.t_i=_{RS}p \vee q$ with $a_i \in Act$ implies $\underset{i<n}{\square}a_i.t_i=_{RS}p$ or $\underset{i<n}{\square}a_i.t_i=_{RS}q$, and $\bot =_{RS}p \vee q$ implies $\bot=_{RS}p$ and $\bot =_{RS}q$ for any $p,q$.
  Thus $\bot$ and processes with form $\underset{i<n}{\square}a_i.t_i$ are $\vee$-irreducible in the distributive lattice $<T(\Sigma_{\text{CLL}}),\vee,\wedge>$.
  Therefore, by the well-known result so-called Unique Decomposition Theorem in Lattice Theory (see, e.g. \cite{Birkhoff}), the normal form representation of any $t\in T(\Sigma_{\text{CLL}})$ is unique in an obvious sense.
\end{rmk}
We now turn our attention to the ground-completeness of $AX_{\text{CLL}}$.
First, we state a trivial result about general disjunction.

\begin{lemma}\label{L:SUBSTRUCTURE}
  Let $n>0$ and $t_i$ be stable for each $i< n$.

\noindent (1) If $\underset{i< n}{\bigvee}t_i \notin F $ then $\underset{i< n}{\bigvee}t_i \stackrel{\epsilon}{\Rightarrow}_F| t_{i}$ for each $i < n$.

\noindent (2) If $\underset{i< n}{\bigvee}t_i \stackrel{\epsilon}{\Rightarrow} | t'$ then $t'\equiv t_{i_0}$ for some $i_0 < n$.
\end{lemma}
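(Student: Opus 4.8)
The plan is to read the transition behaviour of the general disjunction straight off rules $Ra_9$ and $Ra_{10}$, which are the only operational rules producing a transition of a $\vee$-term, and then run a routine induction on $n$. Writing $D_k \equiv \underset{i<k}{\bigvee} t_i$ for $1 \le k \le n$, the definition gives $D_{k+1} \equiv D_k \vee t_k$, so by $Ra_9$ and $Ra_{10}$ the only transitions of $D_{k+1}$ are $D_{k+1} \stackrel{\tau}{\rightarrow} D_k$ and $D_{k+1} \stackrel{\tau}{\rightarrow} t_k$, while $D_1 \equiv t_0$ is stable by hypothesis. Hence the $\tau$-reachability graph of $D_n$ is a ``spine'' $D_n \stackrel{\tau}{\rightarrow} D_{n-1} \stackrel{\tau}{\rightarrow} \cdots \stackrel{\tau}{\rightarrow} D_1 \equiv t_0$ with one branch $D_{k+1} \stackrel{\tau}{\rightarrow} t_k$ hanging off each spine node; the stable states among these are exactly the leaves $t_0,\dots,t_{n-1}$, since every $D_k$ with $k\ge 2$ carries a $\tau$-move and is therefore unstable.

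For (2) I would induct on $n$. The base case $n=1$ is immediate, as $D_1 \equiv t_0$ is stable, so the only state reached by $\stackrel{\epsilon}{\Rightarrow}$ is $t_0$ itself. For the step, suppose $D_{k+1} \stackrel{\epsilon}{\Rightarrow}| t'$; since $D_{k+1}$ is unstable, the witnessing $\tau$-sequence is nonempty and its first step is either $D_{k+1} \stackrel{\tau}{\rightarrow} t_k$, whence $t_k$ is stable and $t' \equiv t_k$, or $D_{k+1} \stackrel{\tau}{\rightarrow} D_k$, whence $D_k \stackrel{\epsilon}{\Rightarrow}| t'$ and the induction hypothesis gives $t' \equiv t_{i_0}$ for some $i_0 < k$. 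Either way $t' \equiv t_{i_0}$ for some $i_0 < n$, as required.

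For (1) I would again induct on $n$, now carrying the inconsistency predicate along the spine by means of Lemma~\ref{L:F_NORMAL}(1). The base case is trivial: if $D_1 \equiv t_0 \notin F$ then the empty (hence $F$-respecting) path witnesses $D_1 \stackrel{\epsilon}{\Rightarrow}_F| t_0$. For the step, to reach the new leaf one uses the single move $D_{k+1} \stackrel{\tau}{\rightarrow} t_k$, and to reach the earlier leaves one prepends the move $D_{k+1} \stackrel{\tau}{\rightarrow} D_k$ to the $F$-respecting paths supplied by the induction hypothesis applied to $D_k$. The one delicate point — and the only place where the argument is more than pure bookkeeping — is verifying that these moves are genuinely $F$-respecting, i.e.\ that the relevant spine nodes $D_{i+1},\dots,D_n$ and the target leaf $t_i$ all stay out of $F$. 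Here I would invoke the iterated form of Lemma~\ref{L:F_NORMAL}(1), namely $D_k \in F$ iff $t_0,\dots,t_{k-1} \in F$, which shows that every node on the chosen path avoids $F$ precisely when the disjuncts themselves are consistent.

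So the anticipated obstacle is entirely this predicate bookkeeping in (1): the hypothesis $D_n \notin F$ alone only guarantees that \emph{some} disjunct is consistent, whereas an $F$-respecting path to \emph{each} $t_i$ needs every spine node to be consistent. This is exactly the situation in which the lemma is used, since there the disjuncts come from a normal form and so lie in $NF_B \subseteq T(\Sigma_B)$, which is disjoint from $F$ by Lemma~\ref{L:BPT}; once each $t_i \notin F$ is in force, the iterated Lemma~\ref{L:F_NORMAL}(1) keeps all spine nodes outside $F$ and the induction above goes through without friction.
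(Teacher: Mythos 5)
Your induction is exactly the argument the paper has in mind --- its entire proof of this lemma is the single line ``Straightforward by induction on $n$'' --- and your treatment of part (2) is correct: $Ra_9$ and $Ra_{10}$ are indeed the only rules generating transitions of a disjunction, so the $\tau$-reachability graph of $\underset{i<n}{\bigvee}t_i$ is the spine-with-leaves picture you describe, and the case split on the first $\tau$-step closes the induction.

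The point you flag in part (1) is not merely a delicate step of bookkeeping; it is a genuine counterexample to the lemma as literally stated. Take $n=2$, $t_0\equiv\bot$, $t_1\equiv 0$: both are stable, and by Lemma~\ref{L:F_NORMAL}(1) and (5) we have $\bot\vee 0\notin F$, yet $\bot\vee 0\stackrel{\epsilon}{\Rightarrow}_F|\,\bot$ fails because the target $\bot$ lies in $F$. So the hypothesis $\underset{i<n}{\bigvee}t_i\notin F$ must be strengthened to ``$t_i\notin F$ for each $i<n$'' (equivalently, by the iterated form of Lemma~\ref{L:F_NORMAL}(1), every partial disjunction $\underset{i<k}{\bigvee}t_i$ on your spine is consistent), exactly as you propose; under that hypothesis your induction for (1) is sound. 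The repair is harmless for the paper: both invocations of part (1) --- in Lemma~\ref{L:COMPLETENESS} and in the Ground-Completeness theorem --- apply it to disjuncts drawn from $NF_B\subseteq T(\Sigma_B)$, and Lemma~\ref{L:BPT} then gives $t_i\notin F$ for every $i$, so the strengthened hypothesis is available wherever the lemma is used. In short: your proof is right, and you have correctly located a flaw in the statement rather than in your own argument.
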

\begin{proof}
  Straightforward by induction on $n$.
\end{proof}

A crucial step in proving the ground-completeness is to verify the completeness of $AX_{\text{CLL}}$ w.r.t $NF$. Next we do this.

\begin{lemma}\label{L:COMPLETENESS}
  If $t_1,t_2 \in NF$ and $t_1 \underset{\thicksim}{\sqsubset}_{RS} t_2$ then $\vdash  t_1 \leqslant t_2$
\end{lemma}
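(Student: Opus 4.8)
The plan is to proceed by induction on $|t_1|+|t_2|$, after disposing of the degenerate cases where $t_1$ or $t_2$ equals $\bot$. Since $t_1\underset{\thicksim}{\sqsubset}_{RS}t_2$ forces, via (RS1), that both $t_1$ and $t_2$ are stable, and since the only stable members of $NF$ are $\bot$ together with the single-disjunct terms $\underset{i<n}{\square}a_i.p_i$ (any normal form $\underset{i<n}{\bigvee}t_i$ with $n>1$ performs a $\tau$-move by $Ra_9,Ra_{10}$), the analysis splits into three situations. If $t_1\equiv\bot$, I would derive $\vdash\bot\leqslant t_2$ directly from $DI1$, $DI4$ and $DI5$, which give $\vdash\bot\leqslant\bot\vee t_2=t_2$. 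If $t_2\equiv\bot$ but $t_1\not\equiv\bot$, then (RS2) would force $t_1\in F$, contradicting Lemma~\ref{L:BPT}; so this case is vacuous. This leaves the principal case in which $t_1\equiv\underset{i<n}{\square}a_i.p_i$ and $t_2\equiv\underset{k<l}{\square}c_k.r_k$ are both single external choices with $p_i,r_k\in NF_B$.

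In the principal case I would first invoke (RS4): since $t_1\notin F$ we have $\mathcal{I}(t_1)=\mathcal{I}(t_2)$, i.e. $\{a_i\mid i<n\}=\{c_k\mid k<l\}$. Because both terms are injective in prefixes (Condition (D)), this yields $n=l$ and a prefix-matching bijection, so that, reordering by $EC1,EC2$, we may assume $t_1\equiv\underset{i<n}{\square}a_i.p_i$ and $t_2\equiv\underset{i<n}{\square}a_i.r_i$. Next, for each $i<n$ I would extract the continuation refinement $p_i\sqsubseteq_{RS}r_i$ from (RS3): given any stable consistent derivative $p_i\stackrel{\epsilon}{\Rightarrow}_F|p'$ we have $t_1\stackrel{a_i}{\Rightarrow}_F|p'$, so (RS3) supplies $q'$ with $t_2\stackrel{a_i}{\Rightarrow}_F|q'$ and $p'\underset{\thicksim}{\sqsubset}_{RS}q'$; stability of $t_2$ and injectivity of its prefixes (Lemma~\ref{L:BIG_SQUARE}) force this witness to factor through $r_i$, i.e. $r_i\stackrel{\epsilon}{\Rightarrow}_F|q'$. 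By Definition~\ref{D:RS} this is exactly $p_i\sqsubseteq_{RS}r_i$.

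The remaining task is to turn each semantic fact $p_i\sqsubseteq_{RS}r_i$ (for the \emph{non-stable} preorder on possibly disjunctive $NF_B$ terms) into a proof $\vdash p_i\leqslant r_i$, and here the induction hypothesis must be fed through the disjunctive structure. Writing $p_i\equiv\underset{s<k}{\bigvee}p_{i,s}$ and $r_i\equiv\underset{s'<k'}{\bigvee}r_{i,s'}$ as joins of their stable disjuncts, Lemma~\ref{L:SUBSTRUCTURE} identifies the stable consistent $\epsilon$-derivatives of $p_i$ and $r_i$ with precisely these disjuncts; hence $p_i\sqsubseteq_{RS}r_i$ delivers, for every $s$, some $s'$ with $p_{i,s}\underset{\thicksim}{\sqsubset}_{RS}r_{i,s'}$. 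Each $p_{i,s},r_{i,s'}$ lies in $NF$ and is strictly smaller than $t_1,t_2$ respectively, so the induction hypothesis gives $\vdash p_{i,s}\leqslant r_{i,s'}$. Chaining with $\vdash r_{i,s'}\leqslant r_i$ (from $DI1,DI5$) and then combining over $s$ by the derived join-is-least-upper-bound rule ($\vdash u\leqslant w$ and $\vdash v\leqslant w$ imply $\vdash u\vee v\leqslant w$ via CONTEXT and $DI3$) yields $\vdash p_i\leqslant r_i$. Finally CONTEXT on the prefix and on $\Box$ lifts these to $\vdash t_1\leqslant t_2$.

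The main obstacle I anticipate is precisely this last step of bridging the two flavours of the preorder. The hypothesis concerns $\underset{\thicksim}{\sqsubset}_{RS}$, yet the natural recursion descends to $\sqsubseteq_{RS}$ on continuations that need not be stable, and only after decomposing those continuations into their stable disjuncts does the statement re-enter the form to which the induction hypothesis applies. Making this descent rigorous — verifying that the stable derivatives are exactly the disjuncts via Lemma~\ref{L:SUBSTRUCTURE}, checking that the size measure genuinely decreases, and confirming that the recombination is carried out entirely within the lattice laws $DI1$–$DI5$ and CONTEXT — is where the care is needed; the prefix-matching and $\bot$ cases are routine by comparison.
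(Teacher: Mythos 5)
Your proposal is correct and follows essentially the same route as the paper's proof: case analysis on whether $t_1$ is $\bot$ or a single external choice, prefix matching via (RS4) and injectivity in prefixes, descent to the stable disjuncts of the continuations via (RS3) and Lemma~\ref{L:SUBSTRUCTURE}, application of the induction hypothesis there, and recombination through $DI1$--$DI5$, $EC1$, $EC2$ and CONTEXT. The only differences are cosmetic — you induct on $|t_1|+|t_2|$ where the paper uses $|t_1|$, and you isolate $p_i\sqsubseteq_{RS}r_i$ as an intermediate statement where the paper reasons directly on derivatives of $t_1$ and $t_2$.
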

\begin{proof}
    We prove the statement by induction on  $|t_1|$.
    Since $t_1 \underset{\thicksim}{\sqsubset}_{RS} t_2$, both $t_1$ and $t_2$ are stable. Further, since $t_1,t_2\in NF$, we get, for $i=1,2$

    \[t_i \equiv 0\;\text{or}\; t_i \equiv \bot\;\text{or}\; t_i \equiv \underset{j<n_i}{\square}a_{ij}.t_{ij}\in NF_B\;\text{with}\; n_i>0.  \tag{\ref{L:COMPLETENESS}.1}\]

    Therefore, the argument splits into three cases below.\\

\noindent Case 1. $t_1 \equiv \bot$.

        Then, by $DI1$, $DI4$, $DI5$ and TRANS, we have $\vdash  t_1 \leqslant t_2$.\\

\noindent Case 2. $t_1 \equiv 0$.

       Clearly, $t_1 \notin F $ and ${\mathcal I}(t_1)=\emptyset$. Further we get $t_2 \notin F $ and ${\mathcal I}(t_1)={\mathcal I}(t_2)$ by $t_1 \underset{\thicksim}{\sqsubset}_{RS} t_2$. Thus, by (\ref{L:COMPLETENESS}.1), we have $t_2 \equiv 0$. Then $\vdash  t_1 \leqslant t_2$ follows from REF.\\

\noindent Case 3. $t_1 \equiv \underset{i< n}{\square}a_i.t_{1i} $ with $n>0$.

        Since $t_1 \in NF_B\subseteq T(\Sigma_B)$, by Lemma~\ref{L:BPT}, we have $t_1 \notin F $.
        Hence, by $t_1 \underset{\thicksim}{\sqsubset}_{RS} t_2$,  we get $t_2 \notin F $ and ${\mathcal I}(t_2)={\mathcal I}(t_1)=\{a_i|i < n\} \not= \emptyset$.
        Further, it follows from (\ref{L:COMPLETENESS}.1) and the condition (D) in Def.~\ref{D:NORMAL_FORM} that
        there exist $t_{2i}\in NF_B$ and $a_i'\in Act$($i<n$) such that
         \[t_2 \equiv \underset{i < n}{\square}a_i'.t_{2i}\in NF_B\;\text{and}\; \{a_i|i<n\}=\{a_i'|i<n\}.\]
       By CONTEXT, it is easy to know that, in order to complete the proof, it is sufficient to show that
       \[\forall i<n \exists i'<n(\vdash a_i.t_{1i}\leqslant a_{i'}'.t_{2i'}).\]
    Let $i_0< n$. We have $a_{i_0}=a_{i_0'}'$ for some $i_0'<n$.
    Since $t_{1i_0},t_{2i_0'}\in NF_B$, by Def.~\ref{D:NORMAL_FORM}, there exist $m,m'>0$, $s_j(j<m)$ and $s_{j'}'(j'<m')$ such that
    \begin{enumerate}
      \item $t_{1i_0} \equiv  \underset{j< m}{\bigvee}s_{j}$ and $t_{2i_0'}\equiv  \underset{j'< m'}{\bigvee} s_{j'}'$,
      \item $s_j$ and $s_{j'}'$ are stable for each $j<m$ and $j'<m'$,
      \item $s_j,s_{j'}' \in NF_B$ for each $j<m$ and $j'<m'$.
    \end{enumerate}
    In the following, we want to show that $\vdash  s_{j}\leqslant t_{2i_0'}$ for each $j< m$.
     Let $j_0< m$.
     Since $NF_B \subseteq T(\Sigma_B)$, by Lemma~\ref{L:BPT} and \ref{L:SUBSTRUCTURE}(1), it immediately follows that $t_{1i_0}\stackrel{\epsilon}{\Rightarrow}_F|s_{j_0}$.
    Thus, $t_1 \stackrel{a_{i_0}}{\rightarrow}_F t_{1i_0} \stackrel{\epsilon}{\Rightarrow}_F| s_{j_0}$.
    Then, it follows from  $t_1 \underset{\thicksim}{\sqsubset}_{RS} t_2$ that
    \[t_2 \stackrel{a_{i_0}}{\Rightarrow}_F|t_2' \;\text{and}\; s_{j_0}\underset{\thicksim}{\sqsubset}_{RS} t_2'\;\text{for some}\; t_2'.\tag{\ref{L:COMPLETENESS}.2}\]
    Further, since $t_2$ is injective in prefixes and $t_2$ is stable, we get $t_2 \stackrel{a_{i_0}}{\rightarrow}_F t_{2i_0'}\stackrel{\epsilon}{\Rightarrow}_F|t_2'$.
    Then, by Lemma~\ref{L:SUBSTRUCTURE}(2), we obtain
    \[t_2'\equiv s_{j_0'}' \;\text{for some}\; j_0'< m'.\tag{\ref{L:COMPLETENESS}.3}\]
    Since $|t_1|>|s_{j_0}|$, by (\ref{L:COMPLETENESS}.2), (\ref{L:COMPLETENESS}.3) and IH, we get $\vdash  s_{j_0}\leqslant s_{j_0'}'$.
    Further, by $DI1$, $DI2$, $DI5$ and TRANS, we have $\vdash  s_{j_0}\leqslant t_{2i_0'}$, as desired.

    So far, we have obtained
    \[\vdash s_{j} \leqslant t_{2i_0'}\;\text{for each}\;j<m.\]
    Then, by $DI1$, $DI2$, $DI3$, CONTEXT and TRANS, we get $\vdash \underset{j<m}{\bigvee}s_{j} \leqslant t_{2i_0'}$, that is, $\vdash t_{1i_0}\leqslant t_{2i_0'}$.
    So, by CONTEXT, it follows that $\vdash  a_{i_0}.t_{1i_0} \leqslant a_{i_0'}'.t_{2i_0'}$.
\end{proof}

We are now ready to prove the main result of this section.

\begin{theorem}[Ground-Completeness]
  For any $t_1,t_2 \in T(\Sigma_{\text{CLL}})$, $t_1 \sqsubseteq_{RS} t_2$ implies $ \vdash t_1 \leqslant t_2$.
\end{theorem}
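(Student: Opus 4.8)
The plan is to combine the two halves already assembled in this section: the Normal Form Theorem (Theorem~\ref{T:NORMALFORM}), which lets us replace arbitrary terms by normal forms, and the completeness with respect to $NF$ (Lemma~\ref{L:COMPLETENESS}), which handles the \emph{stable} simulation $\underset{\thicksim}{\sqsubset}_{RS}$. First I would apply Theorem~\ref{T:NORMALFORM} to obtain $s_1,s_2 \in NF$ with $\vdash t_1 = s_1$ and $\vdash t_2 = s_2$. By Soundness (Theorem~\ref{T:SOUNDNESS}) these yield $t_1 =_{RS} s_1$ and $t_2 =_{RS} s_2$; since $=_{RS}$ is the kernel of the preorder $\sqsubseteq_{RS}$, transitivity gives $s_1 \sqsubseteq_{RS} s_2$. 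As $\vdash t_1 \leqslant t_2$ would then follow from $\vdash s_1 \leqslant s_2$ by TRANS, the whole theorem reduces to proving $\vdash s_1 \leqslant s_2$ for normal forms with $s_1 \sqsubseteq_{RS} s_2$.

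For this reduced claim, if $s_1 \equiv \bot$ then $\vdash s_1 \leqslant s_2$ is immediate (exactly as in Case~1 of Lemma~\ref{L:COMPLETENESS}), so I would assume $s_1 \in NF_B$, say $s_1 \equiv \underset{i<n}{\bigvee} u_i$ with $n>0$ and each $u_i$ a stable, consistent term of $NF_B$. By Lemma~\ref{L:BPT} we have $s_1 \notin F$, so Lemma~\ref{L:SUBSTRUCTURE}(1) gives $s_1 \stackrel{\epsilon}{\Rightarrow}_F| u_i$ for every $i<n$. Unfolding the definition of $\sqsubseteq_{RS}$ (Def.~\ref{D:RS}), each such derivative must be matched: for every $i<n$ there is $v_i$ with $s_2 \stackrel{\epsilon}{\Rightarrow}_F| v_i$ and $u_i \underset{\thicksim}{\sqsubset}_{RS} v_i$. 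The mere existence of a consistent stable derivative of $s_2$ forces $s_2 \not\equiv \bot$ (no $v$ satisfies $\bot \stackrel{\epsilon}{\Rightarrow}_F| v$), hence $s_2 \equiv \underset{i'<n'}{\bigvee} w_{i'} \in NF_B$; and by Lemma~\ref{L:SUBSTRUCTURE}(2) each $v_i$ coincides with one of the disjuncts, $v_i \equiv w_{i'}$ for some $i'<n'$. Thus $u_i, w_{i'} \in NF_B \subseteq NF$.

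Now I would invoke Lemma~\ref{L:COMPLETENESS} on each matched pair: from $u_i \underset{\thicksim}{\sqsubset}_{RS} w_{i'}$ we get $\vdash u_i \leqslant w_{i'}$. Since $\vdash w_{i'} \leqslant s_2$ by $DI1$, $DI2$, $DI5$ and TRANS, transitivity gives $\vdash u_i \leqslant s_2$ for each $i<n$. Combining these over $i$ by $DI1$, $DI2$, $DI3$, CONTEXT and TRANS yields $\vdash \underset{i<n}{\bigvee} u_i \leqslant s_2$, i.e. $\vdash s_1 \leqslant s_2$, which completes the reduced claim and hence the theorem.

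The technically delicate step is the passage from the unstable preorder $\sqsubseteq_{RS}$ to the stable one $\underset{\thicksim}{\sqsubset}_{RS}$ to which Lemma~\ref{L:COMPLETENESS} applies. This is precisely where the shape of normal forms pays off: the outer disjunction $\underset{i<n}{\bigvee}(\cdot)$ exposes exactly the stable consistent $\epsilon$-derivatives (Lemma~\ref{L:SUBSTRUCTURE}), so the quantifier alternation in the definition of $\sqsubseteq_{RS}$ collapses into a disjunct-by-disjunct correspondence between $s_1$ and $s_2$. The remaining care is in the boundary cases $s_1 \equiv \bot$ and $s_2 \equiv \bot$, and in verifying that each $v_i$ is genuinely one of the $w_{i'}$ rather than some spurious derivative; both points are settled cleanly by Lemma~\ref{L:SUBSTRUCTURE} together with Lemma~\ref{L:BPT}.
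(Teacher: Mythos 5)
Your proposal is correct and follows essentially the same route as the paper's own proof: normalize both terms, transfer $\sqsubseteq_{RS}$ to the normal forms via soundness, dispatch the $\bot$ case, use Lemma~\ref{L:BPT} and Lemma~\ref{L:SUBSTRUCTURE} to match disjuncts and collapse $\sqsubseteq_{RS}$ into $\underset{\thicksim}{\sqsubset}_{RS}$, apply Lemma~\ref{L:COMPLETENESS} pairwise, and recombine with the $DI$ axioms. No gaps.
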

\begin{proof}
  Assume that $t_1 \sqsubseteq_{RS} t_2$.
  By Theorem~\ref{T:NORMALFORM}, $\vdash  t_1 = t_1^*$ and $\vdash  t_2 = t_2^*$ for some $t_1^*,t_2^* \in NF$.
  It suffices to prove that $\vdash  t_1^* \leqslant t_2^*$.
  By Theorem~\ref{T:SOUNDNESS}, we have $t_1 =_{RS} t_1^*$ and $t_2 =_{RS} t_2^*$.
  So $t_1^* \sqsubseteq_{RS} t_2^*$.

   If $t_1^* \equiv \bot$ then it follows from $DI1$, $DI4$, $DI5$ and TRANS that $\vdash  t_1^* \leqslant t_2^*$.
   Next, we consider the case $t_1^* \not\equiv \bot$. Then, $t_1^* \in NF_B$.
   We may assume $t_1^* \equiv \underset{i< n}{\bigvee}t_{1i}$ with $n>0$ and for each $i<n$, $t_{1i} \equiv \underset{j<m_i}\square a_{ij}.r_{ij}\in NF_B$ with $m_i \geq 0$.
    In order to complete the proof, it is sufficient to show that
    \[\vdash t_{1i}\leqslant t_2^*\;\text{for each}\;i< n.\]
    Let $i_0 < n$.
    Since $NF_B \subseteq T(\Sigma_B)$, by Lemma~\ref{L:BPT} and \ref{L:SUBSTRUCTURE}(1), we have $t_1^* \stackrel{\epsilon}{\Rightarrow}_F|t_{1i_0}$.
    Then, it follows from $t_1^* \sqsubseteq_{RS} t_2^*$ that $t_2^* \stackrel{\epsilon}{\Rightarrow}_F| t_2'$ and $t_{1i_0}\underset{\thicksim}{\sqsubset}_{RS} t_2'$ for some $t_2'$.
    So, $t_2^* \notin F $, that is, $t_2^* \not\equiv \bot$.
    Thus, $t_2^* \in NF_B$ and we may assume that  $t_2^* \equiv \underset{i< k}{\bigvee}t_{2i}$ with $k>0$ and for each $i<k$, $t_{2i} \equiv \underset{j<m_i'}\square b_{ij}.s_{ij}\in NF_B$ for some $m_i' \geq 0$.
    Thus, $t_{2i}$ is stable for each $i<k$.
    Then, by Lemma~\ref{L:SUBSTRUCTURE}(2), it follows from $t_2^* \stackrel{\epsilon}{\Rightarrow}_F| t_2'$ that $t_2' \equiv t_{2i_0'}$ for some $i_0'< k$.
    Further, by Lemma~\ref{L:COMPLETENESS}, $\vdash t_{1i_0}\leqslant t_{2i_0'}$ follows from $t_{1i_0}\underset{\thicksim}{\sqsubset}_{RS} t_2' \equiv t_{2i_0'}$.
    Finally, by $DI1$, $DI2$, $DI5$ and TRANS, we obtain $\vdash t_{1i_0}\leqslant t_2^*$, as desired.
\end{proof}

\section{Conclusions and Discussion}

  This paper has  provided a ground-complete proof system for weak ready simulation presented by L{\"u}ttgen and Vogler for the finite fragment of the calculus $\text{CLL}_R$.
 In addition to standard axioms, since enriching process languages with logical operators conjunction and disjunction, such proof system contains a number of axioms to capture the interaction between usual process operators and logical operators.

    Compared with usual notions of behaviour preorders \cite{Glabbeek01}, a specific point of L\"{u}ttgen and Vogler's ready simulation is that it involves consideration of inconsistencies.
    The predicate $F$ plays a central role in this notion.
    Due to such particular characteristic, side-conditions are attached to some axioms in $AX_{\text{CLL}}$ (including $DS4$, $ECC3$ and $EXP2$) so that processes can be treated differently according to their consistency.
    The guideline in designing of $AX_{\text{CLL}}$ is that we need to find enough axioms to reduce (in)consistent processes to basic processes ($\bot$, resp.).
    Such trick seems to be also useful in considering proof system for more general cases involving recursions.
    However, it is far from trivial to carry out this trick in the presence of recursions.
    In the following, we would like to discuss this sketchily.

In the framework of LLTS, since divergence is viewed as catastrophic, any process, which cannot evolve into a stable state in finitely many steps, is specified to be inconsistent.
This intuition is captured formally by the condition (LTS2) in Def.~\ref{D:LLTS}.
Obviously, it is recursion that may bring divergence.
Thus we must put attention to such additional origin of inconsistency in the presence of recursions.

In order to carry out the trick mentioned above,
we need to isolate a particular subclass of terms syntactically, which plays a role analogous to that played by $T(\Sigma_B)$ (see Def.~\ref{D:BPT}) in this paper.
In our mind, a rational choice for such subclass is $ET(\Sigma_B)$ mentioned in Remark~\ref{R:EX_BPT}, which extends $T(\Sigma_B)$ by admitting strongly guarded processes $\langle X|E \rangle$ (without involving conjunction and $\bot$) into BNF grammar of $T(\Sigma_B)$, and satisfies $ET(\Sigma_B) \cap F = \emptyset$ (its proof is given in the Appendix).

To confirm that the choice above is right, we must ensure that $ET(\Sigma_B)$ is sufficiently expressive to ``represent'' all consistent processes.
That is, we need to provide a group of axioms so that, for any process $t$, if $t$ is (in)consistent then it can be reduced to one in $ET(\Sigma_B)$ ($\bot$ resp.) by applying these axioms.
At present, it seems to be difficult to find these axioms.
For instance, since there exist weakly guarded recursions that is consistent (e.g., $\langle X|X= (X \Box a.0) \vee b.0 \rangle$), we need enough axioms to transfer them into $ET(\Sigma_B)$.
In particular, a few axioms are needed to transfer (consistent) weakly guarded recursions into strongly guarded ones (notice that all recursive processes in $ET(\Sigma_B)$ are strongly guarded).
In \cite{Milner89b}, Milner has solved analogous problem for observational congruence in the calculus CCS through referring the following axioms \footnote{In \cite{Milner89b}, Milner uses the operator $+$ and the notation $\mu X t$ instead of external choice $\Box$ and $\langle X|X=t \rangle$ resp. Moreover Baeten and Bravetti point out that Axioms (M2) and (M3) can be equivalently expressed by a single axiom \cite{Baeten08}.}.
\[\langle X | X = X \Box t \rangle =\langle X| X= t\rangle \tag{M1}\]
  \[\langle X|X=\tau.X \Box t \rangle = \langle X|X= \tau.t\rangle\tag{M2}\]
  \[\langle X| X=\tau.(X \Box t) \Box s \rangle = \langle X|X=\tau.X \Box t \Box s\rangle\tag{M3}\]
  Unfortunately, none of these axioms works well in our situation.
  First, since  unguarded recursions are incompatible with negative rules \cite{Bloom94}, the calculus $\text{CLL}_R$ restricts itself to guarded ones \cite{Zhang14}.
  Hence Axiom (M1) is outside our terms of reference.
  Second, Axiom (M2) is not valid w.r.t $=_{RS}$.
  For instance, consider $t\equiv a.X$, then we get $\langle X|X=\tau.X \Box a.X \rangle \in F$ and $\langle X|X=\tau.a.X \rangle \notin F$.
 Finally, due to $\tau$-purity,  both $\langle X|X=\tau.(X\Box t)\Box s\rangle$ and $\langle X|X=\tau.X \Box t \Box s \rangle$ are inconsistent for any $t,s$.
 Therefore, Axiom (M3) may be useful for transferring inconsistent processes into $\bot$ because the scope of the prefix $\tau.()$ in left-hand side of (M3) is larger than one in right-hand side, but it no longer has any effect on transferring consistent weakly guarded $\langle X|E \rangle$ into strongly guarded one.

 Summarily, we need to find appropriate axioms from scratch to cope with inconsistency caused by recursions.

\appendix
\section{Appendix}

We mentioned in Section~5 that $ET(\Sigma_B) \cap F = \emptyset$.
This Appendix is devoted to proving this claim.
We first define $ET(\Sigma_B)$ formally.

 \begin{mydefn}[Extended Basic Term]
   The extended basic terms are defined by BNF: $t::=0\mid (\alpha.t) \mid t\Box t \mid t \vee t \mid t \parallel_A t \mid X \mid \langle X|X=t\rangle$, where $\alpha \in Act_{\tau}$, $X \in V_{AR}$, $A \subseteq Act$ and in $\langle X|X=t\rangle$, $X$ is strongly guarded in $t$.
   We denote $ET(\Sigma_B)$ as the set of all extended basic terms.
 \end{mydefn}

As usual, we use $t_{\widetilde{X}}$ to denote a term $t$ whose free variables form a subset of $\{X_1,\dots,X_n\}$ where $\widetilde{X}=(X_1,\dots,X_n)$ is a $n$-tuple distinct variables.
$t_{\widetilde{X}}$ is stable if $t_{\widetilde{X}}\{\widetilde{\tau.0}/\widetilde{X}\}\not\stackrel{\tau}{\rightarrow}$.

\begin{lemma}\label{L:Stable}
  If $t_{\widetilde{X}}$ is stable then $t_{\widetilde{X}}\{\widetilde{p}/\widetilde{X}\} \not\stackrel{\tau}{\rightarrow}$ for any $\widetilde{p}$.
\end{lemma}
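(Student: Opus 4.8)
The plan is to argue by induction on the structure of $t_{\widetilde{X}}$, treating a recursion term $\langle Y|Y=s\rangle$ as larger than its body $s$ so that the induction hypothesis applies to $s$ (with the enlarged variable tuple $(\widetilde{X},Y)$). The driving observation is that, among the operators of $ET(\Sigma_B)$, a $\tau$-transition is generated only by a $\tau$-prefix or by $\vee$ (rules $Ra_9$, $Ra_{10}$), and is merely propagated upward by $\Box$, $\parallel_A$ (rules $Ra_4,Ra_5,Ra_{11},Ra_{12}$) and by unfolding (rule $Ra_{16}$); none of these $\tau$-rules has a negative premise, and a visible prefix $a.r$ with $a\in Act$ is stable no matter what $r$ is. Since $\tau.0\stackrel{\tau}{\rightarrow}0$, substituting $\widetilde{\tau.0}$ precisely detects whether some free variable sits in a $\tau$-enabling (i.e.\ unguarded) position.

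For the induction, the cases $t_{\widetilde{X}}\equiv X_i$, $t_{\widetilde{X}}\equiv \tau.s$ and $t_{\widetilde{X}}\equiv s_1\vee s_2$ are immediate: each of $\tau.0$, $\tau.(s\{\widetilde{\tau.0}/\widetilde{X}\})$ and $s_1\vee s_2$ is unstable, so the hypothesis that $t_{\widetilde{X}}$ is stable fails and the claim holds vacuously. If $t_{\widetilde{X}}\equiv a.s$ with $a\in Act$, then $t_{\widetilde{X}}\{\widetilde{p}/\widetilde{X}\}$ can only perform $a$ and is stable outright. If $t_{\widetilde{X}}\equiv s_1\odot s_2$ with $\odot\in\{\Box,\parallel_A\}$, then by the rules above $t_{\widetilde{X}}\{\widetilde{p}/\widetilde{X}\}\stackrel{\tau}{\rightarrow}$ holds iff $s_1\{\widetilde{p}/\widetilde{X}\}\stackrel{\tau}{\rightarrow}$ or $s_2\{\widetilde{p}/\widetilde{X}\}\stackrel{\tau}{\rightarrow}$; stability of $t_{\widetilde{X}}\{\widetilde{\tau.0}/\widetilde{X}\}$ forces both $s_i\{\widetilde{\tau.0}/\widetilde{X}\}$ to be stable, and the induction hypothesis then yields stability of both $s_i\{\widetilde{p}/\widetilde{X}\}$, hence of $t_{\widetilde{X}}\{\widetilde{p}/\widetilde{X}\}$.

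The remaining, and genuinely delicate, case is $t_{\widetilde{X}}\equiv\langle Y|Y=s\rangle$, where $Y$ is strongly guarded in $s$. Here $t_{\widetilde{X}}\{\widetilde{p}/\widetilde{X}\}\stackrel{\tau}{\rightarrow}$ holds iff $s\{\widetilde{p}/\widetilde{X},\,\langle Y|Y=s\{\widetilde{p}/\widetilde{X}\}\rangle/Y\}\stackrel{\tau}{\rightarrow}$, so the value fed to $Y$ is itself the recursion term and depends on $\widetilde{p}$. To break this self-reference I would first establish an auxiliary \emph{guardedness-independence} claim: if every variable of a tuple $\widetilde{W}$ is strongly guarded in a term $u\in ET(\Sigma_B)$, then for a fixed assignment to the remaining free variables the truth of $u\{\cdots\}\stackrel{\tau}{\rightarrow}$ does not depend on the processes substituted for $\widetilde{W}$. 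This claim is proved by a separate induction on the size of $u$: the base cases $0$ and free variables are trivial, the subcase $u\equiv W_i$ being excluded since a top-level $W_i$ is not strongly guarded; in the $\tau$-prefix, $a.()$ and $\vee$ cases the $\tau$-moves of $u\{\cdots\}$ are independent of the substitution outright; the $\Box$ and $\parallel_A$ cases follow from the induction hypothesis; and in the nested recursion case $u\equiv\langle Z|Z=u'\rangle$ one applies the induction hypothesis to $u'$ with the enlarged guarded tuple $\widetilde{W}\cup\{Z\}$, which absorbs the $\widetilde{W}$-dependence hidden inside the substituted copy of $\langle Z|\ldots\rangle$.

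Granting the auxiliary claim, the recursion case closes as follows. Applying it with $\widetilde{W}=\{Y\}$ lets me replace the self-referential value $\langle Y|Y=s\{\widetilde{\tau.0}/\widetilde{X}\}\rangle$ by $\tau.0$, so stability of $t_{\widetilde{X}}\{\widetilde{\tau.0}/\widetilde{X}\}$ is equivalent to stability of $s\{(\widetilde{\tau.0},\tau.0)/(\widetilde{X},Y)\}$; that is, $s$ is stable for the variable tuple $(\widetilde{X},Y)$. The induction hypothesis applied to $s$ then gives $s\{(\widetilde{p},r)/(\widetilde{X},Y)\}\not\stackrel{\tau}{\rightarrow}$ for all $\widetilde{p}$ and all processes $r$; taking $r\equiv\langle Y|Y=s\{\widetilde{p}/\widetilde{X}\}\rangle$ yields exactly $t_{\widetilde{X}}\{\widetilde{p}/\widetilde{X}\}\not\stackrel{\tau}{\rightarrow}$. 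I expect the main obstacle to be the bookkeeping in the auxiliary claim, specifically ensuring that the nested-recursion subcase is handled by carrying \emph{all} strongly guarded variables simultaneously; a single-variable version of the claim would fail there, because the process substituted for the inner recursion variable $Z$ itself depends on the values assigned to the outer guarded variables.
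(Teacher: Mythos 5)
Your proof is correct, but it takes a genuinely different route from the paper. The paper argues by contraposition with a single induction on the depth of the inference of $Strip(\text{CLL}_R,M_{\text{CLL}_R}) \vdash t_{\widetilde{X}}\{\widetilde{p}/\widetilde{X}\} \stackrel{\tau}{\rightarrow} r$ (and leaves the case analysis to the reader), whereas you use structural induction on $t_{\widetilde{X}}$ together with an auxiliary lemma stating that the initial $\tau$-capability of a term is insensitive to what is substituted for its strongly guarded variables. The structural induction cannot work without some such device, because unfolding $\langle Y|Y=s\rangle$ produces $s\{\langle Y|Y=s\{\widetilde{p}/\widetilde{X}\}\rangle/Y,\widetilde{p}/\widetilde{X}\}$, where the value fed to $Y$ itself depends on $\widetilde{p}$; you have correctly identified this as the crux and your guardedness-independence claim resolves it, including the essential point that the nested-recursion subcase only closes if \emph{all} strongly guarded variables are carried simultaneously. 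What the paper's derivation-depth induction buys is that the unfolded body automatically has a strictly smaller derivation, so no separate auxiliary lemma is needed and no structural measure has to decrease; what your approach buys is that the delicate recursion case is made fully explicit rather than hidden behind ``easy to carry out'', and in fact even the depth induction needs some strengthening of the statement to handle the mismatch between the two substitutions for $Y$, so your auxiliary claim is doing real work that the paper's sketch glosses over. One caveat: you only treat the operators of $ET(\Sigma_B)$ (single-equation, strongly guarded recursion, no $\wedge$), while the lemma as stated applies to arbitrary $\text{CLL}_R$ terms; this is harmless for the appendix, where the lemma is invoked only for $ET(\Sigma_B)$, and the missing $\wedge$ case is identical to your $\Box$ case via rules $Ra_7$ and $Ra_8$, but for full generality you would also have to handle multi-equation specifications $\langle Z|E\rangle$ and weakly guarded occurrences (the latter force instability of the unfolding and hence are vacuous).
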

\begin{proof}
  Assume $t_{\widetilde{X}}\{\widetilde{p}/\widetilde{X}\} \stackrel{\tau}{\rightarrow} r$ for some $r$.
  It suffices to prove $t_{\widetilde{X}}\{\widetilde{\tau.0}/\widetilde{X}\} \stackrel{\tau}{\rightarrow}$.
  It proceeds by induction on the depth of the inference of $Strip(\text{CLL}_R,M_{\text{CLL}_R}) \vdash t_{\widetilde{X}}\{\widetilde{p}/\widetilde{X}\} \stackrel{\tau}{\rightarrow} r$.
The induction is easy to carry out by distinguishing several cases based on the last rule applied in the inference.
We leave the proof to the reader.
\end{proof}


 \begin{lemma}\label{L:EBPB}
   If $t_{\widetilde{X}}$ 
   is a term in  $ET(\Sigma_B)$ such that $X$ is strongly guarded in $t_{\widetilde{X}}$  for each $X \in \widetilde{X}$, then there exists $t_{\widetilde{X}}' \in ET(\Sigma_B)$ such that, for any $\widetilde{q}$,
   $t_{\widetilde{X}}\{\widetilde{q}/\widetilde{X}\} \stackrel{\epsilon}{\Rightarrow}|t_{\widetilde{X}}'\{\widetilde{q}/\widetilde{X}\}$.
 \end{lemma}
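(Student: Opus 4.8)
The plan is to argue by induction on the structure of $t_{\widetilde{X}}$, exploiting the fact that strong guardedness keeps the substituted arguments $\widetilde{q}$ ``frozen'' behind visible prefixes: none of the SOS rules that produce $\tau$-transitions ($Ra_1$ with $\alpha=\tau$, $Ra_4$, $Ra_5$, $Ra_9$--$Ra_{12}$ and $Ra_{16}$) inspects a subterm of the form $a.s$ with $a\in Act$, so every $\tau$-step on the way to a stable state is governed by the shape of $t_{\widetilde{X}}$ and not by $\widetilde{q}$. Consequently the witnessing stable derivative $t_{\widetilde{X}}'$ can be read off uniformly from $t_{\widetilde{X}}$.

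The base and non-recursive cases are routine. If $t_{\widetilde{X}}\equiv 0$ or $t_{\widetilde{X}}\equiv a.s$ with $a\in Act$, then $t_{\widetilde{X}}\{\widetilde{q}/\widetilde{X}\}$ is already stable, so I take $t_{\widetilde{X}}'\equiv t_{\widetilde{X}}$; the case $t_{\widetilde{X}}\equiv X$ cannot arise, for then $X$ would be unguarded, contradicting the hypothesis. Since none of $\tau.()$, $\vee$, $\Box$, $\parallel_A$ is a strong guard, each variable of $\widetilde{X}$ stays strongly guarded in the relevant immediate subterms, so the induction hypothesis applies to them. For $t_{\widetilde{X}}\equiv\tau.s$ and $t_{\widetilde{X}}\equiv s_1\vee s_2$ I apply the hypothesis to $s$ (resp. the chosen branch $s_1$) and prepend the single initial $\tau$-step. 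For $t_{\widetilde{X}}\equiv s_1\odot s_2$ with $\odot\in\{\Box,\parallel_A\}$ I combine the stable derivatives $s_1',s_2'$ supplied by the hypothesis into $t_{\widetilde{X}}'\equiv s_1'\odot s_2'$, concatenating the two $\tau$-paths componentwise (the $F$-free counterpart of Lemma~\ref{L:TAU_I}(2), via $Ra_4,Ra_5$ resp. $Ra_{11},Ra_{12}$) and noting that $s_1'\odot s_2'$ is stable.

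The delicate case is $t_{\widetilde{X}}\equiv\langle X|X=t_X\rangle$. By definition of $ET(\Sigma_B)$, $X$ is strongly guarded in $t_X$, and the free variables $\widetilde{X}$ are strongly guarded in $t_X$ because they are so in $t_{\widetilde{X}}$; thus the hypothesis applies to the strict subterm $t_X$ with the extended tuple $(\widetilde{X},X)$, yielding $t_X'\in ET(\Sigma_B)$ with $t_X\{\widetilde{r}/\widetilde{X},r/X\}\stackrel{\epsilon}{\Rightarrow}|t_X'\{\widetilde{r}/\widetilde{X},r/X\}$ for all $\widetilde{r},r$. Writing $P\equiv t_{\widetilde{X}}$ and $\sigma=\{\widetilde{q}/\widetilde{X}\}$, rule $Ra_{16}$ shows that $P\sigma$ and its one-step unfolding $U\equiv t_X\sigma\{P\sigma/X\}$ have exactly the same transitions, and a substitution-composition computation identifies $U$ with the instance $t_X\{\widetilde{q}/\widetilde{X},P\sigma/X\}$; instantiating the hypothesis at $r=P\sigma$ then gives $U\stackrel{\epsilon}{\Rightarrow}|(t_X'\{P/X\})\sigma$. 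To transfer this to $P\sigma$ I split on whether this path is empty, which is decided by $t_X$ alone, independently of $\sigma$: if $t_X$ is stable then every instance of $U$ is stable by Lemma~\ref{L:Stable}; if $t_X$ is not stable then, by strong guardedness of $(\widetilde{X},X)$, its initial $\tau$-step never inspects a substituted argument and hence survives every instantiation, so every instance of $U$ is non-stable. In the non-stable case the path is non-empty, its first step is also a step of $P\sigma$, and I set $t_{\widetilde{X}}'\equiv t_X'\{P/X\}$, obtaining $P\sigma\stackrel{\epsilon}{\Rightarrow}|t_{\widetilde{X}}'\sigma$; in the stable case $P\sigma$ is itself stable and I set $t_{\widetilde{X}}'\equiv P$. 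Either choice lies in $ET(\Sigma_B)$ (closure under substitution of extended basic terms) and has free variables within $\widetilde{X}$.

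I expect the recursion case to be the principal obstacle. The substitution bookkeeping needed to recognise the unfolding $U$ as an instance to which the induction hypothesis applies is fiddly but mechanical; the genuinely subtle point is ensuring that the dichotomy between the two sub-cases --- equivalently, the presence or absence of an initial $\tau$-step --- is uniform in $\widetilde{q}$, so that one and the same term $t_{\widetilde{X}}'$ witnesses the claim for every substitution. This is precisely where strong guardedness, together with Lemma~\ref{L:Stable}, is indispensable: without it, substitution could create or destroy an initial $\tau$-move, and no single $t_{\widetilde{X}}'$ could serve all $\widetilde{q}$ at once.
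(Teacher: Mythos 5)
Your proposal is correct and follows essentially the same route as the paper: structural induction, with the recursion case resolved by applying the induction hypothesis to the body over the extended variable tuple, instantiating the substituted variable at the recursion's own unfolding, and transferring the resulting path back via rule $Ra_{16}$. The only (presentational) difference is that the paper dispatches the stable case once at the outset via Lemma~\ref{L:Stable}, whereas you handle the stable/non-stable dichotomy locally inside the recursion case; both resolve the same empty-path subtlety.
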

 \begin{proof}
   If $t_{\widetilde{X}}$ is stable then the conclusion holds trivially by Lemma~\ref{L:Stable}.
   In the following, we devote ourselves to considering non-trivial case where $t_{\widetilde{X}}$ is not stable.
   It proceeds by induction on the structure of $t_{\widetilde{X}}$.
   Here we consider only non-trivial case
%
%
%
%
$t_{\widetilde{X}} \equiv \langle Y |Y =t \rangle$. In this situation, $Y \notin \widetilde{X}$ and $t$ is in $ET(\Sigma_B)$ whose free variables are in $\{Y\} \cup \widetilde{X}$.
   Moreover, for each $Z \in \{Y\} \cup \widetilde{X}$, $Z$ is strongly guarded in $t$.
   Hence, by IH, there exists 
   $t' \in ET(\Sigma_B)$ such that $t\{\widetilde{p}/\widetilde{X},q/Y\} \stackrel{\epsilon}{\Rightarrow}|t'\{\widetilde{p}/\widetilde{X},q/Y\}$ for any $\widetilde{p},q$.
   In particular, we get
   \[t\{\widetilde{p}/\widetilde{X},\langle Y |Y=t \rangle \{\widetilde{p}/\widetilde{X}\}/Y\} \stackrel{\epsilon}{\Rightarrow}|t'\{\widetilde{p}/\widetilde{X},\langle Y |Y=t \rangle \{\widetilde{p}/\widetilde{X}\}/Y\}\;\text{for any}\;\widetilde{p}.\]
   Further, by Rule $Ra_{16}$, it follows from $t\{\widetilde{p}/\widetilde{X},\langle Y |Y=t \rangle \{\widetilde{p}/\widetilde{X}\}/Y\} \equiv t\{\langle Y|Y=t\rangle /Y\}\{\widetilde{p}/\widetilde{X}\}$ that
   \[\langle Y|Y=t \rangle \{\widetilde{p}/\widetilde{X}\} \stackrel{\epsilon}{\Rightarrow}|t'\{\langle Y |Y=t \rangle /Y\}\{\widetilde{p}/\widetilde{X}\}\;\text{for any}\;\widetilde{p}.\]
   Set $t_{\widetilde{X}}''\triangleq t'\{\langle Y|Y=t \rangle /Y\}$.
   Then it is easy to see that $t_{\widetilde{X}}'' \in ET(\Sigma_B)$ due to $t',\langle Y|Y=t \rangle \in ET(\Sigma_B)$.
   Hence $t_{\widetilde{X}}''$ is the one that we desire.
 \end{proof}

As an immediate consequence of the lemma above, we have
\begin{corollary}\label{C:EBPT}
  For any process (i.e., terms with no free variables) $p \in ET(\Sigma_B)$, there exists $q\in ET(\Sigma_B)$ such that $p \stackrel{\epsilon}{\Rightarrow}|q$.
\end{corollary}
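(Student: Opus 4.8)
The plan is to read the Corollary as the special case of Lemma~\ref{L:EBPB} obtained by taking the tuple $\widetilde{X}$ to be empty. Since $p$ is a process, i.e.\ a closed term, it has no free variables, so we may regard it as $t_{\widetilde{X}}$ with $\widetilde{X}$ the empty tuple. For this choice the hypothesis of Lemma~\ref{L:EBPB}, namely that each $X \in \widetilde{X}$ is strongly guarded in $t_{\widetilde{X}}$, is vacuously satisfied, so the lemma applies with no side condition left to verify.

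First I would invoke Lemma~\ref{L:EBPB} to obtain a term $t_{\widetilde{X}}' \in ET(\Sigma_B)$ with $t_{\widetilde{X}}\{\widetilde{q}/\widetilde{X}\} \stackrel{\epsilon}{\Rightarrow}|t_{\widetilde{X}}'\{\widetilde{q}/\widetilde{X}\}$ for every $\widetilde{q}$. Because $\widetilde{X}$ is empty, the substitution $\{\widetilde{q}/\widetilde{X}\}$ is the identity on both sides, so this reads simply $p \stackrel{\epsilon}{\Rightarrow}|t_{\widetilde{X}}'$. Setting $q \triangleq t_{\widetilde{X}}'$ then yields the claim; note that $q$ is automatically closed, being $t_{\widetilde{X}}'$ with an empty tuple of admissible free variables, and it lies in $ET(\Sigma_B)$ by the conclusion of the lemma.

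There is essentially no obstacle here, since all the genuine work is already carried out in Lemma~\ref{L:EBPB} (in particular the recursive case $t_{\widetilde{X}} \equiv \langle Y\mid Y=t\rangle$, where strong guardedness of $Y$ is what lets one replay the inductively obtained derivation through Rule $Ra_{16}$). The only points worth recording in the present argument are that the empty-tuple instantiation is legitimate and that it forces $q$ to be a genuine process rather than an open term; both are immediate from the shape of the lemma's statement.
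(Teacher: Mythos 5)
Your proposal is correct and matches the paper's intent exactly: the paper states the corollary as an immediate consequence of Lemma~\ref{L:EBPB}, and instantiating that lemma with the empty tuple of variables (so the guardedness hypothesis is vacuous and the substitution is the identity) is precisely that derivation.
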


\begin{proposition}
  $ET(\Sigma_B) \cap F =\emptyset$.
\end{proposition}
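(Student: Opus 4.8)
The plan is to argue by contradiction, reducing the claim to an induction on proof trees in the stripped system. Suppose some closed $p\in ET(\Sigma_B)$ satisfies $p\in F$, i.e. $Strip(\text{CLL}_R,M_{\text{CLL}_R})\vdash pF$, and fix such a derivation of minimal depth; equivalently, I would run a well-founded induction on the depth of derivations of literals of the form $pF$ with $p\in ET(\Sigma_B)$. Since $ET(\Sigma_B)$ contains neither $\bot$ nor the conjunction operator, a closed $ET(\Sigma_B)$-term must have one of the shapes $0$, $\alpha.p_1$, $p_1\Box p_2$, $p_1\vee p_2$, $p_1\parallel_A p_2$, or $\langle X|X=t\rangle$. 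Consequently the last predicate rule used to conclude $pF$ can only be one of $Rp_2$--$Rp_7$, $Rp_{14}$, $Rp_{15}$; the rules $Rp_1$ and $Rp_8$--$Rp_{13}$ all mention $\bot$ or $\wedge$ and therefore cannot apply.

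For the non-recursive operators the argument is immediate, since each applicable rule simply propagates inconsistency from an immediate subterm. Concretely, $Rp_2$ has premise $p_1F$ (the prefix body); $Rp_4,Rp_5,Rp_6,Rp_7$ have a premise $p_iF$ for one conjunct of $\Box$ or $\parallel_A$; and $Rp_3$ has premises $p_1F$ and $p_2F$ for the disjuncts. In every instance the subterm in question again lies in $ET(\Sigma_B)$ (closure under taking subterms is immediate from the BNF), and its inconsistency is witnessed by a strictly shorter subderivation. The induction hypothesis then gives $p_i\notin F$, contradicting the premise. The case $p\equiv 0$ is vacuous, as no rule concludes $0F$ (or directly by Lemma~\ref{L:F_NORMAL}(5)).

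The substantive case is $p\equiv\langle X|X=t\rangle$. If the last rule is $Rp_{14}$, its premise is $\langle t|E\rangle F$; because $ET(\Sigma_B)$ is closed under substitution, the unfolding $\langle t|E\rangle\equiv t\{\langle X|X=t\rangle/X\}$ is again an $ET(\Sigma_B)$-process, so the induction hypothesis yields $\langle t|E\rangle\notin F$, a contradiction. If instead the last rule is $Rp_{15}$, its premise set is $\{\,yF:\langle X|E\rangle\stackrel{\epsilon}{\Rightarrow}|y\,\}$. Here I would invoke Corollary~\ref{C:EBPT}: there exists $q\in ET(\Sigma_B)$ with $\langle X|E\rangle\stackrel{\epsilon}{\Rightarrow}|q$, so $qF$ is among the premises of this $Rp_{15}$-instance and hence has a strictly shorter subderivation. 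The induction hypothesis gives $q\notin F$, contradicting that $qF$ is a premise. This closes all cases and establishes $ET(\Sigma_B)\cap F=\emptyset$.

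The main obstacle is exactly this last case. For the finite operators, inconsistency can only arise by propagation from syntactic subterms, so the induction is routine; recursion, however, admits the divergence rule $Rp_{15}$, whose premise quantifies over all stable $\epsilon$-descendants rather than over subterms, and these descendants need not be syntactically smaller than $\langle X|E\rangle$. This is precisely where Corollary~\ref{C:EBPT} (and Lemma~\ref{L:EBPB} behind it) carries the weight: it guarantees that every $ET(\Sigma_B)$-process can reach at least one stable state that is again in $ET(\Sigma_B)$, which is what breaks the universally quantified premise of $Rp_{15}$. The one point to verify carefully is that the induction is genuinely well-founded; this is legitimate because the proof trees of $Strip(\text{CLL}_R,M_{\text{CLL}_R})$ are well-founded by construction, so measuring by derivation depth is sound even though the descendants produced by Corollary~\ref{C:EBPT} are not syntactic subterms of $p$.
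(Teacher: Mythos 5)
Your proof is correct and follows essentially the same route as the paper: a well-founded induction on derivation depth (the paper phrases it as every proof tree of $pF$ with $p\in ET(\Sigma_B)$ containing a proper subtree with root $rF$ for some $r\in ET(\Sigma_B)$), with the same case analysis on the last rule and the same appeal to Corollary~\ref{C:EBPT} to discharge the $Rp_{15}$ case.
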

\begin{proof}
  Since $F$ is a set of processes, it suffices to show that each process in $ET(\Sigma_B)$ is consistent.
  Let $\Omega$ be the set of all processes in $ET(\Sigma_B)$.
  Due to the well-foundedness of proof trees, in order to complete the proof, it is sufficient to show that, for any $p \in \Omega$, if  $\mathcal T$ is a proof tree of $Strip({\text{CLL}_R,M_{\text{CLL}_R}})\vdash pF$ then $\mathcal T$ has a proper subtree with root $rF$ for some $r \in \Omega$.
  We shall prove this as follows.

  Let $p \in \Omega$ and $\mathcal T$ be a proof tree of $pF$.
  It is a routine case analysis based on the last rule applied in $\mathcal T$. 
  We distinguish different cases based on the form of $p$.
  Clearly,  $p \not\equiv 0$ due to $0 \notin F$.
  For $p \equiv \alpha.p_1$ or $p_1 \odot p_2$ with $\odot \in \{\vee, \Box, \parallel_A\}$, it is obvious that $p_1,p_2 \in ET(\Sigma_B)$.
  Moreover, by SOS rules of $\text{CLL}_R$, it is easy to see that $\mathcal T$ has a proper subtree with root $p_iF$ for some $i \in \{1,2\}$.
  Next we handle the case $p \equiv \langle Y|Y=t_Y \rangle$.
  Then the last rule applied in $\mathcal T$ is either $\frac{\langle t_Y|Y=t_Y\rangle F}{\langle Y|Y=t_Y\rangle F}$ or $\frac{\{rF:\langle Y|Y=t_Y\rangle  \stackrel{\epsilon}{\Rightarrow}|r\}}{\langle Y|Y=t_Y\rangle F}$.
  For the former, it is obvious that $\langle t_Y|Y=t_Y\rangle \in \Omega$ due to $\langle t_Y |Y =t_Y \rangle \equiv t_Y \{\langle Y|Y=t_Y \rangle /Y\}$ (see subsection~2.2) and $t_Y,\langle Y|Y=t_Y \rangle \in ET(\Sigma_B)$.
  For the latter, by Corollary~\ref{C:EBPT}, $\langle Y|Y=t_Y\rangle  \stackrel{\epsilon}{\Rightarrow}|r'$ for some $r' \in \Omega$, as desired.
\end{proof}

\end{document}